\documentclass[%
 reprint, 
 amsmath,amssymb,
 aps, 
 physrev,
]{revtex4-2}

\usepackage{graphicx}
\usepackage{dcolumn}
\usepackage{placeins}
\usepackage{bm}
\usepackage{tabularx}
\usepackage{enumitem}

\usepackage{hyperref}       
\usepackage{url}            
\usepackage{booktabs}       
\usepackage{amsfonts}       
\usepackage{nicefrac}       
\usepackage{microtype}      
\usepackage{xcolor}         
\usepackage{tikz-cd}
\usepackage{amsmath}
\usepackage{mathrsfs}
\usepackage{comment}
\usepackage{graphicx}
\usepackage{subcaption}
\usepackage{amssymb}
\usepackage{mathtools}
\usepackage{amsthm}

\theoremstyle{plain}
\newtheorem{theorem}{Thm.}[section]
\newtheorem{proposition}[theorem]{Proposition}
\newtheorem{lemma}[theorem]{Lemma}
\newtheorem{corollary}[theorem]{Corollary}
\theoremstyle{definition}
\newtheorem{definition}[theorem]{Definition}

\theoremstyle{remark}
\newtheorem{remark}[theorem]{Remark}
\newtheorem{example}[theorem]{Example}

\newcommand{\norm}[1]{\left|\left|#1\right|\right|}
\newcommand{\stheta}{S^{\theta}}
\newcommand{\rhot}{\tilde{\rho}}

\begin{document}

\preprint{APS/123-QED}

\title{\textbf{The Score Hamiltonian: Mapping Diffusion Models to Adiabatic Transport
}
}

\author{Peter Halmos}
 \affiliation{Computer Science Department, Princeton University.}

\author{Boris Hanin}
 \email{Contact author: bhanin@princeton.edu}
\affiliation{
 ORFE Department, Princeton University.
}

\date{\today}

\begin{abstract}
We exhibit an \textit{exact} correspondence between sampling with score-based diffusion models and adiabatic transport of ground states for a family of Schr\"odinger operators we call Score Hamiltonians, built from the learned score's quantum potential. We obtain novel density reconstruction bounds and principled annealing schedules via adiabatic theorems for Fokker-Planck equations with time-varying potentials. We find the fundamental limit of sampling is set by the ratio of squared score-matching error to Score Hamiltonian spectral gap---the inverse Poincar{\'e} constant of the data density.
\end{abstract}

\maketitle

\section{Introduction}

Modern machine learning has revolutionized our ability to generate samples from complex, high-dimensional distributions. 
For generative tasks related to sampling from continuous distributions such as those involving images \cite{ho2020denoising, rombach2021highresolution},  small molecules \cite{pmlr-v162-hoogeboom22a, pmlr-v202-xu23n} and proteins \cite{Watson2023}, a particularly important approach has been to use  score-based diffusion models \cite{hyvarinen05a, denoising_sm, pmlr-v37-sohl-dickstein15, song2021scorebased, ho2020denoising}. Based on principles from non-equilibrium thermodynamics \cite{pmlr-v37-sohl-dickstein15,song2021scorebased}, such models couple samples from a target distribution $\rho_{\mathrm{data}}$ to a heat bath, connecting $\rho_T=\rho_{\mathrm{data}}$ to a simple reference distribution $\rho_{0}$ (e.g. a Gaussian) through a family of intermediate distributions $\rho_{t}$. This thermalization is used to learn the score $S_{t} = \nabla \log \rho_{t}$, which gives a principled way of drawing samples from the reference and deforming them into samples of $\rho_{\mathrm{data}}$ through a Langevin diffusion without ever needing to compute the partition function.

While score-based diffusion models are widely used in practice, a conceptual and theoretical understanding of why they work is far from complete. Most prior theory work uses tools from stochastic analysis to bound how score-estimation error along the trajectory degrades the quality of Langevin sampling \cite{chen2023sampling, lee2022convergence, benton2024nearly, yang2022convergence}. In this article, we provide a fundamentally new perspective: an exact mapping from inference with a trained score-based diffusion model to the adiabatic transport of ground states \cite{Born1928,Jansen2007} for a canonically-defined family of Schr{\"o}dinger operators built from the learned score via its Bohm quantum potential \cite{Madelung1927,Bohm1952}. We call these operators \emph{Score Hamiltonians} (see \eqref{eq:score-hamiltonian}); they have several remarkable properties:

\begin{enumerate}[itemsep=0pt, parsep=0pt, topsep=0pt]
    \item The Score Hamiltonian of a conservative score $S=\nabla \log \rho$ has ground state $\sqrt{\rho}$, and for non-conservative $S$ has well-defined induced ground-state $\sqrt{\tilde{\rho}}$. This frames generative density reconstruction as an eigenvalue problem.
    \item Sampling with a score-based diffusion model is adiabatic transport of ground states of the associated Score Hamiltonians. See \eqref{eq:ground-state-transform-dynamics} and Thm. \ref{thrm:adiabatic_FK} for the resulting quantitative adiabatic theorem and Thm. \ref{thrm:ad_diffusion} (also Figs \ref{fig:Adiabatic} and \ref{fig:AdiabaticAlloc}) for the application to diffusion models with imperfectly learned scores.
    \item The spectral gap $\Delta$ of the Score Hamiltonian--the inverse Poincar{\'e} constant \cite{Helffer2005} of $\rho_{t}$--controls the difficulty of sampling: we prove non-asymptotic upper and matching lower bounds (Thms.~\ref{thrm:ad_diffusion},Prop.~\ref{prop:LB_tight}) giving a terminal error floor $\epsilon_{\rm score}/\sqrt{\Delta}$.
\end{enumerate}

\section{Main Results}

At inference time score-based diffusion models use their learned parametric approximation $S_t^\theta $ of the true score to evolve an
initial sample $x_0 \sim \rho_0$ into an approximate sample from $\rho_{\mathrm{data}}$. Assuming for the moment that $S_t^\theta= \nabla \log \rho^\theta_t$ is conservative, this evolution can be written as a stochastic differential equation
\begin{equation}\label{eq:learned-langevin}
    d x_\tau
    =
    \nabla \log \rho^\theta_{t_\tau}(x_\tau)\,d\tau
    +
    \sqrt{2}\,dW_\tau.
\end{equation}
Here $\tau$ is the \textit{algorithmic time} and $t_\tau=t(\tau)$ is a monotone annealing schedule, governing how quickly the sampler moves from the reference law at $t_0=0$ toward the data law. We will later drop the assumption that $S_t^\theta$ is conservative and generalize in  Appendix~\ref{sec:diffusion_VP} to other common diffusion-model parameterizations.

The law
$\mu_\tau^\theta$ of $x_\tau$ follows the time-inhomogeneous Fokker-Planck equation \cite{Risken1996}
\begin{equation}
    \partial_\tau \mu_\tau^\theta
    =
    \nabla^2 \mu_\tau^\theta 
    -
    \nabla \cdot
    \left(
        \mu_\tau^\theta \nabla \log \rho^\theta_{t_\tau}
    \right).
    \label{eq:learned-fokker-planck}
\end{equation}
For fixed $t$ the distribution $\rho_t^\theta$ is the unique stationary distribution for \eqref{eq:learned-fokker-planck}. Sampling with score-based diffusion models is thus a relaxation towards a moving equilibrium. This raises two fundamental questions:
\begin{enumerate}[itemsep=0pt, parsep=0pt, topsep=0pt]
    \item How closely does $\mu_{\tau}$ \eqref{eq:learned-fokker-planck} track the model's own instantaneous equilibrium \(\rho_{t_{\tau}}^\theta \)?
    \item If \(\nabla\log \rho^{\theta}_t\) is an \emph{estimator} of the true score $\nabla \log \rho_t$, how close are the laws $\mu_\tau^\theta$ and $\rho_{t_{\tau}}$ as a function of the $L^2$ error at each $t$ of estimating the true score.
\end{enumerate}

\begin{figure}[tbp]
    \centering
    \includegraphics[width=\linewidth]{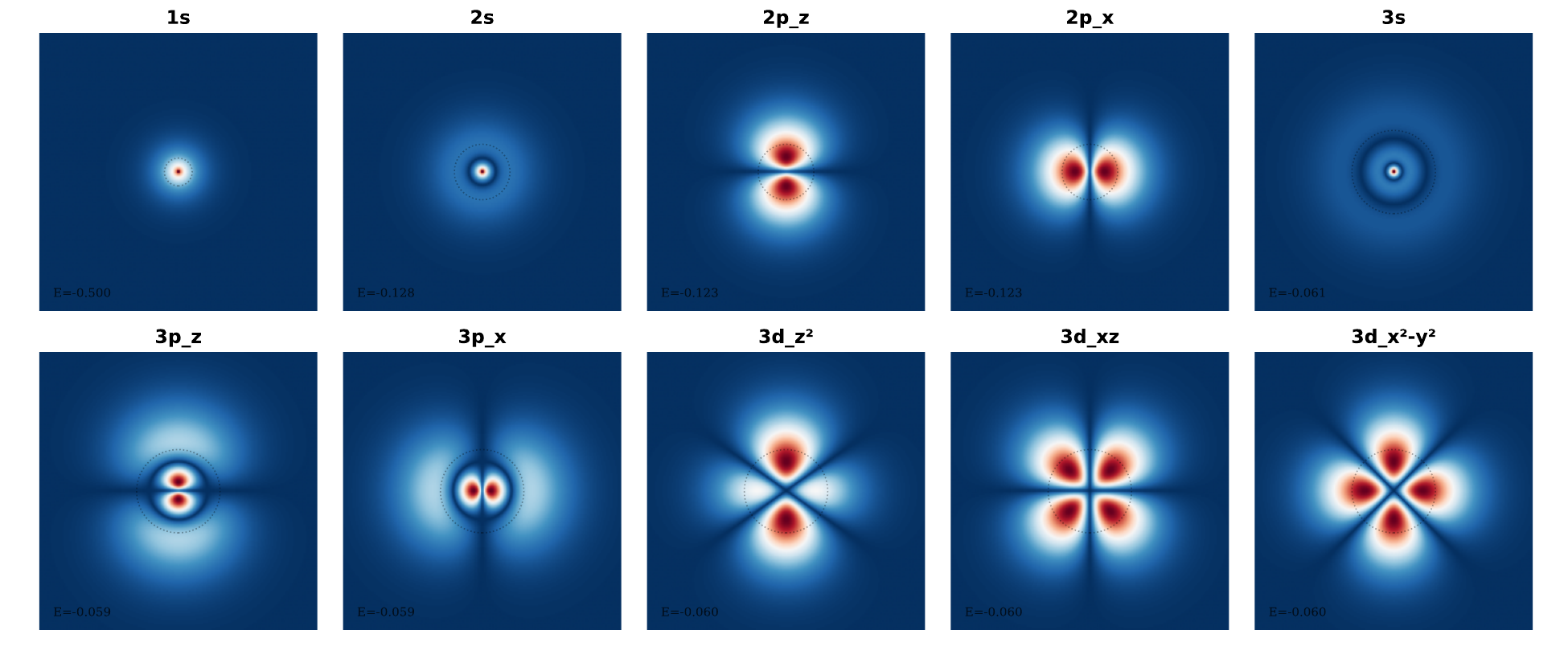}
    \caption{Orbitals $|\psi_{nl}(x)|^2$ of the Hydrogen Atom inferred from the spectrum of a Score Hamiltonian $\widehat{H}_{\theta}$ of a diffusion-model trained on $1\mathrm{s}$ ground-state samples.}\label{fig:HydrogenOrbitals}
\end{figure}


To address these questions we map an amplitude $R \in L^{2}$ to a
Schr\"odinger operator 
\begin{equation}
\widehat H
=-\nabla^2 + Q,\qquad Q:=\nabla^{2}R/ R,
\end{equation}
where $Q$ is the negative quantum potential \cite{Bohm1952, Madelung1927} of $R$. In the case $R=\sqrt{\rho}$ for nodeless $\rho>0$, $\widehat{H}$ is the $0$-form Witten Laplacian \cite{Witten1982, Helffer2005}. Using $Q$ as the potential inverts the usual construction: the Witten Laplacian is built from a potential $\phi$ with density derived $\rho \propto e^{-2\phi},$ whereas $Q$ treats amplitude $R$ (or score $S$) as the primitive. When $\rho>0$, a standard identity \cite{Nelson1966, NELSON2020, Fiscaletti2017, SBITNEV2009} expresses $Q$ through the score $S$ alone: $\nabla^{2}\sqrt{\rho}/\sqrt{\rho} =  \tfrac{1}{2}\nabla \cdot S + \tfrac{1}{4}\|  S\|^{2} $. Any learned score $S_t^\theta$ thus yields a canonical family of Schr\"odinger operators, the \emph{Score Hamiltonians}
\begin{equation}\label{eq:score-hamiltonian}
\widehat{H}_t^\theta = -\nabla^{2}+\frac{1}{2} \nabla \cdot \stheta + \frac{1}{4} \|  \stheta \|^{2}.
\end{equation}
These are well-defined and self-adjoint for any smooth, possibly non-conservative $S^\theta,$ reflecting how diffusion models are parametrized in practice \cite{scoreconserv}. For conservative scores 
the ground state identity $
\widehat H \sqrt{\rho} = 0    
$ holds, so the instantaneous model density $\rho_t^\theta$ is recovered by heat flow on $\widehat{H}_t^\theta$. 


Finally, and most importantly for the purposes of the present article, the time $t_\tau$ ground-state transform maps the time-inhomogeneous Fokker-Planck density $\mu_\tau^\theta$ to
\[
    \Psi_\tau^\theta = \mu_\tau^\theta / \sqrt{\smash[b]{\rho_{t_\tau}^{\theta}}}
\]
and the Langevin dynamics \eqref{eq:learned-fokker-planck}, which we first address in the conservative case $\stheta=\nabla\log\rho^{\theta}$, 
becomes
\begin{align}
    \partial_\tau \Psi_\tau^\theta
    &=
    -\widehat H_{t_\tau}^\theta \Psi_\tau^\theta
    -
    \frac{\dot t_\tau}{2}
    \partial_t \log \rho_{t_\tau}^\theta\,\Psi_\tau^\theta .
    \label{eq:ground-state-transform-dynamics}
\end{align}
This formula is remarkably simple. The first term is a heat flow on $\widehat H_{t_\tau}^\theta$, which drives $\Psi_\tau^\theta$ to the ground state. The second term is a gauge correction whose strength is modulated by the annealing speed $\dot t(\tau)$. Together, they turn sampling with a score-based diffusion model into an adiabatic ground-state tracking problem for $\widehat{H}_{t_\tau}^\theta$.

\begin{figure}[tbp]
    \centering
    \includegraphics[width=0.6\linewidth]{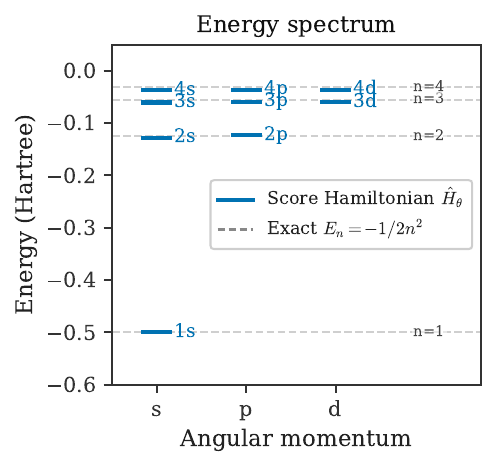}
    \caption{Spectrum of the learned Score Hamiltonian $\widehat{H}_{\theta}$ against the exact energies $E_{n}=-1/2n^{2}$ of Hydrogen.}\label{fig:HydrogenSpectrum}
\end{figure}

\subsection{Adiabatic Theorem for Fokker-Planck with Time-Varying Score} \label{sec:oracle-results} 
In this section we answer question 1 above. That is, we consider the time-inhomogeneous Fokker-Planck equation~\eqref{eq:learned-fokker-planck} driven by a family of conservative scores $S_{t}^\theta = \nabla \log \rho_{t}^\theta$. Since these scores are arbitrary, we omit in this section the dependence on $\theta$ and defer to \S \ref{sec:results-diffusion}  quantifying the effect of a mismatch between a learned score $S_t^\theta$ and a true score $S_t$. 

Our first theorem is an adiabatic bound for the ground-state tracking error of \eqref{eq:learned-fokker-planck}, extending quantitative adiabatic theorems~\cite{Born1928,Jansen2007,Morita2008} to time-inhomogeneous Fokker-Planck equations. Unlike prior adiabatic theorems for general stochastic generators \cite{Avron2012}, our result is non-asymptotic with lower bound (Prop.~\ref{prop:LB_tight}) that matches locally in the rate $|\dot{t}|$. Let $\chi^2(\mu\|\nu)
    =
    \int \left(\mu/\nu-1\right)^2\,\nu dx$ be the $\chi^2$ divergence between densities $\mu,\nu$, and let
\[
    M_{t,-}(x)
    =
    \max\left\{0,-\tfrac{1}{2} \partial_t \log \rho_t(x)\right\}.
\]
\begin{theorem}[Adiabatic theorem for Fokker-Planck]\label{thrm:adiabatic_FK}
Suppose $S_{t_\tau}=\nabla \log \rho_{t_\tau}$ is such that,
for all $t$, $\widehat H_t$ is self-adjoint, the spectral gap $\Delta_t$ of
$\widehat H_t$ is strictly positive,
$\partial_t \log \rho_t \in L^2(\rho_t)$, and there exist $a(t),b(t)\geq 0$ so that
\begin{equation}
    \langle u,M_{t,-}u\rangle_{L^2}
    \leq
    a(t)\langle u,\widehat H_t u\rangle_{L^2}
    +
    b(t)\langle u,u\rangle_{L^2}
    \label{eq:form-bound}
\end{equation}
for all $u$ orthogonal to the ground state $\sqrt{\rho_t}$. Then
\begin{align}\label{eq:adiabatic-bound}
\notag \sqrt{
        \chi^2\!\left(
            \mu_\tau\,\middle\|\,\rho_{t_{\tau}}
        \right)
}&\leq    \sqrt{\chi^2(\mu_0\|\rho_0)}
\exp\!\left(
        -\int_0^\tau \Gamma(s)\,ds
\right)\\
&+
\sup_{s\le \tau}
\frac{
        |\dot t(s)|\,
        \|\partial_t\log\rho_{t(s)}\|_{L^2(\rho_{t(s)})}
}{
        \Gamma(s)
},
    \end{align}
where $\Gamma(\tau)
    :=
    \left(1-|\dot t(\tau)|a(t_\tau)\right)\Delta_{t_\tau}
    -
    |\dot t(\tau)|b(t_\tau)$ is the effective spectral gap, and we take
$\dot t=dt/d\tau$ sufficiently small that $\Gamma(\tau)$ is strictly positive
for all $\tau$.
\end{theorem}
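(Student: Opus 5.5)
We work entirely in the ground-state-transformed picture \eqref{eq:ground-state-transform-dynamics}. Write $\Psi_\tau=\mu_\tau/\sqrt{\rho_{t_\tau}}$ and $\phi_\tau=\sqrt{\rho_{t_\tau}}$, the normalized ground state of $\widehat H_{t_\tau}$ with eigenvalue $0$. The key observation is that $\chi^2(\mu_\tau\|\rho_{t_\tau})=\|\Psi_\tau-\phi_\tau\|_{L^2}^2$, since $\int(\mu/\rho-1)^2\rho=\int(\mu/\sqrt\rho-\sqrt\rho)^2$. Moreover $\langle\Psi_\tau,\phi_\tau\rangle=\int\mu_\tau=1$ is conserved, so $u_\tau:=\Psi_\tau-\phi_\tau$ is exactly the component of $\Psi_\tau$ orthogonal to the ground state. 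The plan is to derive a differential inequality for $\|u_\tau\|$ and close it with Grönwall.

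\textbf{Step 1: evolution of $u_\tau$.} Write $g_t=\tfrac12\partial_t\log\rho_t$. Then $\partial_\tau\phi_\tau=\dot t\,g_{t_\tau}\phi_\tau$, and from \eqref{eq:ground-state-transform-dynamics}, using $\widehat H\phi=0$,
\[
\partial_\tau u=-\widehat H u-\dot t\, g\,\Psi-\dot t\, g\,\phi=-\widehat H u-\dot t\, g\, u-2\dot t\, g\,\phi .
\]
Conservation of mass gives $\int \rho\,\partial_t\log\rho=0$, i.e. $\langle\phi,g\phi\rangle=0$, so the forcing term $g\phi$ is itself orthogonal to $\phi$. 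Its norm is $\|g\phi\|_{L^2}=\tfrac12\|\partial_t\log\rho_t\|_{L^2(\rho_t)}$, which is finite by hypothesis.

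\textbf{Step 2: energy estimate.} We compute
\[
\tfrac12\tfrac{d}{d\tau}\|u\|^2=-\langle u,\widehat H u\rangle-\dot t\langle u,g u\rangle-2\dot t\langle u,g\phi\rangle .
\]
For the middle term, bound $-\dot t\, g\le|\dot t|\,M_{t,-}$ pointwise. When $\dot t>0$ this is immediate. When $\dot t<0$, the positive part of $g$ enters instead, so we would want \eqref{eq:form-bound} interpreted for the relevant sign; I expect the intended setting to be a monotone schedule with the sign chosen so that $M_{t,-}$ is the correct part, and I would state this explicitly. The form bound then gives $\langle u,M_{t,-}u\rangle\le a\langle u,\widehat H u\rangle+b\|u\|^2$. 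Since $u\perp\phi$, the spectral gap gives $\langle u,\widehat H u\rangle\ge\Delta\|u\|^2$. With $1-|\dot t|a>0$ this yields
\[
-(1-|\dot t|a)\langle u,\widehat H u\rangle+|\dot t|b\|u\|^2\le-\Gamma\|u\|^2 .
\]
By Cauchy–Schwarz the forcing term is at most $|\dot t|\,\|\partial_t\log\rho\|_{L^2(\rho)}\|u\|$. Dividing by $\|u\|$ gives
\[
\tfrac{d}{d\tau}\|u\|\le-\Gamma\|u\|+|\dot t|\,\|\partial_t\log\rho_{t_\tau}\|_{L^2(\rho_{t_\tau})}.
\]

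\textbf{Step 3: Grönwall.} Duhamel's formula gives
\[
\|u_\tau\|\le e^{-\int_0^\tau\Gamma}\|u_0\|+\int_0^\tau e^{-\int_s^\tau\Gamma}\,|\dot t|\,\|\partial_t\log\rho\|\,ds .
\]
Pull out $\sup_s |\dot t|\|\partial_t\log\rho\|/\Gamma(s)$ and use $\int_0^\tau\Gamma(s)e^{-\int_s^\tau\Gamma}ds=1-e^{-\int_0^\tau\Gamma}\le1$. This gives exactly \eqref{eq:adiabatic-bound}.

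\textbf{Main obstacle.} The main obstacle is rigor rather than algebra. One must justify the differentiation of $\|u\|^2$: this needs $u_\tau$ in the form domain of $\widehat H_{t_\tau}$ with a time-dependent domain, which I would handle by a regularization or Kato-type argument, or by assuming smooth solutions. One must also handle the point $\|u\|=0$ in the division step, for instance by working with $\sqrt{\|u\|^2+\varepsilon}$ and letting $\varepsilon\to0$. Finally, the sign convention behind $M_{t,-}$ noted in Step 2 needs to be fixed.
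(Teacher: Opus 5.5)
Your proposal is correct and follows the paper's proof step for step. It uses the same decomposition $\Psi_\tau=\sqrt{\rho_{t_\tau}}+\Psi_{\perp,\tau}$ via the conserved overlap $\langle\Psi_\tau,\sqrt{\rho_{t_\tau}}\rangle=1$, and the same energy identity in which the form bound is absorbed into the gap to give $\Gamma$. It applies Cauchy--Schwarz to the forcing terms; the paper keeps $\widehat G_\tau\psi_0$ and $\dot t\,\partial_t\psi_0$ separate, each contributing $\tfrac{|\dot t|}{2}\|\partial_t\log\rho_t\|_{L^2(\rho_t)}$, which together are your $2\dot t\,g\phi$. It then closes with Gr\"onwall and pulls out the supremum.

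Your sign caveat is apt. The paper's step $-\langle\Psi_{\perp,\tau},\widehat G_\tau\Psi_{\perp,\tau}\rangle\le|\dot t|\int M_{t,-}\Psi_{\perp,\tau}^2$ tacitly uses $\dot t\ge 0$. That holds for the paper's monotone schedule running from $t=0$ to $t=T$.
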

Theorem \ref{thrm:adiabatic_FK} is proved in \S \ref{appendix:main-proofs}. In the bound \eqref{eq:adiabatic-bound}, one can also replace the $\sqrt{\chi^2}$ distance by total variation  distance (see \eqref{eq:TV-Hellinger}). Before applying this theorem to score-based diffusion models in \S \ref{sec:results-diffusion}, we comment
briefly on Thm. \ref{thrm:adiabatic_FK} assumptions and constituents of the upper bound \eqref{eq:adiabatic-bound}. Regarding assumptions, self-adjointness of $\widehat{H}_t$ follows from a mild quadratic lower-envelope condition on the Bohm potentials $Q(t)$ at infinity (see~\ref{prop:score_hamiltonian_props}). Next, the spectral gap
$\Delta_\tau$ of the Score Hamiltonian $\widehat H(t_\tau)$ is the
reciprocal of the Poincar\'e constant of $\rho_{t_\tau}$ \cite{Helffer2005}, which measures the
intrinsic difficulty of adiabatically transporting $\rho_t$. Further, differentiating the  relation $\widehat{H} \sqrt{\rho_t} = 0$ and applying the spectral gap shows that for each $t$ $\norm{\partial_t \log \rho_t}_{L^2(\rho_t)}^2\leq 2\mathrm{Var}_{\rho_t}[\partial_t\, Q(t)]/\Delta_t$
and hence the condition that $\partial_t \log \rho_t \in L^2(\rho_t)$ is a generic finite energy requirement on the potential perturbation in the family $\widehat{H}_t$. Finally, the most restrictive condition is \eqref{eq:form-bound}.  As explained in Appendix \S~\ref{cor:lsi_form}, a
log-Sobolev inequality \cite{Bakry2014} for $\rho_t$ gives a clean sufficient condition for
the required form bound \eqref{eq:form-bound}. More generally, Kato-type integrability $M_{t,-} \in L^p(\rho_t)$ for $p>d/2$ ensures condition \eqref{eq:form-bound} \cite{Reed1975}.

Let us now briefly interpret the terms in the upper bound. The first is the memory of the initial condition, which decays exponentially in the average effective spectral gap $\Gamma(s)$ along the adiabatic trajectory. The second term is the finite-rate lag, which can be
made arbitrarily small by choosing a sufficiently slow schedule. In Appendix \S \ref{prop:LB_tight}, we derive lower bounds for $\sqrt{\chi^2(\mu_\tau\|\rho_{t_\tau})}$, which are sharp: there exist hard regimes for which the upper bound matches the lower bound on the adiabatic lag. Section~\S\ref{sec:diffusion_VP} collects the corresponding variance and confinement assumptions under which the form bound \eqref{eq:form-bound} and positive spectral gap hold.

When $S$ is \emph{non-conservative}, $\widehat{H}(t)$ remains self-adjoint with ground state $\sqrt{\tilde{\rho}_{t}}$ and induced conservative score $\nabla \log \tilde{\rho}_{t}$. This $\tilde{\rho}_{t}$ coincides with a non-equilibrium steady-state (NESS) for isoclinic solenoidal fields (\S~\ref{sec:noncons_exten}). Let $A_t =  S_{t}-\nabla \log \tilde{\rho}_{t}$ be this rotational gap to the induced score. Then, $\widehat{H}(t)$ has a (non-zero) ground state energy $\lambda(t) = \tfrac{1}{4}\mathbb{E}_{\tilde{\rho}_{t}} \lVert A_t \rVert^{2}$. We prove in Proposition~\ref{prop:adiabatic-bound_NC} that Theorem~\ref{thrm:adiabatic_FK} holds on $\psi_{0}=\sqrt{\tilde{\rho}_{t}}$ with a term in the supremum \emph{independent of} $|\dot{t}|$ quantifying the rotational kinetic energy, ${(\mathrm{Var}_{\tilde{\rho}_{t_s}}|A_t |^2)^{1/2}}/{2\Gamma_{\mathrm{rot}}(s)}$, with modified spectral gap $\Gamma_\mathrm{rot}(\tau)$ relative to the conservative gap $\Delta_t$. We also show that the difference to the conservative gap is bounded by difference in the rotational kinetic energy over the ground-state and first excited-state (Section~\ref{sec:noncons_exten}).


\subsection{Spectral Adiabatic Bounds for Diffusion Models}\label{sec:results-diffusion} Our primary application of Thm.~\ref{thrm:adiabatic_FK} is to analyze the quality of a trained score-based diffusion model in terms of the two squared score errors
\[
    \epsilon_{\rm score}^2(t)
    =
    \tfrac{1}{2}\|S^\theta_t-S_t\|^2_{L^2(\rho_t)},
    \,
    \epsilon_{\theta,{\rm score}}^2(t)
    =
    \tfrac{1}{2}\|S^\theta_t-S_t\|^2_{L^2(\rho^\theta_t)},
\]
target interpolation density $\rho_t$, Score Hamiltonian $\widehat{H}_t$,
spectral gap $\Delta_t$, and those of the model $\rho^{\theta}_t$, $\widehat{H}_t^{\theta}$, $\Delta_{\theta,t}$. Suppose
$\mu_\tau^\theta$ evolves according to \eqref{eq:learned-fokker-planck} with
score field $S^\theta_t$.


\begin{theorem}[Adiabatic bounds for diffusion models] \label{thrm:ad_diffusion}
Given the assumptions of the adiabatic theorem hold on $S^\theta_{t_\tau}$,
the error between the Fokker--Planck marginal $\mu_{\tau_T}^\theta$ of a diffusion
model and the target $\rho_T$ is bounded above as
\begin{align}
    d_{\rm TV}(\mu_{\tau_T}^\theta,\rho_T)
    \leq\;&
    \frac{1}{2}
    \sqrt{\chi^2(\mu_0\|\rho_0)}
    \exp\left(
        -\int_0^\tau \Gamma(s)\,ds
    \right)                                                   \nonumber \\
    &+
    \sup_{s\leq \tau}
    \frac{
        |\dot t(s)|\,
        \|\partial_t \log \rho_{t(s)}^\theta\|_{L^2(\rho_{t(s)}^\theta)}
    }{
        2\Gamma(s)
    }                                                        \nonumber \\
    &+
    \min\left\{
        \frac{\epsilon_{\theta,{\rm score}}(T)}{\sqrt{\Delta_T}},
        \frac{\epsilon_{\rm score}(T)}{\sqrt{\Delta_{\theta,T}}}
    \right\}.
    \label{eq:diffusion-model-bound}
\end{align}
\end{theorem}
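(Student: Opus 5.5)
Note first that the term $\epsilon_{\rm score}(T)/\sqrt{\Delta_{\theta,T}}$ inside the minimum cannot come from a plain chi-square/Poincar\'e comparison. The usual bound $\chi^2(\rho^\theta_T\|\rho_T)\le \epsilon^2/\Delta_T$ would need the variance of a ratio under one measure, and differences of logs only give KL-type control. So the plan is to bound everything in Hellinger or KL and then pass to total variation.

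We split with the triangle inequality at the model's instantaneous equilibrium $\rho^\theta_T=\rho^\theta_{t_{\tau_T}}$:
\[
d_{\rm TV}(\mu^\theta_{\tau_T},\rho_T)\le d_{\rm TV}(\mu^\theta_{\tau_T},\rho^\theta_T)+d_{\rm TV}(\rho^\theta_T,\rho_T).
\]
For the first term, we apply Thm.~\ref{thrm:adiabatic_FK} to the conservative family $S^\theta_t=\nabla\log\rho^\theta_t$; its hypotheses are assumed. Then we use $d_{\rm TV}(\mu,\nu)=\tfrac12\int|\mu/\nu-1|\,\nu\le\tfrac12\sqrt{\chi^2(\mu\|\nu)}$ by Cauchy--Schwarz. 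This gives the first two lines of \eqref{eq:diffusion-model-bound}, with the factor $\tfrac12$ on both the memory term and the lag term.

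The second, static term carries the score mismatch. We aim for
\[
d_{\rm TV}(\rho^\theta_T,\rho_T)\le\min\bigl\{\epsilon_{\theta,\rm score}(T)/\sqrt{\Delta_T},\;\epsilon_{\rm score}(T)/\sqrt{\Delta_{\theta,T}}\bigr\}.
\]
By symmetry we treat one ordering. Set $R=\sqrt{\rho^\theta_T}$, $R_0=\sqrt{\rho_T}$ and $h=R/R_0$. The Hellinger distance is $H^2=\int(R-R_0)^2=2-2\langle R,R_0\rangle$.

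Apply the Poincar\'e inequality for $\rho_T$, with constant $1/\Delta_T$ (equivalently, the spectral gap of $\widehat H_T$ on the complement of $R_0$), to $h$:
\[
\mathrm{Var}_{\rho_T}(h)\le\Delta_T^{-1}\int|\nabla h|^2\rho_T .
\]
Now $\nabla h=\tfrac12 h\,(S^\theta_T-S_T)$, so
\[
\int|\nabla h|^2\rho_T=\tfrac14\int|S^\theta_T-S_T|^2\rho^\theta_T=\tfrac12\epsilon^2_{\theta,\rm score}(T).
\]
Next, $\mathrm{Var}_{\rho_T}(h)=1-\langle R,R_0\rangle^2$. Write $c=\langle R,R_0\rangle\in(0,1]$. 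Then $1-c^2=(1-c)(1+c)\ge 1-c=H^2/2$, which gives $H^2\le\epsilon_{\theta,\rm score}^2/\Delta_T$. Finally, $d_{\rm TV}\le H\sqrt{1-H^2/4}\le H$, which yields $\epsilon_{\theta,\rm score}(T)/\sqrt{\Delta_T}$.

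Swapping the roles of the two densities, i.e.\ taking $h=R_0/R$ and using the Poincar\'e inequality of $\rho^\theta_T$ with gap $\Delta_{\theta,T}$, gives $\epsilon_{\rm score}(T)/\sqrt{\Delta_{\theta,T}}$. Taking the minimum finishes the proof.

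The main obstacle is justifying the Poincar\'e step for $h$. We must check that $h\in H^1(\rho_T)$, that is, that $\epsilon_{\theta,\rm score}<\infty$ and suitable integrability hold, and that the gap of $\widehat H_T$ coincides with the inverse Poincar\'e constant, as asserted in the text citing \cite{Helffer2005}. This is where self-adjointness, via the ground-state transform unitary $L^2(\rho)\to L^2$, is used. The constants also need care, since $\epsilon^2$ carries the factor $\tfrac12$; the factor-of-two slack in $1-c^2\ge 1-c$ compensates, and the arithmetic above closes without loss.
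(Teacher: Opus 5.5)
Your proposal is correct and is essentially the paper's proof. It uses the same triangle inequality at $\rho^\theta_T$, Thm.~\ref{thrm:adiabatic_FK} with $d_{\rm TV}\le\tfrac12\sqrt{\chi^2}$ for the dynamic terms, and for the static term a gap bound on the overlap $c=\langle\sqrt{\rho^\theta_T},\sqrt{\rho_T}\rangle$ followed by $2(1-c)\le 2(1-c^2)$ and $d_{\rm TV}\le\|\sqrt{\rho^\theta_T}-\sqrt{\rho_T}\|_{L^2}$; your Poincar\'e inequality for $h=\sqrt{\rho^\theta_T/\rho_T}$ in $L^2(\rho_T)$ is exactly the Rayleigh-quotient estimate $\langle\psi_0^\theta,\widehat H\psi_0^\theta\rangle=\tfrac14\mathbb{E}_{\rho_\theta}\|S^\theta-S\|^2\ge\Delta(1-c^2)$ of Lemma~\ref{lem:static_perturbation}, transported through the ground-state transform.
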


Thm. \ref{thrm:ad_diffusion} gives a transparent upper bound on the error in generative modeling with a trained score-based diffusion model. The first two errors are shared by the adiabatic theorem: a decaying error from misalignment to the initial reference density and an adiabatic error accumulated over the trajectory. Ensuring a sufficiently slow annealing speed $|\dot t|$ and alignment with the simple reference $\mu_0$ makes these terms arbitrarily small. What remains is the ratio of the score-estimation error to the root spectral gap $\sqrt{\Delta_{T}}$ of the terminal time data Hamiltonian itself. It is only here that the matching error $\epsilon_{\rm score}(T)$ appears. 
This is in contrast with standard stochastic-path bounds, which use Girsanov to control the score error over the entire trajectory \cite{chen2023sampling, lee2022convergence, benton2024nearly, yang2022convergence}.

Our results apply to annealed score-based Langevin \cite{annealed_langevin}, variance exploding, and variance-preserving parameterizations \cite{song2021scorebased} of diffusion models by choice of target family $\rho^\theta_t$ and schedule $\dot{t}$ (Appendix Remark~\ref{rem:parametrization}). Proposition~\ref{prop:non_conservative_score} and ~\ref{prop:adiabatic-bound_NC} detail the non-conservative score corrections. If $\stheta$ is non-conservative, one tracks $\tilde{\rho}^{\theta}_t$ with gap $\Gamma_{\mathrm{rot}}(s)$. In addition to the bound of~\eqref{eq:diffusion-model-bound} one incurs a terminal penalty of $\sqrt{\frac{2 \lambda_{\theta}(T)}{\Delta_T}}$ for $\lambda_{\theta}(T) = \tfrac{1}{4}\mathbb{E}_{\tilde{\rho}_{T}}\lVert A^{\theta}_T \rVert^{2}$ and the dynamic penalty in the supremum of Prop.~\ref{prop:adiabatic-bound_NC}.

To ensure the adiabatic lag is below a tolerance $\eta_{\rm ad}$, Thm. \ref{thrm:adiabatic_FK} suggests a diffusion schedule
\[
    |\dot t|
    \lesssim
    \eta_{\rm ad}
        \Gamma(s)/
        \|\partial_t \log \rho_t\|_{L^2(\rho_t)},
\]
analogous to the adiabatic brachistochrone \cite{QAB}. A more conservative and tractable alternative is to take
\begin{equation}
    |\dot t|
    \lesssim
    \eta_{\rm ad}\Delta_t^{3/2}/ \|\partial_t S_t\|_{L^2(\rho_t)}
    ,
    \label{eq:score-based-adiabatic-schedule}
\end{equation}
with annealing speed scaling polynomially in the reciprocal of the spectral gap. See Fig. \ref{fig:Adiabatic}.

\subsection{Numerical Experiments} 

\noindent \textit{Recovering Excited States of the Hydrogen Atom.} As a first experiment, we demonstrate Hamiltonian-Identification (Thm.~\ref{thrm:Ham_Identification}) by solving a quantum inverse-problem: we perform score-matching on samples of the 1s Hydrogen atom orbital ground-state to recover the underlying $\hat{H}$. We compare the Score Hamiltonian $\hat{H}^{\theta}$ to Hamiltonians computed from two classical potentials used in generative modeling. The first is thermodynamic integration of the score $-\int (\stheta(x) - \stheta(\infty)) \cdot dx$; since we train a conservative score, we equivalently take $-\log \rho_{\theta}$ for the same network defining $\stheta=\nabla\log \rho_{\theta}$. The second, $-\log\rho_{\eta}(x)$, is the likelihood-based flow \cite{pmlr-v37-rezende15, No2019}.

\begin{figure}[tbp]
    \centering\includegraphics[width=0.95\linewidth]{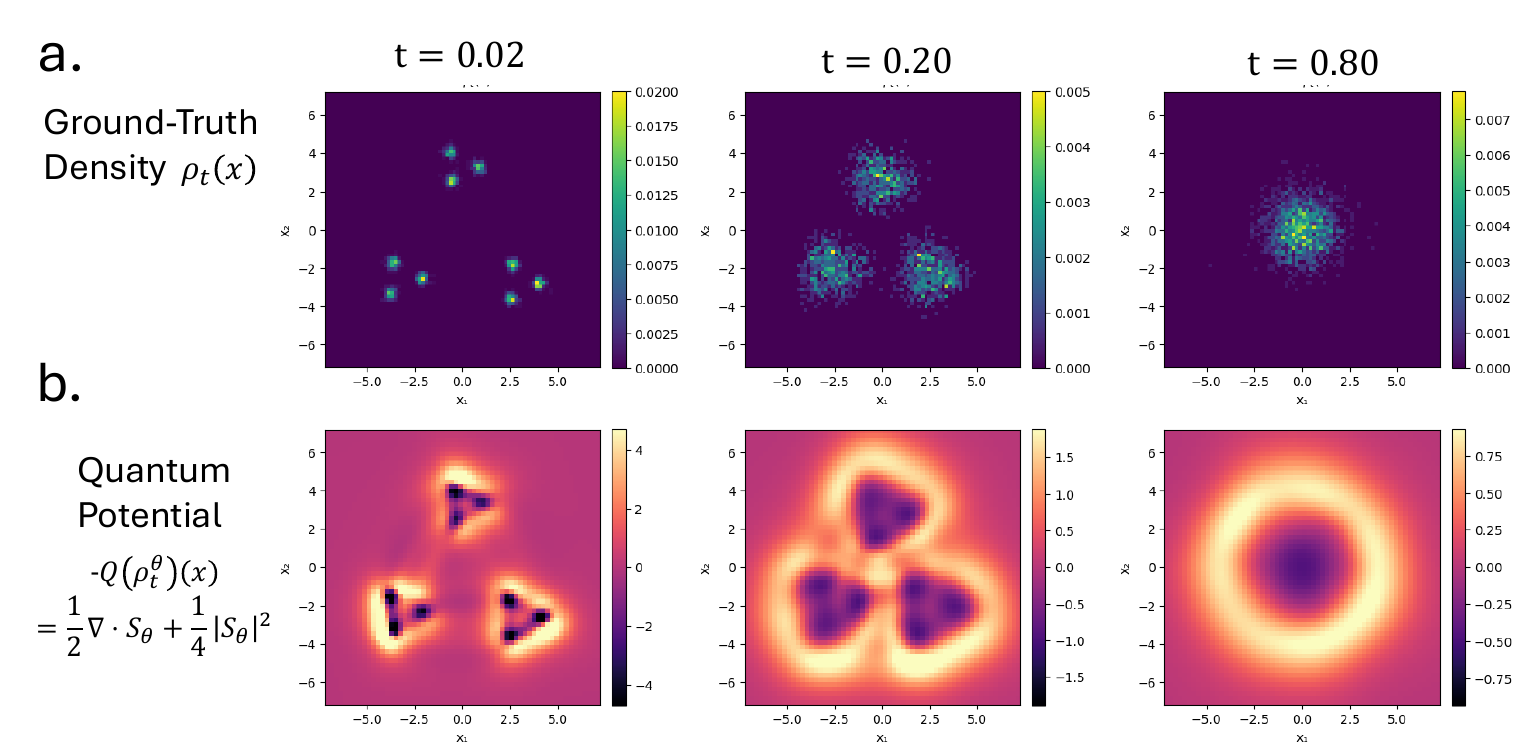}
    \caption[Generation on a Hierarchical Density.]{\textbf{(a)} Interpolation target $\rho_{t}$ for a variance-preserving \cite{song2021scorebased} diffusion model. \textbf{(b)} Quantum potential of the trained diffusion model $\stheta$.
    }
    \label{fig:Hierarch_Ex}
\end{figure}

\begin{figure}[tbp]
    \centering\includegraphics[width=\linewidth]{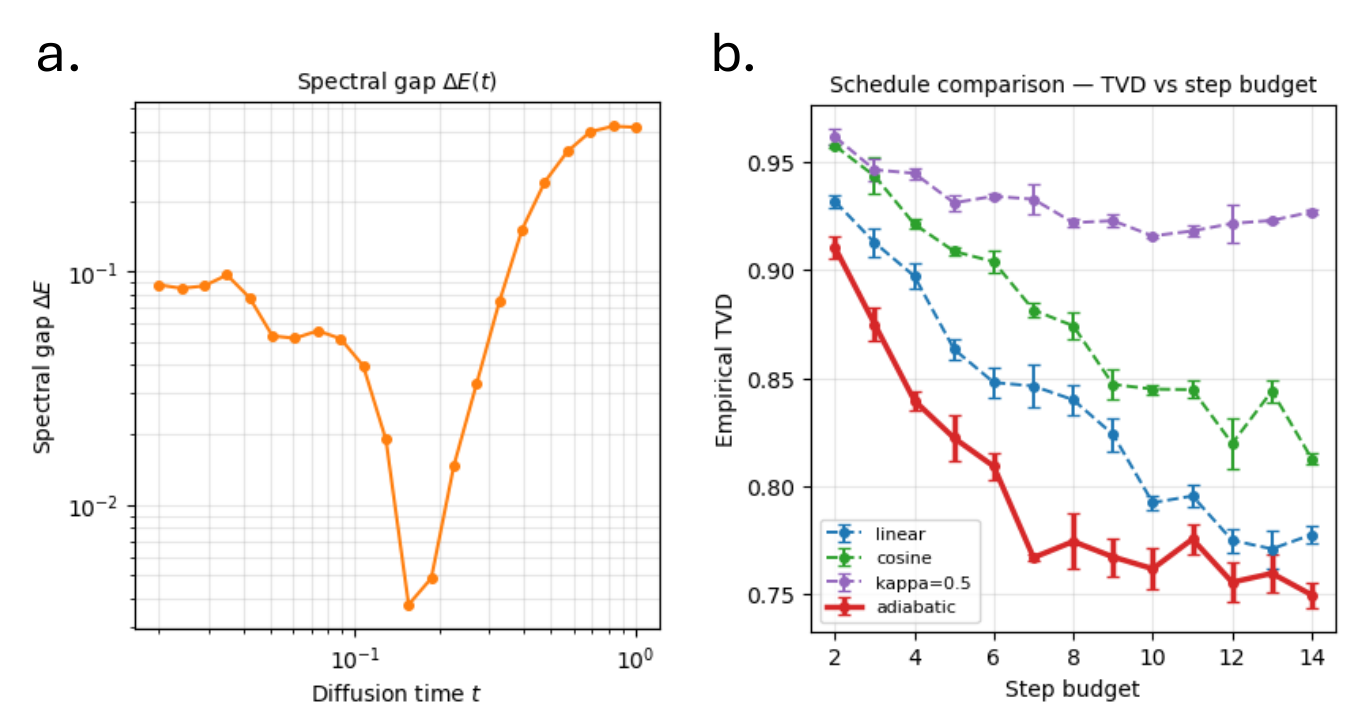}
    \caption{ \textbf{(a)} Spectral gap $\Delta({t_s})$ across diffusion time and \textbf{(b)} diffusion model error $d_{\mathrm{TV}}$ across schedules $|\dot{t}|$.
    }
    \label{fig:Adiabatic}
\end{figure}

From the exact ground state $\rho  \propto e^{-2r}$, these classical treatments predict $-\log \rho \propto 2 r + \mathrm{const}$, a linear confining well. Meanwhile (in units $-\frac{1}{2}\nabla^{2}$) one has $Q$ scaling as $1/2-1/r$, coinciding with the Coulomb potential. The empirical results in Fig.~\ref{fig:HydrogenPotential} match these predictions. Moreover, 
the Score Hamiltonian predicts the exact excited energies $E_n = -1/(2n^2)$ with absolute error $|\hat{E}_{n} - E_{n}|$ across the $2\mathrm{s}, 2\mathrm{p}, 3\mathrm{s}, 3\mathrm{p},$ and $3\mathrm{d}$ in the range $0.0020$-$0.0051$ (see Table~\ref{tab:hydrogen_spectrum}, Fig.~\ref{fig:HydrogenSpectrum}). Thermodynamic treatment of the exact same score and Boltzmann generator yield unbound states. The recovered orbital densities $|\psi_{nl}(x)|^2$ are shown in Fig.~\ref{fig:HydrogenOrbitals}. For analogous results for the coupled Harmonic oscillator see \eqref{eq:coupled_HO}, \S\ref{sec:CHO}, and Table~\ref{tab:cho_metrics_supp} and for experimental details see \S \ref{appendix:numerical-setup}.
\,\\
\,\\
\noindent \textit{Double well.} In Section~\ref{sec:demonstration} we consider the analytically tractable setting of transporting a harmonic oscillator to a double-well (Gaussian-mixture), demonstrating the non-asymptotic upper-bound (Thm.~\ref{thrm:ad_diffusion}) holds on an empirically trained neural diffusion model. We find that in the limit $|\dot{t}| \downarrow0$ that $d_{\mathrm{TV}}$ scales as $\epsilon_{\theta}/\sqrt{\Delta}$ ($R=0.953$, see Fig.~\ref{fig:gmm_bounds_validation}a, b). The score error $\epsilon_{\theta}$ and gap $\Delta$ are collinear with $r(\epsilon_{\theta},\Delta)=-0.91$, so we use partial correlations. Controlling for the gap reduces the $r(\mathrm{TVD},\epsilon_{\theta})$ correlation from $0.87$ to $0.35$ while controlling for $\epsilon_{\theta}$ reduces the $r(\mathrm{TVD},1/\sqrt{\Delta})$ correlation from $0.95$ to $0.78$. These results confirm the necessity of the spectral gap and the $\epsilon_{\theta}/\sqrt{\Delta}$ scaling of Thm.~\ref{thrm:ad_diffusion}. We also add curl to a conservative score, increasing the ground-state energy of the Score Hamiltonian. We compute the ground-state $E_{0}=\lambda_{\theta}$ of $\widehat{H}_{\theta}$, and find that added non-conservativity recovers $\sqrt{2\lambda_{\theta}/\Delta} \propto \mathrm{TVD}$ of Proposition~\ref{prop:non_conservative_score} with $R=0.993$ (Fig.~\ref{fig:gmm_bounds_validation}c).
\,\\
\,\\
\noindent \textit{Adiabatic Brachistochrone.} In Fig.~\ref{fig:Adiabatic} we demonstrate the adiabatic brachistochrone schedule of Equation~\eqref{eq:score-based-adiabatic-schedule} on a hierarchical density exhibiting a transient decay in the spectral gap during mode formation. We find the spectrally-adaptive adiabatic schedule~\eqref{eq:score-based-adiabatic-schedule} outperforms standard diffusion model linear, cosine, or power schedules for $t_{\tau}$ with the same step-budget (Fig.~\ref{fig:Adiabatic}). In Supplemental Fig.~\ref{fig:AdiabaticAlloc} we show the adaptive schedule takes large steps when $\Delta(s) \uparrow$ and decelerates at the critical time of mode formation and branching where the spectral gap $\Delta(s) \downarrow 0$ closes. Mode collapse corresponds to failed tracking of the ground state $\psi_{0}$, and adoption of excited state components $\{\psi_{k \geq 1}\}$.

\subsection{Discussion.} We constructed the Score Hamiltonian, an operator connecting score-based generative modeling to imaginary-time adiabatic transport and allowing for a quantitative study of diffusion models via spectral theory and quantum mechanics.  As diffusion models are trained by minimizing the Fisher-divergence $\min_{\theta}\,\lVert \nabla \log \rho - \nabla \log \rho^{\theta} \rVert_{L^{2}(\rho)}^{2}$ \cite{denoising_sm, hyvarinen05a}, their natural metric on densities is the Fisher-Rao metric \cite{Amari2016} whose associated functional generating its flow is the Fisher Information $I[\rho]= \frac{1}{4}\int |\nabla \log \rho |^{2} \rho$. The quantum potential $Q$ \cite{Madelung1927, Bohm1952} is precisely its first variation \cite{Gianazza2008, Tsekov2008} $Q \propto {\delta I}/{\delta \rho}$. Thus, the Score Hamiltonian is not chosen, but \emph{canonical}: its potential is the variational derivative of the functional underlying score-matching. Further, Madelung flow \cite{Madelung1927} itself conserves the sum of the kinetic energy and Fisher-divergence to the ground-state (Prop.~\ref{prop:Schrodinger_relFisher}) implying the canonicality of the Score Hamiltonian extends to real-time. The Score Hamiltonian can spectrally decompose any data distribution learned by a score-based diffusion model, opening the possibility of using spectral tools for analyzing mode-formation, metastability, and the geometry of generative processes.

\begin{acknowledgments}
BH gratefully acknowledges support by a 2024 Sloan Fellowship in Mathematics, NSF CAREER grant DMS-2143754, and NSF grant DMS-2133806, and DARPA AIQ grant (HR001124S0029). PH is supported by NIH/NCI grant U24CA248453 (PI: Benjamin J. Raphael).
\end{acknowledgments}

\bibliography{apssamp}
\,\\ 
\,\\ 

\newpage
\,\\ 
\newpage

\appendix

\section*{Overview of Supplementary Material}

The supplementary material is organized as follows.  Appendix \ref{appendix:technical} fixes
notation and provides the dictionary between diffusion models and the spectral formulation used in the main text.  It recalls the
ground-state transform, derives the moving-frame gauge term in the time-inhomogeneous Fokker--Planck equation, explains how the variance-exploding and variance-preserving diffusion parameterizations fit into the canonical Langevin form, and records the Fisher and Bohm identities underlying the Score Hamiltonian.

Appendix \ref{appendix:main-proofs} contains the proofs of the main results. It introduces the Information-Hamiltonian formulation, proves the basic spectral properties of the Score Hamiltonian, establishes the adiabatic theorem and its matching lower bound, and derives the perturbation estimates that convert score-matching error and non conservativity into density-reconstruction error.

Appendix \ref{appendix:numerical-setup} contains the numerical details for the physical Hamiltonian-identification experiments: the hydrogen atom and the coupled harmonic oscillator. Appendix~\ref{sec:demonstration} contains the numerical details for the synthetic diffusion-model experiments, including the Gaussian-mixture benchmarks, spectral-gap computations, total-variation measurements, and adiabatic scheduler comparisons.

\section{Technical Preliminaries}\label{appendix:technical}
To bridge the standard objects of Langevin dynamics and the Fokker-Planck equation \cite{Risken1996} with our spectral study of an operator on Lebesgue space $L^{2}$, we recall the ground-state transform.
\begin{definition}[Ground-State Transform]
Let $\rho(x) \propto e^{-2\varphi(x)}$ be a Gibbs measure. Define the ground-state transform $\hat{U}_\rho : L^2(\rho) \to L^2(dx)$ by multiplication with the density amplitude:
$$\hat{U}_\rho f = f\sqrt{\rho}.$$
\end{definition}
Suppose now that $\mu_s$ is a curve of probability measures satisfying the Langevin evolution
\[
\partial_{s} \mu_{s} = L_{t}^{*} \mu_{s} = \nabla^{2} \mu - \nabla \cdot (S\mu),\quad S = \nabla \log \rho
\]
Conjugating minus the Langevin generator $L_t$ by $\widehat{U}_\rho$ by the ground state transform yields the unitarily equivalent operator \cite{setti}
\[
\widehat{H}_t = \widehat{U}_\rho (-L_t) \widehat{U}_{\rho}^{-1}
\]
a self-adjoint Schrödinger operator, called the Witten Laplacian \cite{Witten1982, Helffer2005}. Explicitly:
\[
\widehat{H}_t = -\nabla^2 + ||\nabla \varphi_{t}||^2 - \nabla^2 \varphi_{t},
\]
and a score-based inversion of $\widehat{H}_t$ is our central analytic focus. The density-ratio amplitude between the Fokker-Planck marginal $\mu_{s}$ and the equilibrium $\sqrt{\rho}$,
\[
\Psi_s = \mu_{s} / \sqrt{\rho}
\]
evolves under pure imaginary-time Schrödinger evolution with respect to the Witten Laplacian, i.e. 
\[
\partial_s \Psi_s = -\widehat{H} \Psi_s.
\]
Thus, for $e^{-s \widehat{H}}$ the contraction semi-group, one has the correspondence between the Langevin marginal $\mu_{s}$ at relaxation time $s$, the initialization marginal $\mu_{0}$ of the Langevin dynamics, the equilibrium (ground-state) $\sqrt{\rho}$ given by, 
\begin{align}\label{eq:LangevinSchrodinger_correspondence}
&\mu_{s} = \sqrt{\rho} \, e^{-s \widehat{H}} \left( \frac{\mu_{0}}{\sqrt{\rho}} \right)
\end{align}
Our aim is to analyze the \emph{time-inhomogeneous} Fokker-Planck evolution
\begin{equation}\label{eq:adiabatic-with-gague}
\partial_\tau \mu_\tau = L^*_{t_{\tau}}\mu_\tau= \nabla^{2} \mu_{\tau} + 2\,\nabla \cdot (\nabla \varphi_{t_{\tau}}\,\mu_{\tau}),    
\end{equation}
where $\rho_{t_\tau}\propto e^{-2\varphi_{t}}$. In this case, as $t_{\tau}$ is annealed across time, one adopts a moving-frame Gauge correction which distinguishes the Fokker-Planck dynamics from adiabatic evolution. We find
for
\begin{equation}\label{eq:psi-def}
    \Psi_\tau:=\frac{\mu_\tau}{\sqrt{\rho_{t_\tau}}}
\end{equation}
that 
\begin{align*}
\partial_\tau\Psi_\tau &= \partial_\tau\left( \frac{\mu_{\tau}}{ \sqrt{\rho_{t_{\tau}}}} \right) = \frac{\partial_{\tau} \mu_{\tau}}{\sqrt{\rho_{t_{\tau}}}} - \frac{\dot{t}(\tau)}{2} \frac{\partial_t \rho_{t_{\tau}}}{\rho_{t_{\tau}}} \left( \frac{\mu_{\tau}}{\sqrt{\rho_{t_{\tau}}}} \right) \\
&= -\widehat{H}(t_{\tau})\Psi_\tau - \frac{\dot{t}(\tau)}{2} \partial_t\log\rho_{t_{\tau}}\, \Psi_\tau .\label{eq:gauge}
\end{align*}
This introduces the Gauge-correction 
\[
\widehat{G}_{\tau}=\frac{\dot{t}(\tau)}{2} \partial_t\log\rho_{t_{\tau}} = \dot{t}(\tau) M_{t_{\tau}},
\]
in the conservative case. We detail the non-conservative extension in Section~\ref{sec:noncons_exten}.

\subsection{Score-Matching, Diffusion Models, and Fisher Geometry.}\label{sec:diffusion_VP}

Diffusion models depend on a training phase that learns the Fisher-score $\nabla_x \log p_s(x)$, and an inference phase that uses it for sampling. One first defines a target family of intermediate distributions $(p_s )_{s\in [0,T]}$, with $p_{0}=p_{\mathrm{data}}$ and a tractable reference such as a Gaussian $p_{T} \approx p_{\mathrm{ref}}$, and learns its scores $\nabla \log p_{s}$. For sampling, we use the reversed interpolation coordinate $t=T-s$ so that $t=0$ indexes the reference end and $t=T$ indexes the data end. Given this family, one performs score-matching \cite{hyvarinen05a, song2021scorebased} to fit a parametric score $\stheta_{s}(x)$ ($:=\nabla\log\rho_s^{\theta}(x)$ if conservative) to minimize the Fisher-divergence to scores from this family,
\vspace{-0.5em}
\begin{equation}\label{eq:FishDiv}
\min_{\theta} \frac{1}{2}\int_{\mathbb{R}^{d} \times [0,T]}  |\stheta_{s}(x)-\nabla\log p_s (x)|^{2} p_s (dx) \,ds.
\end{equation}
\vspace{-0.8em}

The training phase is often done via denoising score matching \cite{denoising_sm} which uses the Gaussian-convolved family $p_s = p_{\mathrm{data}} * \mathcal{N}(0,\,\sigma_s^2 )$ and provides a statistically consistent surrogate using samples from $p_{\rm data}$. This is often called the \emph{forward} noising process.

Diffusion models \cite{song2021scorebased} extend score-matching by introducing an annealing schedule along diffusion time $t \in [0,T]$, a time-indexed \emph{effective} interpolation density $(\rho_t^{\leftarrow })_{t\in[0,T]}$, and an algorithmic time $\tau$ mapped to diffusion time by the schedule $\tau(t)$.

For stochastic score-based samplers considered here, sampling can be expressed in the canonical time-inhomogeneous Langevin form on the time-indexed interpolation density $(\rho_t^{\leftarrow })_{t\in[0,T]}$ in algorithmic time $\tau$:
\begin{equation}\label{eq:Langevin_score}
dx_{\tau}= \nabla \log \rho_{t_{\tau}}^{\leftarrow}(x_{\tau})d\tau + \sqrt{2}\,dW_{\tau},
\end{equation}
with marginal $\mu_{\tau} = \mathrm{Law}(x_{\tau})$. Two free choices -- the interpolation family $(\rho_t^{\leftarrow})_{t\in[0,T]}$ and the schedule $\tau(t)$ -- together determine the fast relaxation dynamics and rate at which the target moves in the canonical Langevin formulation~\eqref{eq:Langevin_score}.

To study diffusion models, we consider the general form \eqref{eq:Langevin_score}, agnostic to the specific interpolation density $(\rho_{t}^{\leftarrow})$ and schedule $\tau(t)$ while discussing the three parametrizations below.

\begin{remark}[Diffusion-Model Parametrizations]\label{rem:parametrization}

We first remark on how the proposed adiabatic framework fits in the context of diffusion modeling parameterizations. While denoising score-matching is usually formulated using a data-to-noise coordinate, throughout this work we reindex the same family in the opposite direction (time-forward to data), so that $t=0$ and $t=T$ index the reference and data distributions. Let us then denote:
\begin{align*}
    s &:= \text{Forward noising time (data} \to \text{noise)}, \\
    t &:= \text{Sampling coordinate, } T-s \text{ (noise} \to \text{data)}, \\
    \tau &:= \text{Algorithmic time}, \\
    t_\tau &:= \text{Interpolation parameter along } \tau.
\end{align*}

And let $\nabla_x\log p_s(x)$ be the score learned during training, distinguished from the effective score of the interpolation $S_t(x)=\nabla\log\rho_t^{\leftarrow}$ on the instantaneous frozen equilibrium $\rho_t^{\leftarrow}$.

\paragraph{Annealed Score-Based Langevin Dynamics \cite{annealed_langevin}.}

The framework of score-based generative modeling via an annealed Langevin dynamics was introduced in \cite{annealed_langevin} and is the simplest realization of the proposed adiabatic framework. It defines a sequence of forward Gaussian-convolved marginals $(p_{s})_{s \in[0,T]}$ along noising coordinate $s$, with $t=T-s$ denoting the corresponding noise-to-data coordinate. The continuous-time interpolation of the annealed score-based Langevin dynamics \cite{annealed_langevin} takes the form
\[
dx_{u} = \frac{1}{2}\nabla_x\log p_{T-t(u)}(x_u) du + d \bar{W}_{u}.
\]
For the canonical clock $\tau = u/2$, this is then 
\[
dx_{\tau} = \nabla_x\log p_{T-t_\tau}(x_\tau) d\tau + \sqrt{2}d \bar{W}_{\tau}.
\]
Thus,~\eqref{eq:Langevin_score} is recovered with
\[
\rho_{t}^{\leftarrow}=p_{T-t} , \quad S_{t}(x) = \nabla \log p_{T-t} (x).
\]
So that the annealed score-based Langevin dynamics directly recovers the canonical adiabatic setup: the smoothed diffusion marginal $p_{T-t_\tau}$ of the data density is the exact frozen equilibrium $\rho_{t_{\tau}}^{\leftarrow}$, the stationary law of the Langevin generator $L_{t_{\tau}}^{*}\,p_{T-t_\tau}=0$ for fixed time, and $\mu_{\tau} = \mathrm{Law}(x_{\tau})$ is the evolved sampling law. Thus, Thm.~\ref{thrm:adiabatic_FK} directly quantifies when $\mu_{\tau}$ tracks this moving equilibrium $p_{T-t_\tau}$ as the noise-scale is annealed.

\paragraph{Reverse-SDE Parametrizations \cite{song2021scorebased}.}

Conventional reverse-SDE parametrizations differ from the pure form of annealed Langevin considered in \cite{annealed_langevin} as their reverse drift need not have the forward marginal $p_{T-t_\tau}$ as its frozen equilibrium. They generally define a conservative drift $S_{t}(x) = \nabla \log \rho_{t}^{\leftarrow} (x)$ with respect to a different effective frozen equilibrium family $\rho_{t_{\tau}}^{\leftarrow}$. The corresponding effective families are derived below.

In score-based diffusion one defines a forward stochastic differential equation (SDE) \cite{song2021scorebased} of the general form 
\[
dx=f(x,s)\,ds+g(s)dW.
\]
The corresponding reverse-time SDE \cite{Anderson1982ReversetimeDE}, which generates samples from the target distribution, is given by:
\begin{align*}
dx=[f(x,s)-&g^2(s) \nabla_x\log p_{s}(x)]ds\\
&+g(s)d\bar{W}
\end{align*}
which, in our sampling coordinate $t$ is given by
\begin{align*}
    dx=[-f(x,T-t)+&g^2(T-t) \nabla_x\log p_{T-t}(x)]dt\\
&+g(T-t)d\bar{W}
\end{align*}

To map this generative process into the canonical imaginary-time adiabatic transport governed by Equation~\eqref{eq:learned-fokker-planck}, we define the algorithmic time clock $\tau$
\[
d\tau = \frac{1}{2} g^{2}(T-t)dt , \quad \frac{dt}{d\tau} = \frac{2}{g^{2}(T-t)}.
\]
Under this clock, the dynamics are mapped to
\begin{equation}\label{eq:rev_sde_tau}
dx_{\tau}=\left[2 \nabla_x\log p_{T-t}(x_{\tau}) -\frac{2f(x_{\tau},T-t)}{g(T-t)^{2}} \right]d\tau+\sqrt{2}d\bar{W}_{\tau}.
\end{equation}
To write this in the standard form~\eqref{eq:Langevin_score} implies that the effective target density $\rho_t^{\leftarrow}$ whose conservative score exactly matches the reverse drift must satisfy for a scalar potential $U_{t}(x)$
\begin{align}\label{eq:eff_drift}
& \nabla\log\rho_t^{\leftarrow}(x) = \nabla \log (p_{T-t}^{2}(x) e^{U_{t}(x)}),\\
&\nabla U_{t}(x) = -\frac{2f(x,T-t)}{g(T-t)^{2}}.
\end{align}
so that the effective frozen equilibrium is
\begin{equation}\label{eq:target}
\rho_t^{\leftarrow}(x) = \frac{1}{Z_{t}} p_{T-t}^{2}(x) e^{U_{t}(x)}
\end{equation}
whenever the partition function $Z_{t}<\infty$ and the effective drift~\eqref{eq:rev_sde_tau} is conservative ~\eqref{eq:eff_drift}. We detail two common cases below.

\paragraph{Variance-Exploding (VE) Schedule \cite{song2021scorebased}.} The VE parameterization defines a process where the noise scale grows unboundedly (convolved with expanding Gaussian kernels). It has forward SDE,
\[
f(x,s)=0, \,\, \textrm{and}\,\,g(s)=\sqrt{\frac{d}{ds}[\sigma^2(s)]},
\]
and forward interpolation family given by the Gaussian convolution $p_s(x)=\rho_{\mathrm{data}}*\mathcal{N}(0,\sigma^2(s)\,I)$. The associated reverse SDE, in the convention of \cite{song2021scorebased} with $s$ decreasing toward the data, is
\[
dx_{s}=-g^2(s)\nabla\log p_s(x)ds+g(s)d\bar{W}.
\]
In our convention, we notate time $t=T-s$ increasing in the noise-to-data direction so that
\[
dx_{t}=g^2(T-t)\nabla\log p_{T-t}(x)dt+g(T-t)d\bar{W}_{t}.
\]
In this case, $U_{t}=0$ so that the target~\eqref{eq:target} is $\rho_t^{\leftarrow} \propto p_{T-t}^{2}$. 

\paragraph{Variance-Preserving (VP) Schedule \cite{song2021scorebased}.}
The VP parameterization confines the process harmonically, bounding the variance so that the distribution converges to a standard normal prior $\mathcal{N}(0,{I})$ in the forward process. The forward SDE is given by:
\[
f(x,s)=-\frac{1}{2}\beta(s)x, \quad \textrm{and}\quad g(s)=\sqrt{\beta(s)},
\]
with interpolation family given by the distribution of the scaled convolution for the forward kernel
\[
x_s|x_0\,\sim\mathcal{N}(x_0 \,e^{-\frac{1}{2}\int_0^s\beta(r)dr},\,I\,(1-e^{-\int_0^s\beta(r)dr})).
\]
The reversal \cite{Anderson1982ReversetimeDE} is given by the SDE
\begin{equation}\label{eq:rev_sde}
dx_{s} = \left[ - \frac{1}{2} \beta(s) x_{s} - \beta(s) \nabla \log p_{s}(x_{s})  \right]ds + \sqrt{\beta(s)} d\bar{W}_{s},
\end{equation}
where the process is in the convention of \cite{song2021scorebased} with decreasing $s$. Parametrizing with sampling coordinate $t=T-s$ with $\frac{dt}{ds} = -1$, this is
\begin{align*}
d x_{t} = \left[  \frac{1}{2} \beta(T-t) x_{t} + \beta(T-t) \nabla \log p_{T-t}(x_{t})  \right]&dt \\
 + \sqrt{\beta(T-t)} &d\bar{W}_{t}.
\end{align*}
To express in the form~\eqref{eq:Langevin_score}, one observes $U_{t}(x)=\frac{1}{2}\lVert x \rVert_{2}^{2}$ so the target~\eqref{eq:target} is
\[
\rho_t^{\leftarrow}(x) = \frac{1}{Z_{t}} p_{T-t}^{2}(x) e^{\lVert x \rVert_{2}^{2}/2 }.
\]
Here, if the target $p_{T-t}(x) \propto e^{-\lVert x \rVert_{2}^{2}/2}$ then $\rho_t^{\leftarrow}(x) \propto p_{T-t}(x)$.


Thm.~\ref{thrm:adiabatic_FK} applies if $\rho_t^{\leftarrow}$ satisfies a Poincar{\'e} inequality and if relative form-boundedness holds. The sufficient Log-Sobolev conditions of Corollary~\ref{cor:lsi_form} immediately imply this requirement.
\end{remark}

We remark that \(p_s\), \(\rho_t^{\leftarrow}\), and \(\mu_\tau\) play distinct roles. Under exact initialization and exact score, the reindexed marginal $p_{T-t}$ is a particular non-stationary solution of the reverse-time Fokker-Planck equation \cite{Anderson1982ReversetimeDE, song2021scorebased}. Consequently, the idealized sampling law for the specific SDE of \cite{Anderson1982ReversetimeDE} satisfies $\mu_{\tau} = p_{T-t_{\tau}}$, and the exact reverse process reproduces the prescribed forward marginal path.

This identity identifies which non-equilibrium trajectory is associated to a particular forward diffusion, but it does not provide a general stability theory for other score-based samplers, such as the annealed score-based Langevin dynamics \cite{annealed_langevin}. Nor does it alone explain why the perturbed law $\mu_{\tau}$ should relax back to a distinguished moving density, how errors caused by initialization away from $p_{T}$ can be erased, or how deviations of the learned field \( \stheta_t - \nabla \log \rho_{t}^{\leftarrow} \) from the underlying score affect the terminal distribution.

In this work, we sharply control the error of the actual non-equilibrium sampling law $\mu_{\tau}$ with respect to the frozen equilibrium $\rho_{t_{\tau}}^{\leftarrow}$ of the canonical generator. Thus, we study the evolution of the sampler $\partial_{\tau} \mu_{\tau} = L_{t_{\tau}}^{*} \mu_{\tau}$ with respect to tracking of the instantaneous frozen equilibrium $\rho_{t_{\tau}}^{\leftarrow}$ satisfying $L_{t_{\tau}}^{*} \rho_{t_{\tau}}^{\leftarrow} = 0$. While the density path of the instantaneous equilibrium $\rho_{t_{\tau}}^{\leftarrow}$ of the diffusion-model generator need not coincide with that of $p_{T-t_\tau}$, Thm.~\ref{thrm:adiabatic_FK} and Thm.~\ref{thrm:ad_diffusion} ensure that the spectral gap $\Delta$ of adiabatic transport \cite{Born1928,Jansen2007,Morita2008} under the Score Hamiltonian provides explicit \emph{damping} of errors. This is in contrast to standard initial-value style bounds \cite{chen2023sampling, lee2022convergence, benton2024nearly, yang2022convergence}. The adiabatic regime provides a boundary-value style control, locally erasing trajectory error accumulation with a spectrally damped memory kernel. This reduces, in the adiabatic limit, the error to an endpoint reconstruction floor (Thm.~\ref{thrm:ad_diffusion}). This ensures that even when the process is displaced from its nominal path, the Langevin dynamics continually damps it toward the moving equilibrium.

\textbf{Quantum Potential.} The quantum potential or Bohm potential $Q(\rho)$ \cite{Bohm1952} is the first-variation of the Fisher-information functional \cite{Gianazza2008, Tsekov2008}
\[
Q (\rho)= -\frac{\nabla^{2} \sqrt{ \rho}}{\sqrt{\rho}} \propto \frac{\delta}{\delta \rho}\left( \frac{1}{2}\int \rho |\nabla\log\rho|^{2} \right).
\]
and is the information-theoretic analogue of the thermodynamic potential, $F(\rho) \propto - \frac{1}{2}\, \log \rho = \delta H/\delta \rho$, given by the first variation of the differential entropy $H(\rho):P_{2}(\mathbb{R}^{d}) \to \mathbb{R}$ \cite{chewi2024statistical}. This potential arises in the Hydrodynamical form of the Schr{\"o}dinger equation, known as the \emph{Madelung Equations} \cite{Madelung1927}, and distinguishes quantum from classical mechanics. We denote $Q = +\frac{\nabla^{2} \sqrt{ \rho}}{\sqrt{\rho}} \propto -{\delta I}/{\delta \rho}$ to be positively signed -- consistent with the variation of the Fisher information -- but note that the quantum potential $Q$ is often notated as the negative $Q = -\frac{\nabla^{2} \sqrt{ \rho}}{\sqrt{\rho}}$ \cite{Madelung1927}. Following \cite{Nelson1966, NELSON2020, Bloch2022}, the quantum potential is expressed in terms of the score function for a smooth, positive density $\rho$ via the \emph{Bohm Score Identity} when it is well-defined,
\begin{equation}\label{eq:bohm_score_id}
Q (\rho)=  \left(\frac{1}{2} \nabla \cdot S(x) + \frac{1}{4} \left|  S\right|^{2}\right).
\end{equation}
where we have notated $S=\nabla\log \rho$. The quantum potential more generally extends to amplitudes $R \in L^{2}$ as $+\frac{\nabla^{2} R}{R}$, generalizing the nodeless case.

\section{Proofs of Main Results}\label{appendix:main-proofs}

We first develop the broader Information-Hamiltonian construction as an information theoretic class of Hamiltonian constructed from the Quantum potential \cite{Madelung1927, Bohm1952} (which has also been referred to as the Information Potential, \cite{Hiley2004}) supposing various observables. These include using the Bohm potential as a function of amplitude $R$, density $\rho$, or score $S$ as the multiplicative potential. The amplitude case is the most general, allowing treatment of signed amplitudes and nodal states.

\begin{definition}[Information-Hamiltonian]\label{def:ScoreHamiltonian}
    Let $R\in L^{2}$ be a smooth, signed amplitude  with $\rho = |R|^{2} \in W^{1,2}_{\mathrm{loc}}$ a probability density. We define the generalized \emph{Information Hamiltonian} as a Schr{\"o}dinger operator in $L^{2}(dx)$ whose multiplication potential is the (negative) Quantum potential of this amplitude:
    \begin{align}\label{eq:FisherHamiltonian_inf_supp}
    &\widehat{H} =  -\nabla^{2} + \nabla^{2}R/R.
    \end{align}
    This form is well-defined wherever $\nabla^{2}R/R$ is finite; it extends across nodal sets when $\nabla^{2}R/R$ admits a removable singularity. When $\rho>0$ is nodeless with amplitude $\sqrt{\rho}$, Equation~\eqref{eq:FisherHamiltonian_inf_supp} reduces to the \emph{Fisher-Hamiltonian},
\begin{align}\label{eq:Fish_sup}
        &\widehat{H} =  -\nabla^{2} + \nabla^{2}\sqrt{\rho}/\sqrt{\rho}.
    \end{align}
    Given $-(1/2) \log \rho$ is sufficiently smooth, the Bohm-Score-Identity holds globally and allows us to expand in terms of the score-field $S = \nabla \log \rho$, yielding the \emph{Score Hamiltonian}
\begin{equation}\label{eq:ScoreHamiltonian}
\widehat{H} = -\nabla^{2} +  \frac{1}{2} \nabla \cdot S(x) + \frac{1}{4} \left|  S\right|^{2} 
\end{equation}
\end{definition}

Whenever $\nabla^{2}R/R$ defines a self-adjoint Information Hamiltonian~\ref{def:ScoreHamiltonian} and $R$ belongs to its operator domain, $R$ is a zero-energy eigenstate of $\widehat{H}$. Under the assumptions of Thm.~\ref{thrm:BH_LG_equiv_amplitude}, a corresponding operator-level unitary conjugation holds (corresponding to the Langevin generator for $R=\sqrt{\rho}$). In the classic case of the Witten Laplacian \cite{Witten1982, Helffer2005}, this equivalence can be found in \cite{setti}.

We remark the Information Hamiltonian is more general and inverts the observable: while the Witten Laplacian is derived from a potential $\phi$ via $\rho \propto e^{-2\phi}$, here the density $\rho$, score $S$, or amplitude $R$ is the primitive observable from which the multiplication potential $Q$ is constructed. This reflects the regime of score-based diffusion, where only the learned (often non-conservative) vector-field $S^{\theta}$ is accessible. Moreover, for a generalized amplitude $R \in L^{2}$ the construction $Q = {\nabla^{2}R}/{R}$ remains well-defined for signed wavefunctions and nodal states where ${\nabla^{2}R}/{R}$ admits a well-defined extension across nodal sets.

\begin{example}[Quantum Harmonic Oscillator]\label{ex:HO}
    Consider the standard multivariate Gaussian $\rho \propto e^{-\frac{1}{2} \lVert x \rVert_{2}^{2}}$ in $\mathbb{R}^{d}$ with score $S(x)=-x$. Given $S$, its score Hamiltonian is the isotropic Harmonic Oscillator
    \vspace{-0.3em}\begin{equation}\label{eq:HO}\widehat{H}_{\mathrm{HO}}=\widehat{H}(0)=-\nabla^{2} + \frac{1}{4} \lVert  x\rVert_{2}^{2}  -\frac{d}{2} \quad\tag{Harmonic Oscillator}
    \end{equation}
    shifted to have zero ground state energy.
\end{example}

\begin{proposition}[Properties of the Score Hamiltonian]\label{prop:score_hamiltonian_props}
Let $\rho$ be a smooth, strictly positive probability density on $\mathbb{R}^d$ with score vector field $S = \nabla \log \rho$. Define its associated Score Hamiltonian on $L^2(dx)$ by \eqref{eq:ScoreHamiltonian}
\[
\widehat{H} = -\nabla^{2} +  \frac{1}{2} \nabla \cdot S(x) + \frac{1}{4} \left|  S\right|^{2} .
\]
Assume $S$ satisfies the quadratic lower envelope condition: there exists $C>0$ such that
\[
    \frac{1}{2}\|S(x)\|^2 + 2C(1+|x|^2) \ge -\nabla\cdot S(x)
    \qquad \text{for all } x\in\mathbb R^d.
\]
Then, the following properties hold:
\begin{enumerate}
    \item $\widehat{H}$ is essentially self-adjoint on $L^2(dx)$.\label{cond:SA}
    
    \item The density amplitude $\sqrt{\rho}$ is the unique, normalized, non-negative ground state of $\widehat{H}$ with zero energy (eigenvalue $E_{0}=0$), satisfying $\widehat{H}\sqrt{\rho} = 0$.\label{cond:GS}
\end{enumerate}
If, in addition, $\rho$ satisfies a Poincar{\'e} inequality then
\begin{enumerate}[resume]
\item The spectral gap $\Delta$ of $\widehat{H}$ is strictly positive $\Delta := E_1 - E_0 = E_1 > 0$ and equals the inverse of the optimal Poincar\'{e} constant of $\rho$ \cite{Helffer2005}: \label{cond:PC_gap}
    \begin{equation}\label{eq:PC_SpecGap}
    \frac{1}{\Delta} = \sup_{\substack{f \in W^{1,2}(\rho) \setminus \{0\} \\ \int f \, d\rho = 0}} \frac{\int f^2 \, d\rho}{\int \|\nabla f\|^2 \, d\rho}
\end{equation}
\end{enumerate}
If $Q(x)=\frac{1}{2}\nabla \cdot S(x)+\frac{1}{4}\norm{S(x)}^2 \uparrow \infty$ is confining as $\norm{x}\uparrow \infty$, then 
\begin{enumerate}[resume]
\item $\Delta > 0$, Equation~\eqref{eq:PC_SpecGap} holds, and $\widehat{H}$ admits a strictly non-negative discrete lower spectrum $0 = E_0 < E_1 \le E_2 \le \dots$.\label{cond:pure_pt}
\end{enumerate}
\end{proposition}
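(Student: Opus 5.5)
The whole argument rests on the ground-state transform $\widehat{U}_\rho f = f\sqrt{\rho}$ and the factorization it induces. Writing $\varphi = -\tfrac12\log\rho$, the Bohm score identity \eqref{eq:bohm_score_id} gives $Q = |\nabla\varphi|^2 - \nabla^2\varphi$. On $C_c^\infty$ this yields
\[
\widehat{H} = \sum_{j=1}^d A_j^* A_j, \qquad A_j = \partial_j + \partial_j\varphi .
\]
Integrating by parts, for $u = f\sqrt{\rho}$ with $f\in C_c^\infty$ we get
\[
\langle u,\widehat{H}u\rangle_{L^2} = \int \|\nabla f\|^2\,d\rho = \int \big\|\nabla u - \tfrac12 S u\big\|^2 dx .
\]
So $\widehat{H}$ is non-negative and unitarily equivalent, via $\widehat{U}_\rho$, to $-L = -\nabla^2 - S\cdot\nabla$ acting on $L^2(\rho)$. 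I would establish this identity first; every item then follows from it.

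\textbf{Item 1 (essential self-adjointness).} I would apply a Faris--Lavine / Kato-type criterion for Schr\"odinger operators on $C_c^\infty$: it suffices that $Q \ge -W$ with $W$ growing at most quadratically, say $W = C'(1+|x|^2)$. The envelope hypothesis reads $\tfrac12\nabla\cdot S \ge -\tfrac14\|S\|^2 - C(1+|x|^2)$, and hence
\[
Q \;\ge\; -C(1+|x|^2).
\]
Since $Q$ is smooth, hence locally bounded, the Faris--Lavine theorem (equivalently Reed--Simon X.38) gives essential self-adjointness of $-\nabla^2+Q$ on $C_c^\infty$. This is the step I regard as the main obstacle, for two reasons. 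The constants must be matched exactly to the hypothesis as stated. And one must check that only a lower bound on $Q$, rather than a bound on the negative part of $\nabla\cdot S$ alone, is required. If the direct citation turns out to be delicate, my fallback is the Chernoff/Wüst approach: use the quadratic lower bound to build a finite-speed-of-propagation or Lyapunov-type argument showing that $\ker(\widehat{H}^*\pm i)=0$.

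\textbf{Item 2 (ground state).} The computation $\widehat{H}\sqrt\rho = (-\nabla^2\sqrt\rho)+(\nabla^2\sqrt\rho/\sqrt\rho)\sqrt\rho = 0$ is immediate. Membership of $\sqrt\rho$ in the closure's domain comes from a cutoff approximation $\chi_R\sqrt\rho$, where the quadratic form stays finite because $\int\|\nabla(\chi_R)\|^2\rho\to0$. Since the form is $\ge0$, $0$ is the bottom of the spectrum. For uniqueness, any zero-form $u$ gives $f=u/\sqrt\rho$ with $\nabla f = 0$, so $f$ is constant. Positivity and normalization follow from $\int\rho=1$.

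\textbf{Items 3 and 4.} For item 3, the unitary equivalence transports the Rayleigh quotient. The infimum of $\langle u,\widehat{H}u\rangle/\|u\|^2$ over $u\perp\sqrt\rho$ equals the infimum of $\int\|\nabla f\|^2d\rho/\int f^2d\rho$ over $f$ with $\int f\,d\rho=0$ in $W^{1,2}(\rho)$, using density of $C_c^\infty$ in the form domain from item 1. By min-max this infimum is $E_1$, which proves \eqref{eq:PC_SpecGap}, and a Poincaré inequality gives $\Delta>0$. For item 4, if $Q\to\infty$ then $\widehat{H}$ has compact resolvent by Rellich's criterion (Reed--Simon XIII.67). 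The spectrum is therefore discrete with finite multiplicities. By item 2 the ground state is simple, so $E_1>0$, and then \eqref{eq:PC_SpecGap} yields a Poincaré inequality with constant $1/E_1$.
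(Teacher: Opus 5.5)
Your proposal is correct and follows the paper's outline. Item~1 uses Faris--Lavine; your reduction $Q\ge -C(1+|x|^2)$ is exactly the envelope hypothesis divided by two, so the Chernoff fallback is unnecessary. Item~2 uses the direct computation $\widehat{H}\sqrt{\rho}=0$. Item~3 transports the Rayleigh quotient through $\widehat{U}_\rho$, as in the paper's Remark~\ref{rem:PC}. Item~4 uses confinement $\Rightarrow$ compact resolvent.

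The genuine difference is how you prove uniqueness of the ground state. The paper notes that $\sqrt{\rho}$ is a non-negative eigenfunction and appeals to the Courant nodal domain theorem. You instead use the factorization $\widehat{H}=\sum_j A_j^*A_j$, i.e.\ $\langle u,\widehat{H}u\rangle=\int\|\nabla(u/\sqrt{\rho})\|^2\,d\rho$. This does two things at once. It gives $\widehat{H}\ge 0$, so $0$ is the bottom of the spectrum; the paper's proof leaves this implicit and only records it later in \eqref{eq:rayleigh_squares}. It also shows the zero-energy eigenspace is spanned by $\sqrt{\rho}$.

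Your route is more elementary and needs no discreteness of the spectrum. That matters because item~2 is asserted under the envelope hypothesis alone. There the nodal-domain theorem, normally stated for discrete spectra, does not apply directly; a Perron--Frobenius argument would be the proper replacement. What the positivity route buys is generality. It produces and characterizes a positive ground state for an arbitrary Schr\"odinger operator, such as the non-conservative $\widehat{H}_\theta$ later in the paper, whose ground state $\sqrt{\tilde\rho}$ is not known in advance. Your factorization presupposes the positive zero mode $\sqrt{\rho}$.

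Two routine details should be made explicit. First, the cutoff only places $\sqrt{\rho}$ in the form domain with form value $0$. That suffices because $\ker\overline{\widehat{H}}^{1/2}=\ker\overline{\widehat{H}}$; alternatively, note directly that $\sqrt{\rho}\in D(\widehat{H}^*)=D(\overline{\widehat{H}})$. Second, in item~3, identifying the transported form domain with $W^{1,2}(\rho)$ uses the truncation bound $\|f\nabla\chi_R\|_{L^2(\rho)}\le (C/R)\|f\|_{L^2(\rho)}$ together with local mollification.
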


\begin{proof}[Proof of Proposition~\ref{prop:score_hamiltonian_props}]
(1) Self-adjointness of $\widehat{H}$ follows directly from the Faris-Lavine theorem (see, e.g. Thm. X.38 in \cite{Reed1975} or \cite{Sears1950}). For (2) note our multiplication potential, by definition, is $Q(\rho) = +\frac{\nabla^2 \sqrt{\rho}}{\sqrt{\rho}}$. Taking the action of $\widehat{H}$ on $\psi_{0}=\sqrt{\rho}$, we see\[
\widehat{H} \sqrt{\rho} = \left( -\nabla^2 + {\nabla^2\sqrt{\rho}}/{\sqrt{\rho}} \right) \sqrt{\rho} = -\nabla^2 \sqrt{\rho} + \nabla^2\sqrt{\rho} = 0
\]which yields that $\sqrt{\rho}$ is a zero-eigenvalue eigenfunction of $\widehat{H}$. As this eigenfunction is strictly non-negative $\psi_{0}(x) \ge 0$ the Courant nodal domain theorem guarantees it is the unique ground-state. For (3), we show that the equality with the Poincar{\'e} constant holds in Remark~\ref{rem:PC}, and refer the reader to \cite{Helffer2005} for additional discussion on the relation to the Poincar{\'e} constant. A pure point spectrum is ensured by the potential being confining (see, e.g. \cite{Reed1975}).
\end{proof}

\begin{example} For the Gaussian $\rho \propto e^{-|x|^{2}/2}$, we find $|S|^{2} = |x|^2$ and $-\Delta \log \rho = \mathbf{tr}\, \mathbb{I}_{d} = d$. Thus, the inequality $\frac{1}{2} |x|^{2} + 2C (1 + |x|^{2}) \geq d$ is immediate, as $|x|^{2}$ trivially overpowers the constant $d$ so this is satisfied. As $\frac{1}{4}|x|^{2}-d/2 \uparrow \infty$ as $|x| \uparrow \infty$, Properties \eqref{cond:SA}, \eqref{cond:GS}, \eqref{cond:PC_gap}, \eqref{cond:pure_pt} all hold.
\end{example}

\begin{example} 
For the 1D Cauchy distribution $\rho \propto \frac{1}{1+|x|^{2}}$, the score and negative Laplacian $-\Delta \log \rho$ are $S = - {2x}/{(1+|x|^{2})}$ and $- \nabla \cdot S =\frac{2}{1+|x|^{2}} - \frac{4|x|^{2}}{(1+|x|^{2})^{2}} $. Both decay to $0$ as $|x| \uparrow \infty$, so one can easily pick a small constant $C>0$ as a quadratic lower blanket for $\hat{V}$. Thus, properties \eqref{cond:SA} and \eqref{cond:GS} of Proposition~\ref{prop:score_hamiltonian_props} apply to heavy-tailed distributions such as the Cauchy distribution. However, \eqref{cond:PC_gap} and \eqref{cond:pure_pt} do not apply as it does not satisfy a Poincar{\'e} inequality.
\end{example}

\begin{example} 
For the Hydrogen Atom ground-state $\rho \propto e^{-2\lVert x \rVert_{2}}$, the score is $S = - 2\frac{x}{r} $ and the score divergence is $ \nabla \cdot S = -2 \nabla \cdot (\frac{x}{r}) = -2(\frac{2}{r}) =-\frac{4}{r}$. Thus, the potential scales proportional to $\mathrm{const}-\frac{1}{r}$ which similarly decays to a constant as $|x| \uparrow \infty$ so there exists $C$ ensuring the quadratic lower-blanket condition holds, implying properties \eqref{cond:SA} and \eqref{cond:GS} of Proposition~\ref{prop:score_hamiltonian_props}. $\rho \propto e^{-2\lVert x \rVert_{2}}$ \emph{does} satisfy Poincar{\'e} inequality, so \eqref{cond:PC_gap} holds. However, \eqref{cond:pure_pt} does not.
\end{example}

We remark that the existence of a positive Poincar{\'e} constant (and spectral gap $\Delta_t>0$) is commonly ensured by the typical Gaussian convolution $\rho_{t}=\rho_{\mathrm{data}}*\mathcal{N}(0, \sigma_{t}^{2})$ under standard confinement assumptions on $\rho_{\mathrm{data}}$. In particular, the convolution $\rho_{t}$ produces a smooth, strictly positive density everywhere.

We are now ready to prove the main result. 

\subsection{Proof of Theorem \ref{thrm:adiabatic_FK}}\label{sec:main-proof}
Suppose that $\Psi_\tau=\mu_\tau /\sqrt{\rho_{t_\tau}}$ is defined as in \eqref{eq:psi-def} satisfies the evolution \eqref{eq:adiabatic-with-gague}. By Proposition \ref{prop:score_hamiltonian_props}, $\widehat{H}_t$ is essentially self-adjoint with zero-energy ground state $\sqrt{\rho_{t}}$, and the Poincar{\'e} inequality on $\rho_{t}$ holds, yielding a strictly positive spectral gap $\Delta(t)>0$. Write $\Psi_{\perp,\tau} = \Psi_\tau - \psi_0$ the orthogonal lag for $\psi_{0}(t) = \sqrt{\rho_{t}}$ the target ground-state and observe that
\begin{align*}
\langle \Psi_{\tau}, \psi_{0}(\tau) \rangle=  \int \frac{\mu_{\tau}}{\sqrt{\rho_{t_{\tau}}}} \,\sqrt{\rho_{t_{\tau}}} = \int \mu_{\tau} =1
\end{align*}
Thus as the projection is fixed, we may always write $\Psi_{\tau} = \psi_{0}(\tau) + \Psi_{\perp,\tau}$ for $\langle \Psi_{\perp,\tau},  \psi_{0}(\tau) \rangle=0$. Now, let us consider the evolution of the error $\omega = \lVert \Psi_{\perp,\tau} \rVert_{L^2}$. Differentiating the squared error gives
\begin{align*}
\frac{1}{2}\partial_\tau \omega^2 &= \langle \Psi_{\perp,\tau}| \partial_\tau \Psi_\tau - \dot{t}\partial_t \psi_0 \rangle \\
&= \langle \Psi_{\perp,\tau}| -\widehat{H}_\tau \Psi_\tau - \hat{G}_\tau \Psi_\tau - \dot{t}\partial_t \psi_0 \rangle
\end{align*}
Note that $\langle \Psi_{\perp,\tau}|\widehat{H}_\tau \psi_0(\tau) \rangle= 0$. Thus, this simplifies to
\begin{align*}
&\frac{1}{2}\partial_\tau \omega^2 = -\langle \Psi_{\perp,\tau}| \widehat{H}_\tau \Psi_{\perp,\tau} \rangle - \langle \Psi_{\perp,\tau} | \hat{G}_\tau \psi_0 \rangle &
\\
&- \langle \Psi_{\perp,\tau}| \hat{G}_\tau \Psi_{\perp,\tau} \rangle - \dot{t} \langle \Psi_{\perp,\tau}| \partial_t \psi_0 \rangle
\end{align*}
applying Cauchy-Schwarz
\begin{align*}
-\dot{t} \langle \Psi_{\perp,\tau}| \partial_t \psi_0 \rangle &\le |\dot{t}| \omega \lVert \partial_t \psi_0 \rVert_{L^2} \\
-\langle \Psi_{\perp,\tau}| \hat{G}_\tau \psi_0 \rangle &\le \omega \lVert \hat{G}_\tau \psi_0 \rVert_{L^2} = \omega \frac{|\dot{t}|}{2} \lVert \partial_{t} \log \rho_{t} \rVert_{L^2 (\rho_t )}
\end{align*}
where, $\omega \lVert \hat{G}_\tau \psi_0 \rVert_{L^2} = \omega \frac{|\dot{t}|}{2} \lVert \partial_{t} \log \rho_{t} \rVert_{L^2 (\rho_t )}$ holds by evaluating $\lVert \hat{G}_{\tau}\psi_{0} \rVert_{L^{2}}^{2}$,
\begin{align*}
&\lVert \hat{G}_{\tau}\psi_{0} \rVert_{L^{2}}^{2} \\
&= \int \left| \frac{\dot{t}}{2}\,\partial_t\log\rho_{t}(x) \psi_{0}(x) \right|^{2} \\
&=\frac{\dot{t}^{2}}{4}\int \,(\partial_t\log\rho_{t}(x))^{2}  \rho_{t}(x) dx = \frac{\dot{t}^{2}}{4} \lVert \partial_{t} \log \rho_{t} \rVert_{L^2 (\rho_t )}^{2}
\end{align*}
Note that $\lVert \partial_{t} \log \rho_{t} \rVert_{L^2 (\rho_t )}^{2}$ also equals the variance $\operatorname{Var}_{\rho_t}(\partial_t \log \rho_t)$, as $\partial_{t} \log \rho_{t}$ is mean-zero. For the quadratic Gauge term, we isolate the negative part $M_{t,-}(x) = \max\{-\frac{1}{2}\partial_t \log \rho_{t}(x), 0\}$. Since $\Psi_{\perp,\tau} \perp \psi_0$, the assumed relative form-bound guarantees:
\begin{align*}
&-\langle \Psi_{\perp,\tau}| \hat{G}_\tau \Psi_{\perp,\tau} \rangle \le |\dot{t}| \int M_{t,-} \Psi_{\perp,\tau}^2 \, dx \\
&\le |\dot{t}| \Big( a(t) \langle \Psi_{\perp,\tau}|\widehat{H}_{\tau}|\Psi_{\perp,\tau} \rangle + b(t) \|\Psi_{\perp,\tau}\|_{L^2}^2 \Big)
\end{align*}
Thus, the Hamiltonian and Gauge cross-terms combine to bound the energy strictly by the Gauge-corrected gap $\Gamma(\tau) = (1 - |\dot{t}|a(t))\Delta_t - |\dot{t}|b(t)$:
\begin{align*}
&-\langle \Psi_{\perp,\tau}| \widehat{H}_\tau \Psi_{\perp,\tau} \rangle - \langle \Psi_{\perp,\tau}| \hat{G}_\tau \Psi_{\perp,\tau} \rangle \\
&\qquad \le -\Gamma(\tau) \|\Psi_{\perp,\tau}\|_{L^2}^2 = -\Gamma(\tau) \omega^2
\end{align*}
Returning to the full differential error equation, we have:
\begin{align*}
\frac{1}{2}\partial_\tau \omega^2 &\leq -\Gamma(\tau) \omega^2 + \omega \left( \lVert \hat{G}_\tau \psi_0 \rVert_{L^2} + |\dot{t}| \lVert \partial_t \psi_0 \rVert_{L^2} \right).
\end{align*}
Dividing by $\omega$ yields
\[
\, \dot{\omega} \leq -\Gamma(\tau) \omega +  \lVert \hat{G}_\tau \psi_0 \rVert_{L^2} + |\dot{t}| \lVert \partial_t \psi_0 \rVert_{L^2},
\]
since 
\[
\lVert \hat{G}_\tau \psi_0 \rVert_{L^2} = \frac{|\dot{t}|}{2}\lVert \partial_{t} \log \rho_{t} \rVert_{L^2 (\rho_t )},
\]
and
\[
\lVert \partial_t \psi_0 \rVert_{L^2} = \frac{1}{2} \lVert \partial_{t} \log \rho_{t} \rVert_{L^2 (\rho_t )}.
\]
Thus,
\begin{align*}
\dot{\omega} \le - \Gamma(\tau) \omega + |\dot{t}| \lVert \partial_{t} \log \rho_{t} \rVert_{L^2 (\rho_t )},
\end{align*}
and by Gr\"onwall's inequality, 
\begin{align*}
&\omega(\tau) \le  \omega(0) \exp\left(-\int_0^\tau \Gamma(s)ds\right)  \\
&+ \int_0^\tau |\dot{t}(s)| \lVert \partial_{t} \log \rho_{t(s)} \rVert_{L^2 (\rho_{t(s)} )} \exp\left(-\int_s^\tau \Gamma(r)dr\right) ds.  
\end{align*}
To evaluate the steady-state tracking integral, we multiply and divide the integrand by $ \Gamma(s)$ and pull out the supremum:
\begin{align*}
&\int_0^\tau \frac{|\dot{t}(s)| \cdot \lVert \partial_{t} \log \rho_{t(s)} \rVert_{L^2 (\rho_{t(s)} )}}{ \Gamma(s)} \frac{d}{ds} \exp\left(-\int_s^\tau  \Gamma(r)dr\right) ds \\
&\le \sup_{s \le \tau} \left( \frac{|\dot{t}(s)| \lVert \partial_{t} \log \rho_{t(s)} \rVert_{L^2 (\rho_{t(s)} )}}{\Gamma(s)} \right) \left[ 1 - e^{-\int_0^\tau  \Gamma(r)dr} \right] \\
&\le \sup_{s \le \tau} \frac{|\dot{t}(s)| \lVert \partial_{t} \log \rho_{t(s)} \rVert_{L^2 (\rho_{t(s)} )}}{\Gamma(s)}.
\end{align*}
Thus, the total tracking error is bounded by the sum of the exponentially decaying initial condition and the adiabatic tracking error, concluding the proof. \hfill $\square$

\begin{remark}[$L^{2}$ to $\chi^{2}$]
For the translation of the $L^2$ bound into classical probability divergences, substitute $\Psi_\tau = \mu_\tau / \sqrt{\rho_{t_{\tau}}}$ and note that the $L^2(dx)$ distance on the amplitudes is identical to the $\chi^2$ divergence: 
\begin{align*}
&\left\lVert \frac{\mu_\tau}{\sqrt{\rho_{t_{\tau}}}} - \sqrt{\rho_{t_{\tau}}} \,\right\rVert_{L^2(dx)}^{2} \\
&= \int \left( \frac{\mu_\tau - \rho_{t_{\tau}}}{\sqrt{\rho_{t_{\tau}}}} \right)^{2} dx = \chi^{2}(\mu_\tau \| \rho_{t_{\tau}}) .\end{align*}
\end{remark}

\begin{corollary}[]\label{cor:lsi_form} Suppose the target sequence $(\rho_{t})_{t\in[0,T]}$ satisfies a logarithmic Sobolev inequality (LSI) and that, for some $\lambda_{t}>0$, $\int e^{\lambda_{t}M_{t,-}} \, d\rho_{t}<\infty.$ Then, $M_t$ satisfies the relative form-boundedness condition in Thm.~\ref{thrm:adiabatic_FK}.
\end{corollary}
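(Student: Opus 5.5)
The plan is to derive the form bound \eqref{eq:form-bound} from the entropy (Gibbs) variational inequality combined with the LSI, after passing through the ground-state transform. Fix $t$ and write $u = f\sqrt{\rho_t}$, so that $f = \widehat U_{\rho_t}^{-1}u \in L^2(\rho_t)$ with $\|u\|_{L^2}^2 = \int f^2\,d\rho_t$. By the unitary equivalence $\widehat H_t = \widehat U_\rho(-L_t)\widehat U_\rho^{-1}$ recalled in Appendix~\ref{appendix:technical},
\[
\langle u,\widehat H_t u\rangle_{L^2} = \int \|\nabla f\|^2\,d\rho_t ,
\qquad
\langle u, M_{t,-}u\rangle_{L^2} = \int M_{t,-} f^2\,d\rho_t .
\]
The claim therefore reduces to bounding $\int M_{t,-} f^2\,d\rho_t$ by a multiple of the Dirichlet energy plus a multiple of $\int f^2\,d\rho_t$. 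This bound holds for all $f$, so the orthogonality restriction is not even needed, although it can be used afterwards to absorb constants through the Poincar\'e inequality.

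The key step is the entropy inequality. Normalize $\int f^2\,d\rho_t = 1$, so that $f^2\rho_t$ is a probability density. The Donsker--Varadhan (Young) inequality gives, for any $\lambda>0$ with $\int e^{\lambda M_{t,-}}\,d\rho_t<\infty$,
\[
\lambda\int M_{t,-} f^2\,d\rho_t \le \mathrm{Ent}_{\rho_t}(f^2) + \log\int e^{\lambda M_{t,-}}\,d\rho_t .
\]
Applying the LSI with constant $C_{\rm LS}(t)$, namely $\mathrm{Ent}_{\rho_t}(f^2)\le 2C_{\rm LS}(t)\int\|\nabla f\|^2\,d\rho_t$, we take $\lambda=\lambda_t$ from the hypothesis and remove the normalization by homogeneity. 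This yields \eqref{eq:form-bound} with
\[
a(t) = \frac{2C_{\rm LS}(t)}{\lambda_t},
\qquad
b(t) = \frac{1}{\lambda_t}\log\int e^{\lambda_t M_{t,-}}\,d\rho_t .
\]
Both constants are finite by assumption, and $b(t)\ge 0$ since $M_{t,-}\ge0$.

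Some technical points need checking. The identity between the quadratic forms requires $f$ in the form domain, where the integration by parts $\langle f\sqrt\rho, \widehat H (f\sqrt\rho)\rangle = \int |\nabla f|^2\rho$ is justified; I would first verify it for $f\in C_c^\infty$ and then extend by density, using essential self-adjointness from Prop.~\ref{prop:score_hamiltonian_props}. The LSI must also be applied to a possibly unbounded $f\in W^{1,2}(\rho_t)$, which is handled by truncation and monotone convergence. I expect the main obstacle to be this density/closure argument, together with making sure the form-domain core is the same for $\widehat H_t$ and for the multiplication form of $M_{t,-}$; the algebraic core of the proof is the two-line estimate above.

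A remark to include is that a smaller $a(t)$ can be obtained by choosing $\lambda$ larger whenever larger exponential moments exist. This trade-off shows how the effective gap $\Gamma$ in Thm.~\ref{thrm:adiabatic_FK} depends on the integrability of $M_{t,-}$.
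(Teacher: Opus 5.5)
Your proposal is correct and matches the paper's proof. Both use the ground-state transform $u=f\sqrt{\rho_t}$, then the entropy (Gibbs/Donsker--Varadhan) inequality with $g=\lambda_t M_{t,-}$, then the LSI, and arrive at the same constants $a(t)=2C_{\mathrm{LSI}}(t)/\lambda_t$ and $b(t)=\lambda_t^{-1}\log\int e^{\lambda_t M_{t,-}}\,d\rho_t$. Your side remarks are accurate points the paper leaves implicit: orthogonality to $\sqrt{\rho_t}$ is never used, $b(t)\ge 0$, the form-domain and truncation steps need justifying, and there is an $a$--$b$ trade-off in the choice of $\lambda$.
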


\begin{proof}
For the controlled Gauge term $M_{t}=\frac{1}{2} \partial_{t} \log \rho_{t}$, and $\chi \perp \psi_{0}$ with $\chi = f\sqrt{\rho_{t}}$ for $f$ a mean-zero function, let $M_{t,-} := \max \,\{ -M_{t},0 \}$. One then observes
\[
-\int_{\mathbb{R}^{d}} M_{t} \chi^{2} \leq \int_{\mathbb{R}^{d}} M_{t,-}\, \chi^{2} = \int_{\mathbb{R}^{d}} M_{t,-}\, f^{2} d\rho_{t}
\]
Recall the entropic inequality $\int u g \,d\nu \leq \mathrm{Ent}_{\nu}(u) + \int u \,d\nu \log\left( \int e^{g}d\nu\right)$ for $u\geq0$, $\nu$ a probability measure, and $g$ measurable with $\int e^{g} \,d\nu < \infty$ (see, e.g. Equation 5.1.2 of \cite{Bakry2014}). As $u=f^{2}\geq 0$ and $\rho_{t}$ is a probability measure, for $g=\lambda_{t}M_{t,-}$ and $\lambda_{t}>0$ we have
\begin{align*}
&\int_{\mathbb{R}^{d}} M_{t,-}\, f^{2} \rho_{t} \leq \frac{1}{\lambda_{t}}\mathrm{Ent}_{\rho_{t}}(f^{2}) \\
&+ \frac{1}{\lambda_{t}}\int f^{2} \, d\rho_{t} \log
\int e^{\lambda_{t}M_{t,-}} \, d\rho_{t} .
\end{align*}
Recall $\int \lVert \nabla f \rVert_{2}^{2} d\rho_{t}  = \langle \chi | \widehat{H}_{\tau} | \chi \rangle$ and $\int f^{2} d\rho_{t}  = \langle \chi \,| \,\chi \rangle_{L^{2}}$. By Log-Sobolev inequality (Def. 5.1.1~\cite{Bakry2014}), we then have $\mathrm{Ent}_{\rho_{t}}(f^{2}) \leq 2 C_{\mathrm{LSI}}(t) \lVert \nabla f \rVert_{L^{2}(\rho_{t})}^{2}$ and thus
\begin{align*}
&\leq \frac{2}{\lambda_{t}} C_{\mathrm{LSI}}(t) \lVert \nabla f \rVert_{L^{2}(\rho_{t})}^{2} +  \frac{1}{\lambda_{t}}\int f^{2} \, d\rho_{t}\log
\int e^{\lambda_{t}M_{t,-}} \, d\rho_{t} \\
&=\frac{2C_{\mathrm{LSI}}(t)}{\lambda_{t}} \langle \chi | \widehat{H}_{\tau} | \chi \rangle + \frac{1}{\lambda_{t}}\log
\int e^{\lambda_{t}M_{t,-}} \, d\rho_{t} \langle \chi \,| \,\chi \rangle_{L^{2}}
\end{align*}
which yields the desired relative form bound with $a(t)=\frac{2C_{\mathrm{LSI}}(t)}{\lambda_{t}}$ and $b(t) = \frac{1}{\lambda_{t}}\log\left(
\int e^{\lambda_{t}M_{t,-}} \, d\rho_{t}\right)$.
\end{proof}

\begin{remark}
Alternatively, if $\rho_{t}$ satisfies a weighted $W^{1,2} \to L^{4}$ Sobolev inequality, then $M_{t,-} \in L^{2}(\rho_{t})$ ensures the required form-bound by applying Cauchy-Schwarz,
\[
\int_{\mathbb{R}^{d}} M_{t,-}\, f^{2} \rho_{t} \leq \lVert M_{t,-} \rVert_{L^{2}(\rho_{t})} \,\lVert f \rVert_{L^{4}(\rho_{t})}^{2}
\]
The Sobolev-inequality gives
\[
\lVert f \rVert_{L^{4}(\rho_{t})}^{2} \leq C_{\mathrm{SI}}(t)  \int ( \lVert  \nabla f \rVert_{2}^{2} + f^{2} )d\rho_{t}\,, 
\]
so that in this case one concludes
\begin{align*}
\int_{\mathbb{R}^{d}} M_{t,-}\, f^{2} \rho_{t} &\leq \lVert M_{t,-} \rVert_{L^{2}(\rho_{t})} \,C_{\mathrm{SI}}(t) \langle \chi | \widehat{H}_{\tau} | \chi \rangle \\
&+ \lVert M_{t,-} \rVert_{L^{2}(\rho_{t})} \,C_{\mathrm{SI}}(t) \langle \chi | \chi \rangle,
\end{align*}
which implies form-boundedness with $a(t),b(t) = \lVert M_{t,-} \rVert_{L^{2}(\rho_{t})} \,C_{\mathrm{SI}}(t)$. As $f$ is mean-zero, via Poincar{\'e} inequality one may take $a(t) = \lVert M_{t,-} \rVert_{L^{2}(\rho_{t})} \,C$  for $C=C_{\mathrm{SI}}(t)(1 +\Delta^{-1})$ and $b(t) =0$.
\end{remark}

We seek to determine whether the lag represents a sharp and fundamental physical limit. To do so, we introduce a formal parameter $\epsilon$ to separate time-scales $s=\epsilon \tau$ and perform an asymptotic expansion of the error in Prop.~\ref{prop:LB_tight}. Identifying the first-order quasi steady-state balance requires appropriate form-norm control. Assuming the quasi-steady form-norm scaling below, the entropy-LSI argument of Corollary~\ref{cor:lsi_form} together with exponential integrability of $|\partial_{t} \log \rho_{t}|^{2}$ ensures the moving-frame cross-term is of second-order.

\begin{proposition}\label{prop:LB_tight}
    Suppose $\Psi_{\tau} = \mu_{\tau}/\sqrt{\rho_{t_{\tau}}}$ is the density-ratio of amplitudes of the time-inhomogeneous Fokker-Planck solution $\mu_{\tau}$ to its instantaneous equilibrium $\sqrt{\rho_{t_{\tau}}}$ and that the assumptions of Thm.~\ref{thrm:adiabatic_FK} hold. Suppose additionally that $\rho_t$ satisfies a Log-Sobolev Inequality \cite{Bakry2014} and that $\partial_{t} \log \rho_{t}$ satisfies for some $\lambda_{t}>0$ $\int e^{\lambda_{t} |\partial_{t} \log \rho_{t}|^{2}}d\rho_{t}<\infty$ uniformly on $t\in [0,T]$. Suppose the potential $Q$ is confining in the sense of Prop.~\ref{prop:score_hamiltonian_props}, so that Property~\ref{cond:pure_pt} holds. Let
    \[
    \beta(t) = 2 \lVert \widehat{H}_{\perp}^{-1} \partial_{t}\psi_{0} \rVert_{L^{2}} = \lVert \widehat{H}_{\perp}^{-1} (\partial_{t}\log \rho_{t} )\,\psi_{0}(t) \rVert_{L^{2}}
    \]
    In the adiabatic quasi steady-state regime, the tracking error satisfies
    \begin{align*}
    &\lVert \Psi_{\tau} - \psi_{0}(t_{\tau}) \rVert_{L^{2}}
=\,|\dot{t}|\,\beta(t_{\tau}) + \mathcal{O}(\epsilon^{2}), \\
&|\dot{t}|\beta(t_{\tau}) \leq |\dot{t}|\,\frac{\lVert \partial_{t}\log\rho_{t(s)} \rVert_{L^{2}(\rho_{t})}}{\Delta_t} .
\end{align*}
    Further assuming a gap-dominated regime, where a bounded fraction $c\in(0,1]$ is in the first excited subspace: $\sum_{k\,:\,E_{k}=E_{1}}|\langle \partial_{t}\psi_{0}, \psi_{k} \rangle|^{2} \geq c^{2} \lVert \partial_{t}\psi_{0} \rVert_{L^{2}}^{2}$. Then,
     \begin{align*}
    &c|\dot{t}| \,\frac{\lVert \partial_{t}\log\rho_{t(s)} \rVert_{L^{2}(\rho_{t})}}{\Delta_t} \\
    &\leq |\dot{t}|
    \beta(t)  \leq |\dot{t}|\,\frac{\lVert \partial_{t}\log\rho_{t(s)} \rVert_{L^{2}(\rho_{t})}}{\Delta_t} 
    \end{align*}
    Consequently, for $\chi_{\tau} = \Psi_{\tau} - \psi_{0}(t_{\tau})$, given that the quasi-steady state regime provides $\lVert \chi_{\tau} \rVert_{L^2}^{2} + \langle \chi_{\tau} | \widehat{H}_{\tau} |\chi_{\tau} \rangle = \mathcal{O}(\dot{t}^{2})$ and the quasi-steady state remainder $r_{\epsilon}(t) = \mathcal{O}_{t}(\epsilon^{2})$ in $\chi_\tau=-2\dot t\,\widehat H_\perp^{-1}\partial_t\psi_0+r_\epsilon(t)$ satisfies $\| r_{\epsilon}(t)\|_{L^{2}}=o\left(|\dot{t}|\,\frac{\lVert \partial_{t}\log\rho_{t(s)} \rVert_{L^{2}(\rho_{t})}}{\Delta_t} \right)$, the tracking error obeys 
\[
|\dot{t}|\,\frac{\lVert \partial_{t}\log\rho_{t(s)} \rVert_{L^{2}(\rho_{t})}}{\Delta_t} \asymp \lVert \Psi_{\tau} - \psi_{0}(t_{\tau}) \rVert_{L^{2}}
\]
so that the inverse-gap dependence is asymptotically sharp up to a constant factor.
\end{proposition}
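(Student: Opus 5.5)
We plan a quasi-steady-state expansion of the lag equation from the proof of Thm.~\ref{thrm:adiabatic_FK}, followed by a spectral evaluation of the leading term.

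\textbf{Step 1 (exact lag equation).} Write $\chi_\tau=\Psi_\tau-\psi_0(t_\tau)$, with $\chi_\tau\perp\psi_0(t_\tau)$. From \eqref{eq:ground-state-transform-dynamics}, $\widehat H_t\psi_0=0$, and $\widehat G_\tau\psi_0=\dot t\,\partial_t\psi_0$ (since $\tfrac12\partial_t\log\rho_t\,\sqrt{\rho_t}=\partial_t\sqrt{\rho_t}$), we get
\[
\partial_\tau\chi_\tau=-\widehat H_{t_\tau}\chi_\tau-\widehat G_\tau\chi_\tau-2\dot t\,\partial_t\psi_0 .
\]
Here $\partial_t\psi_0\perp\psi_0$, because $\|\psi_0\|=1$. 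So the forcing lies in the range of the reduced resolvent $\widehat H_\perp^{-1}$, which is the inverse of $\widehat H$ on $\psi_0^\perp$. This operator is bounded by $\Delta_t^{-1}$ by Prop.~\ref{prop:score_hamiltonian_props}(\ref{cond:pure_pt}).

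\textbf{Step 2 (quasi-steady balance).} Introduce the slow time $s=\epsilon\tau$, so that $\dot t=O(\epsilon)$. In the quasi-steady regime $\partial_\tau\chi_\tau=O(\epsilon^2)$, because $\chi$ is itself $O(\epsilon)$ and varies on the slow scale. The cross term $\widehat G_\tau\chi_\tau$ also has to be second order. To show this, pair it against a test function $\phi$ and use the entropic inequality with LSI exactly as in Cor.~\ref{cor:lsi_form}, now with weight $g=\lambda_t|\partial_t\log\rho_t|^2$:
\[
|\langle\phi,\widehat G\chi\rangle|\lesssim|\dot t|\big(\|\chi\|^2+\langle\chi,\widehat H\chi\rangle\big)^{1/2}\big(\|\phi\|^2+\langle\phi,\widehat H\phi\rangle\big)^{1/2},
\]
which is $O(\dot t^2)$ under the assumed form-norm scaling. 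Inverting $\widehat H_\perp$ on $\psi_0^\perp$ then gives $\chi_\tau=-2\dot t\,\widehat H_\perp^{-1}\partial_t\psi_0+r_\epsilon$ with $r_\epsilon=O(\epsilon^2)$. Taking norms yields $\|\chi_\tau\|=|\dot t|\beta(t_\tau)+O(\epsilon^2)$. I expect this step to be the main obstacle. It needs the confining/pure-point hypothesis, so that the resolvent expansion is justified, and it needs uniformity in $t$, which comes from the uniform exponential integrability.

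\textbf{Step 3 (two-sided spectral bounds).} Expand $\partial_t\psi_0=\sum_{k\ge1}a_k\psi_k$ in the eigenbasis from Prop.~\ref{prop:score_hamiltonian_props}(\ref{cond:pure_pt}). Then
\[
\beta^2=4\sum_{k\ge1}E_k^{-2}|a_k|^2 .
\]
For the upper bound, use $E_k\ge\Delta_t$ together with $2\|\partial_t\psi_0\|=\|\partial_t\log\rho_t\|_{L^2(\rho_t)}$, which gives $\beta\le\|\partial_t\log\rho_t\|_{L^2(\rho_t)}/\Delta_t$. For the lower bound, keep only the modes with $E_k=E_1=\Delta_t$ and apply the gap-dominance hypothesis, which gives $\beta\ge c\,\|\partial_t\log\rho_t\|_{L^2(\rho_t)}/\Delta_t$.

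\textbf{Step 4 (conclusion).} By the triangle inequality,
\[
\big|\,\|\chi_\tau\|-|\dot t|\beta\,\big|\le\|r_\epsilon\|=o\big(|\dot t|\,\|\partial_t\log\rho_t\|_{L^2(\rho_t)}/\Delta_t\big).
\]
Combining this with Step 3 gives the claimed $\asymp$, with constants $c$ and $1$ up to vanishing corrections. This matches the adiabatic lag term in Thm.~\ref{thrm:adiabatic_FK}, since $\Gamma\to\Delta$ as $\dot t\to0$.
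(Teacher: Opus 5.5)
Your proposal follows the paper's proof essentially step for step: the same lag equation $2\dot t\,\partial_t\psi_0+\partial_\tau\chi_\tau=-\widehat H_\perp\chi_\tau-\widehat G_\tau\chi_\tau$, the same slow-time expansion leading to the first-order resolvent balance, the same LSI/entropic-inequality control of the cross term, and the same eigen-expansion giving the two-sided bounds on $\beta$. The only difference is that you control $\widehat G_\tau\chi_\tau$ weakly, by pairing it with test functions; this suffices because $\widehat H_\perp^{-1}$ maps the dual of the form domain boundedly into $L^2$, a step you should spell out. The paper instead bounds $\|\widehat G_\tau\chi_\tau\|_{L^2}$ directly with the same entropic inequality.
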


\begin{proof}[Proof of Prop.~\ref{prop:LB_tight}]
As before, let $\Psi_{\tau} = \mu_{\tau}/\sqrt{\rho_{t_{\tau}}}$ with decomposition $\Psi_{\tau} = \psi_{0}(t_{\tau}) + \chi_{\tau}$. The moving-frame evolution is written as
\[
\partial_{\tau}\Psi_{\tau} = - \widehat{H}_{\tau} \Psi_{\tau} - \widehat{G}_{\tau} \Psi_{\tau}
\]
and, by the decomposition into the ground-state and orthogonal error,
\[
\dot{t}\partial_{t}\psi_{0} + \partial_{\tau} \chi_{\tau} = - \widehat{H}_{\perp} \chi_{\tau} - \widehat{G}_{\tau}\psi_{0} - \widehat{G}_{\tau}\chi_{\tau}
\]
Where $\widehat{G}_{\tau}\psi_{0} = \dot{t} \partial_{t} \psi_{0}$. Reorganizing terms, we see
\begin{align*}
2\dot{t}\partial_{t}\psi_{0} + \partial_{\tau} \chi_{\tau} =- \widehat{H}_{\perp} \chi_{\tau}- \widehat{G}_{\tau} \chi_{\tau}
\end{align*}
By the global bound of Thm.~\ref{thrm:adiabatic_FK}, we know that $\lVert \chi_{\tau} \rVert_{L^{2}} = \mathcal{O}(\dot{t})$. However, dropping $\widehat{G}_\tau \chi_\tau$ from the asymptotic limit requires strong $L^2$ convergence. To explicitly bound the $L^2(dx)$ norm of this cross-term, we substitute $\chi_\tau^2 = f^2 \rho_t$ 
\begin{align*}
&\|\widehat{G}_\tau \chi_\tau\|_{L^2(dx)}^2 = \frac{\dot{t}^2}{4} \int (\partial_t \log \rho_t)^2 f^2 \rho_t \, dx
\end{align*}
Applying the entropy inequality used in Corollary~\ref{cor:lsi_form}, with $a_{t}=\partial_{t} \log \rho_{t}$ gives 
\begin{align*}
&\int a_{t}^{2} f^{2}\, d\rho_{t} \leq \frac{2C_{\mathrm{LSI}}(t)}{\lambda_{t}} \langle \chi_{\tau} | \widehat{H}_{\tau} |\chi_{\tau} \rangle \\
&+ \frac{1}{\lambda_{t}} \log \int e^{\lambda_{t}a_{t}^{2}} d\rho_{t} \cdot \lVert \chi_{\tau} \rVert_{L^2}^{2}
\end{align*}
Hence, with the quasi-steady regime, we conclude that $\lVert \widehat{G}_{\tau} \chi_{\tau} \rVert_{L^{2}} = \mathcal{O}(\dot{t}^{2})$, assuming the coefficients $\frac{1}{\lambda_{t}} \log \int e^{\lambda_{t}a_{t}^{2}} d\rho_{t}$ and $\frac{2C_{\mathrm{LSI}}(t)}{\lambda_{t}}$ remain bounded in the regime considered.

To yield an asymptotic quasi-steady state bound we separate time-scales by introducing a ``slow'' macroscopic time $s\in[0,1]$ where $s = \epsilon \tau$, $\partial_{\tau} = \epsilon \partial_{s}$. Then $\dot{t} = \epsilon t'(s) = \mathcal{O}(\epsilon)$ and $\ddot{t} = \mathcal{O}(\epsilon^{2})$.
We take the asymptotic series expanded in powers of $\epsilon$:
\[
\chi_{\tau} = \epsilon \chi_{1}(s) + \epsilon^{2} \chi_{2}(s) + \cdots
\]
taking the $\tau$ partial, see $\partial_{\tau}\chi_{\tau} = \mathcal{O}(\epsilon^{2})$.
Returning to our balance equation,
\begin{align*}
&2\epsilon t' \partial_{t}\psi_{0} + \mathcal{O}(\epsilon^{2}) =- \epsilon \widehat{H}_{\perp} \chi_{1} + \mathcal{O}(\epsilon^{2})
\end{align*}
By collecting the $\mathcal{O}(\epsilon)$ terms which define the first-order balance, $\widehat{H}_{\perp} \chi_{1} = -2t'(s) \partial_{t}\psi_{0}$, and thus in $L^{2}(dx)$
\[
\widehat{H}_{\perp} \chi_{\tau} = -2\dot{t} \partial_{t}\psi_{0} +\mathcal{O}(\epsilon^{2}).
\]
Applying the resolvent on the excited subspace, we have 
\[
\chi_{\tau} = -2\dot{t} \widehat{H}_{\perp}^{-1} \partial_{t}\psi_{0}+r_{\epsilon}(t)
\]
in $L^{2}(dx)$ with remainder $r_{\epsilon}(t)$ satisfying $\|r_{\epsilon}(t)\|_{L^2} = \mathcal{O}_{t}(\epsilon^{2}).$ Hence, we have the relation for the first-order lag
\[
\lVert \chi_{\tau} \rVert_{L^{2}}^{2} = |\dot{t}|^{2} \beta(t)^{2} + \mathcal{O}_{t}(\epsilon^{3})= 4 |\dot{t}|^{2} \lVert \widehat{H}_{\perp}^{-1} \partial_{t}\psi_{0} \rVert_{L^{2}}^{2} + \mathcal{O}_{t}(\epsilon^{3}).
\]
Recall that $\partial_{t}\psi_{0} = \sum_{k\geq 1} \psi_{k} \langle \partial_{t}\psi_{0} | \psi_{k} \rangle$. Thus,
\begin{align*}
&\lVert \widehat{H}_{\perp}^{-1} \partial_{t}\psi_{0} \rVert_{L^{2}}^{2}= \sum_{k \geq 1} \frac{|\langle \partial_{t} \psi_{0}|\psi_{k} \rangle |^{2}}{E_{k}^{2}} \\
&\leq \frac{1}{\Delta_t^{2}} \lVert \partial_{t}\psi_{0} \rVert_{L^{2}}^{2} = \frac{\lVert \partial_{t}\log \rho_{t} \rVert_{L^{2}(\rho_{t})}^{2}}{4\Delta_t^{2}} 
\end{align*}
Which implies
\[
|\dot{t}|\beta(t) \leq  |\dot{t}|\,\frac{\lVert \partial_{t}\log \rho_{t}\rVert_{L^{2}(\rho_{t})}}{\Delta_t} 
\]
For the lower-bound, suppose a gap-dominated regime:
\[
\sum_{k\,:\,E_{k}=E_{1}}|\langle \partial_{t}\psi_{0}, \psi_{k} \rangle|^{2} \geq c^{2} \lVert \partial_{t}\psi_{0} \rVert_{L^{2}}^{2}.
\]
given the assumption of gap-domination, one finds
\begin{align*}
&|\dot{t}|^{2}\beta(t)^{2} = 4 |\dot{t}|^{2}\sum_{k \geq 1} \frac{|\langle \partial_{t} \psi_{0}|\psi_{k} \rangle |^{2}}{E_{k}^{2}} 
\\
&\geq \frac{4 c^{2}|\dot{t}|^{2}}{\Delta_t^{2}} \lVert \partial_{t}\psi_{0} \rVert_{L^{2}}^{2} 
= c^{2}|\dot{t}|^{2}\, \frac{\lVert \partial_{t}\log\rho_{t(s)} \rVert_{L^{2}}^{2}}{\Delta_t^{2}}.
\end{align*}
Thus,
\begin{align*}
    |\dot{t}|\,\beta(t) \geq  \,c|\dot{t}|\,\frac{\lVert \partial_{t}\log\rho_{t(s)} \rVert_{L^{2}}}{\Delta_t}
\end{align*}
Since $\lVert \Psi_{\tau} - \psi_{0}(t_{\tau}) \rVert_{L^{2}} = \lVert \chi_{\tau} \rVert_{L^{2}} =|\dot{t}|\beta(t) + \mathcal{O}(\epsilon^{2})$, one concludes
\begin{align*}
&|\dot{t}|\,\frac{\lVert \partial_{t}\log\rho_{t(s)} \rVert_{L^{2}(\rho_{t})}}{\Delta_t} + \mathcal{O}(\epsilon^{2}) \geq \lVert \Psi_{\tau} - \psi_{0}(t_{\tau}) \rVert_{L^{2}} \\
&\geq  \,c|\dot{t}|\,\frac{\lVert \partial_{t}\log\rho_{t(s)} \rVert_{L^{2}(\rho_{t})}}{\Delta_t} - \mathcal{O}(\epsilon^{2})
\end{align*}
Provided $c$ remains bounded away from zero and the $\mathcal{O}_{t}(\epsilon^{2})$ remainder is $\| r_{\epsilon}(t)\|_{L^{2}}=o\left(|\dot{t}|\,\frac{\lVert \partial_{t}\log\rho_{t(s)} \rVert_{L^{2}(\rho_{t})}}{\Delta_t} \right)$, we obtain asymptotic two-sided scaling in the gap-dominated and quasi-steady state regime:
\[
|\dot{t}|\,\frac{\lVert \partial_{t}\log\rho_{t(s)} \rVert_{L^{2}(\rho_{t})}}{\Delta_t} \asymp  \lVert \Psi_{\tau} - \psi_{0}(t_{\tau}) \rVert_{L^{2}}.
\]
\end{proof}

\begin{example}[Translated Gaussian gap-saturation]
In one dimension, take $\rho_t = \sqrt{\frac{\kappa}{2\pi}}e^{-\kappa/2 (x - m(t))^{2}}$ with Hamiltonian $\widehat{H}_{t} = - \partial_{x}^{2} + \frac{\kappa^{2}}{4}(x-m(t))^{2} - \frac{\kappa}{2}$. The spectral gap is simply $\Delta_t = \kappa$, the ground-state is $\psi_{0}(t,x) = \sqrt{\rho_{t}}$, and the first excited state is 
\[
\psi_{1}(t,x) = \sqrt{\kappa}(x - m(t))\, \psi_{0}(t,x).
\]

Differentiating $\psi_{0}$, one finds
\begin{align*}
&\partial_{t}\psi_{0} = \frac{1}{2 \sqrt{\rho_{t}}} \frac{\partial}{\partial t}\sqrt{\frac{\kappa}{2\pi}}e^{-\kappa/2 (x - m(t))^{2}} \\
&= \frac{\sqrt{\rho_{t}}\kappa(x-m(t)) \partial_{t}m(t)}{2} = \frac{\sqrt{\kappa}\,\partial_{t}m(t)}{2} \psi_{1}.
\end{align*}
Thus, $\partial_{t}\psi_{0}$ lies entirely in the first excited eigenspace, and gap domination holds with $c=1$. As $\widehat{H}_\perp \psi_{1} = \kappa\psi_{1}$ we have $\widehat{H}_\perp^{-1}\psi_{1} = \kappa^{-1}\psi_{1}$ so that
\begin{align*}
&\beta(t)
=
2\|\widehat H_\perp^{-1}\partial_t\psi_0\|_{L^2} 
\\
&= \sqrt{\kappa}| \partial_{t}m(t)|\|\widehat H_\perp^{-1} \psi_{1}\|_{L^2}\\
&= \frac{| \partial_{t}m(t)|}{\sqrt{\kappa} }.
\end{align*}
Since $\partial_t\log\rho_t = \kappa (x-m(t)) \partial_{t}m(t)$, one has
\begin{align*}
&\|\partial_t\log\rho_t \|_{L^{2}(\rho_{t})} \\
&= \kappa|\partial_{t}m(t)| \left(\int (x-m(t))^{2} d\rho_{t}\right)^{1/2} \\
&= {\kappa |\partial_{t}m(t)|}/{\sqrt{\kappa}} = \sqrt{\kappa}|\partial_{t}m(t)|,
\end{align*}
so $|\partial_{t}m(t)| = \|\partial_t\log\rho_t \|_{L^{2}(\rho_{t})}/\sqrt{\kappa}$ and
\begin{align*}
&\beta(t)=\frac{\|\partial_t\log\rho_t\|_{L^2(\rho_t)}}{\kappa} = \frac{\|\partial_t\log\rho_t\|_{L^2(\rho_t)}}{\Delta_{t}}.
\end{align*}
Thus, the spectral upper and lower coefficient bounds exactly coincide, so the spectral estimate is saturated at leading order.
\end{example}

\begin{remark}[ Witten Laplacian]\label{rem:witt_lap}
The $0$-form Witten Laplacian \cite{Witten1982, Helffer2005} is constructed from a pre-defined potential $\phi$ via $\widehat{H}_{W}=-\nabla^{2}+ |\nabla \phi |^{2}- \nabla^{2} \phi$. It is the classic mapping of Langevin dynamics on a potential landscape $\phi$ into a unitarily equivalent \cite{setti} Schr{\"o}dinger operator $\widehat{H}_{W}$ that embeds the amplitude of the Gibbs density $\rho_{t}$ as its ground-state $\widehat{H}_{W}\sqrt{\rho_{t}}=0$. 
\end{remark}

\begin{remark}[Nodal Stability of the Information and Score Hamiltonians]\label{rem:fisher_v_score}
We remark on the distinction between the Information and Score Hamiltonian in Definition~\ref{def:ScoreHamiltonian}. As $f=-\frac{1}{2}\ln \rho$ is a mapping from a density to the thermodynamic potential, at nodes $\rho \downarrow 0$ this diverges. For instance, the density $\rho=|\psi|^{2}$ near a simple node crossing zero is locally parabolic $\sim cx^{2}$ -- for $(1/4)|S|^{2}+(1/2)\nabla\cdot S \propto -x^{-2}+x^{-2}$ this involves the difference of two divergent terms $+\infty-\infty$, whereas $\tilde{\psi} = cx$ so that $\nabla^{2}\tilde{\psi}/\tilde{\psi} = \frac{0}{cx}$ which is $\equiv 0$ away from zero and admits a removable extension $0$ at the node.
This illustrates the potential remains well-defined and finite at a removable singularity -- even when the score-representation involves cancellation of divergent terms.
\end{remark}

The unitary equivalence of the Score-Hamiltonian and the Langevin generator follows the expansion of the density in the Bohm potential in terms of the classical potential $\phi$ to yield the Witten Laplacian \cite{Witten1982, setti}. We extend this to the generalized Information-Hamiltonian (the amplitude-form of Definition~\ref{def:ScoreHamiltonian}), which is defined through the quantum potential on a general (signed) amplitude $\psi \in L^{2}$.

\begin{theorem}[Unitary Equivalence of the Information-Hamiltonian to the Langevin Generator on Amplitude-form Score]\label{thrm:BH_LG_equiv_amplitude}
Let $\tilde{\psi}$ be a real-valued, sufficiently regular amplitude with $\rho = |\tilde{\psi}|^{2}$ a probability density, and assume its nodal set has Lebesgue measure zero. Suppose the operator
    \[\widehat{H} = -\nabla^{2} + \frac{\nabla^{2}\tilde{\psi}}{\tilde{\psi}}\]
admits a self-adjoint realization on $L^{2}(dx)$, with nodal singularities interpreted through the assumed extension of $\frac{\nabla^{2}\tilde{\psi}}{\tilde{\psi}}$.

Define the multiplication operator $\hat{U}: L^{2} (\rho) \to L^{2}(dx)$ by $\hat{U} f = f \tilde{\psi}$, and define $\mathcal{L} = -\hat{U}^{-1}\,\widehat{H}\, \hat{U}$ where the operator domain satisfies $D(\mathcal{L}) = \hat{U}^{-1}D(\widehat{H})$. Then, $\hat{U}$ is unitary with $\widehat{H} = \hat{U}(-\mathcal{L}) \hat{U}^{-1}$.

Moreover, for smooth test functions $f$ supported away from the nodal set, $\mathcal{L}$ has differential representation $\mathcal{L}f = \nabla^{2}f + S_{*}\cdot \nabla f$
for the amplitude-form score $S_{*}$
\begin{equation}\label{eq:score_amplitude}
S_{*} = 2\,\frac{\nabla \tilde{\psi}}{\tilde{\psi}}.
\end{equation}
When $\tilde{\psi}$ is nodeless and has fixed sign, 
\[
S_{*} = \nabla \rho/\rho = \nabla \log \rho,
\]
and $\mathcal{L}$ is the standard Langevin generator. For a signed nodal amplitude, the Langevin representation holds on each nodal domain, while the global operator is defined by the above unitary conjugation.
\end{theorem}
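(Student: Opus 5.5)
The argument splits into three parts: unitarity of $\hat U$, the operator identity (which is then essentially a definition), and a local computation of the differential representation of $\mathcal{L}$.

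\emph{Unitarity.} For $f\in L^2(\rho)$ we have
\[
\|\hat U f\|_{L^2(dx)}^2=\int f^2\tilde\psi^2\,dx=\int f^2\,d\rho ,
\]
so $\hat U$ is an isometry. For surjectivity, take $g\in L^2(dx)$ and set $f=g/\tilde\psi$ off the nodal set $\mathcal N=\{\tilde\psi=0\}$. Since $|\mathcal N|=0$, $f$ is defined $\rho$-a.e. (and $\rho(\mathcal N)=0$), and $\hat Uf=g$ a.e. Hence $\hat U$ is unitary with $\hat U^{-1}g=g/\tilde\psi$. With $D(\mathcal L)=\hat U^{-1}D(\widehat H)$ and $\mathcal L=-\hat U^{-1}\widehat H\hat U$, the identity $\widehat H=\hat U(-\mathcal L)\hat U^{-1}$ follows by composing with $\hat U$ and $\hat U^{-1}$ on the appropriate domains. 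Unitary conjugation also preserves self-adjointness, so $-\mathcal L$ is self-adjoint on $L^2(\rho)$.

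\emph{Differential representation.} Take $f\in C_c^\infty$ with support in a nodal domain $\Omega$, so that $\tilde\psi\neq0$ and $\tilde\psi$ is smooth on $\operatorname{supp} f$. The product rule gives
\[
\nabla^2(f\tilde\psi)=\tilde\psi\nabla^2 f+2\nabla f\cdot\nabla\tilde\psi+f\nabla^2\tilde\psi .
\]
Applying $\widehat H$ to $\hat Uf$, the potential term contributes $+f\nabla^2\tilde\psi$, which cancels the last term. This leaves
\[
\widehat H(f\tilde\psi)=-\tilde\psi\nabla^2 f-2\nabla f\cdot\nabla\tilde\psi ,
\]
and dividing by $-\tilde\psi$ yields $\mathcal Lf=\nabla^2 f+S_*\cdot\nabla f$ with $S_*=2\nabla\tilde\psi/\tilde\psi$. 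This is exactly the Witten-Laplacian computation of the preliminaries, but carried out on the amplitude rather than on a potential $\varphi$.

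\emph{Nodeless and signed cases.} If $\tilde\psi$ has fixed sign, then $\tilde\psi=\pm\sqrt\rho$, and
\[
2\nabla\tilde\psi/\tilde\psi=2\nabla\log|\tilde\psi|=\nabla\log\rho ,
\]
so $\mathcal L$ is the standard Langevin generator of \eqref{eq:learned-langevin}. For a signed amplitude the same identity $S_*=\nabla\log\rho$ holds on each nodal domain, because $\log|\tilde\psi|$ is smooth there. The global operator, however, is only the conjugated self-adjoint realization.

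The main obstacle is the domain bookkeeping near $\mathcal N$. One has to check that $C_c^\infty(\mathbb R^d\setminus\mathcal N)$ lies in $\hat U^{-1}D(\widehat H)$, i.e.\ that $f\tilde\psi\in D(\widehat H)$. I would handle this by noting that $f\tilde\psi$ is smooth and compactly supported away from the singular set of $\nabla^2\tilde\psi/\tilde\psi$. Since the potential is bounded there, $f\tilde\psi$ lies in the form and operator domain of any self-adjoint realization that contains such functions, which I would take as the meaning of the assumed extension. The statement claims the differential formula only for such $f$, so no core or essential-self-adjointness claim is needed beyond the assumed realization.
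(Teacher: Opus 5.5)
Your proposal is correct and follows the paper's proof. Both show that $\hat U$ is an isometry and obtain the operator identity directly from the definition of $\mathcal L$ on $\hat U^{-1}D(\widehat H)$. Both use the product-rule cancellation of $f\nabla^2\tilde\psi$ to get $S_*=2\nabla\tilde\psi/\tilde\psi$, and the identity $2\nabla\log|\tilde\psi|=\nabla\log\rho$ on each nodal domain. Your surjectivity argument using that the nodal set $\mathcal N$ has Lebesgue measure zero, and your observation that $\tilde\psi\, C_c^\infty(\mathbb R^d\setminus\mathcal N)\subset D(\widehat H)$ must be built into the assumed self-adjoint realization, fill in two points the paper's proof leaves implicit.
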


\begin{proof}
Define the transform $\hat{U}: L^{2} (\rho) \to L^{2}(dx)$ by multiplication with the target wavefunction $\tilde{\psi}$
    \[
    \hat{U} f = f \tilde{\psi} 
    \]
    $\hat{U}$ is unitary, as before, since
    \begin{align*}
    \langle \hat{U} f, \hat{U}g \rangle_{L^{2}(dx)} 
    = \int (f\tilde{\psi})^{\dagger}(g\tilde{\psi}) dx 
    =\langle  f, g \rangle_{L^{2}(\rho)}.
    \end{align*}
    For smooth test functions $f$ supported away from the nodal set, without invoking the score, we have
    \begin{align*}
        \widehat{H}(\hat{U} f) &= \widehat{H}(f \tilde{\psi}) =\left( -\nabla^{2} + \frac{\nabla^{2}\tilde{\psi}}{\tilde{\psi}} \right) f \tilde{\psi} \\
        &= -\nabla^{2}( f \tilde{\psi}) + \frac{\nabla^{2}\tilde{\psi}}{\tilde{\psi}}( f \tilde{\psi}) \\
        &=-\nabla^{2}( f \tilde{\psi}) + \nabla^{2}\tilde{\psi}f  
    \end{align*}
    Via product rule for the Laplacian, we see
    \begin{align*}
        \widehat{H}(\hat{U} f) &=-\nabla^{2}( f \tilde{\psi})  + \nabla^{2}\tilde{\psi}f  \\
        &= - (f \nabla^{2} \tilde{\psi} + \tilde{\psi} \nabla^{2}f + 2\nabla f \cdot \nabla \tilde{\psi}) + \nabla^{2}\tilde{\psi}f  \\
        &= - \tilde{\psi} \,\nabla^{2}f - 2\nabla f \cdot \nabla \tilde{\psi}\\
        &=-\left(   \,\nabla^{2}f + \nabla f \cdot 2\nabla \tilde{\psi} /\tilde{\psi}\right)\tilde{\psi} \\
        &= - (\mathcal{L}f) \tilde{\psi} = - \hat{U}(\mathcal{L}f)
    \end{align*}
    Thus, away from the nodal set, $\mathcal{L}$ has the Langevin differential expression with an ``amplitude-form'' score
    \[
    S_{*} = 2\frac{\nabla \tilde{\psi}}{\tilde{\psi}}
    \]
    When $\tilde{\psi}$ is nodeless and has fixed sign, $\rho = \tilde{\psi}^{2}$ while $\frac{2\nabla \tilde{\psi}}{\tilde{\psi}}=\frac{\nabla (\tilde{\psi}^{2})}{\tilde{\psi}^{2}}$. Hence
    \begin{align*}
        2\frac{\nabla \tilde{\psi}}{\tilde{\psi}} = \frac{\nabla(\tilde{\psi}^{2})}{\tilde{\psi}^{2}}  = \frac{\nabla\rho}{\rho} = \nabla \log \rho =S
    \end{align*}
    recovering the standard score for both positive and negative branches of $\tilde{\psi}$. The global operator identity then follows from $\mathcal{L}=-\hat{U}^{-1}\widehat{H}\hat{U}$ with $D(\mathcal{L}) = \hat{U}^{-1}D(\widehat{H})$.
\end{proof}

Thm.~\ref{thrm:BH_LG_equiv_amplitude} avoids division by $\rho$ or evaluating $\nabla \log \rho$ in establishing an operator-level unitary conjugation, isolating $S_{*}$ only in the final differential representation. Conversely, starting from the classical expansion of the Witten Laplacian or Score Hamiltonian yields individually divergent terms in its intermediate steps at nodes where $\rho \downarrow 0$. While the amplitude-form conjugation remains well-defined under the assumptions of the theorem, the Langevin drift coefficient $S_{*}$ generally diverges upon approaching a node, and the corresponding differential representation is only valid away from the nodal set. Thus, the singularity reflects a breakdown of the classical Langevin SDE description at a node, rather than a failure of operator-level unitary equivalence.

\begin{remark}[Spectral Gap and Poincare Constant]\label{rem:PC}
We remark that the spectral gap of $\widehat{H}$ corresponds to the optimal Poincare constant, and refer the reader to a more thorough discussion in \cite{Helffer2005}. Let $p=2$, and recall that a distribution $\rho$ satisfies the Poincare inequality if there exists $C$ so that for all $u \in W^{1,2}(\Omega)$
\[
\lVert u - \bar{u} \rVert_{L^{2}(\rho)}^{2} \leq C\lVert \nabla u \rVert_{L^{2}(\rho)}^{2}
\]
with $\bar{u} = \int_{\Omega} u\,d\rho$. The smallest such $C$ is the Poincare constant. Now, recall that
\[
\Delta := \inf_{\psi: \,\langle \psi, \psi_{0}\rangle=0} \,\,\frac{\langle \psi | \widehat{H}|\psi \rangle}{ \lVert \psi \rVert_{L^{2}(dx)}^{2}}
\]
As noted above, we have by unitary equivalence that $\hat{U}f = f \sqrt{\rho}=\psi$. The condition $\langle \psi, \psi_{0}\rangle=0$ implies that $\int f\rho(dx) =0$ so that $f$ is zero-mean. Additionally, observe
\begin{align*}
    &\lVert \psi \rVert_{L^{2}(dx)}^{2} = \int f^{2} \rho(dx), \\
    &\langle \psi | \widehat{H} | \psi \rangle_{L^{2}(dx)} = \langle \psi | \hat{U}(-\mathcal{L})\hat{U}^{-1} | \psi \rangle_{L^{2}(dx)} \\
    &= \langle f | (-\mathcal{L})|f \rangle_{L^{2}(\rho)} = \int |\nabla f|^{2}\,\rho(dx)
\end{align*}
so that for $f=u -\bar{u},\,\nabla f = \nabla u$ defining a zero-mean function, we have
\begin{align*}
\Delta &= \inf_{f: \,\mathbb{E}_{\rho}f=0} \,\,\frac{\int |\nabla f|^{2}\,\rho(dx)}{\int f^{2} \rho(dx)} \\
&\leq \frac{\int |\nabla f|^{2}\,\rho(dx)}{\int f^{2} \rho(dx)} 
\end{align*}
and thus,
\begin{align*}
 \lVert u -\bar{u} \rVert_{L^{2}(\rho)}^{2}&=\lVert f \rVert_{L^{2}(\rho)}^{2}=\int f^{2} \rho(dx) \\
&\leq \frac{1}{\Delta}\int |\nabla f|^{2}\,\rho(dx) = \frac{1}{\Delta} \lVert \nabla f \rVert_{L^{2}(\rho)}^{2} \\
&= \frac{1}{\Delta} \lVert \nabla u \rVert_{L^{2}(\rho)}^{2}.
\end{align*}
So that $\Delta^{-1}$ necessarily defines the optimal Poincare constant.
\end{remark}

\begin{remark}[Generative Modeling as a Variational Optimization]
An optimization-based perspective is via the Rayleigh quotient. For any test state $\psi$, the expected energy is $E[\psi] = \langle \psi | \widehat{H} | \psi \rangle$. By expanding the score-form of the potential $\hat{V} = \frac{1}{2}\nabla \cdot S + \frac{1}{4}|S|^2$ and integrating by parts, we see $\int \psi^2 \nabla \cdot S\, dx = -2 \int \psi S \cdot \nabla \psi\, dx$ and that the energy is a sum-of-squares (the relative Fisher information)
\begin{align}
&\langle \psi | \widehat{H} | \psi \rangle \\
&= \int \left( \|\nabla \psi\|^2 - S \cdot (\psi \nabla \psi) + \frac{1}{4}\|S\|^2 \psi^2 \right) dx \nonumber \\
&= \int \lVert \nabla \psi - \frac{1}{2} S(x) \psi \rVert^2 dx \ge 0 .\label{eq:rayleigh_squares}\end{align}
Thus, $\widehat{H}$ is a positive semi-definite operator with a global minimum of the energy landscape strictly at $0$. This is achieved if and only if $\nabla \psi = \frac{1}{2} S \psi = (\frac{1}{2} \nabla \log \rho) \psi$. Enforcing the Lagrange multiplier $\langle \psi| \psi \rangle = 1$ implies the unique normalized solution to the Rayleigh quotient problem
\begin{equation}
\inf_{\psi \in L^{2} \setminus \{ 0 \}} \frac{\langle \psi | \widehat{H} | \psi \rangle}{\langle \psi | \psi \rangle}
\end{equation}
is $\psi_0 = \sqrt{\rho}$. This defines a variational problem with the target $\sqrt{\rho}$ as its unique minimum: given that the Score Hamiltonian is self-adjoint, generative modeling is \emph{variationally convex} when $\psi$ lacks a normalization constraint, controlled by the condition number (spectral gap) of $\widehat{H}$. The normalization constraint $\lVert \psi \rVert=1$ breaks convexity, requiring minimization over the sphere. Thus the unnormalized problem is convex in the amplitude $\psi$, in contrast to non-convex density-fitting objectives which must enforce normalization.
\end{remark}

\begin{lemma}[Ground-State Perturbation in Score Error] \label{lem:static_perturbation} 
Let $\widehat{H}$ and $\widehat{H}_{\theta}$ be the Score-Hamiltonians for the target density $\rho$ and the learned density $\rho_{\theta}$, respectively. 

Assume both are essentially self-adjoint on $L^{2}(\mathbb{R}^{d})$ and that $\widehat{H},\widehat{H}_{\theta}$ have strictly positive spectral gap $\Delta, \Delta_{\theta} > 0$. Let $\psi_{0} = \sqrt{\rho}$ and $\psi_{0}^{\theta} = \sqrt{\rho_{\theta}}$ be their normalized ground states. Then, the $L^2(dx)$ distance between the ground states is strictly bounded by:
\begin{align*}
&\lVert \psi_0^\theta - \psi_0 \rVert_{L^2(dx)}^2 \\
&\leq\min \left\{ \frac{1}{2 \Delta}\mathbb{E}_{\rho_{\theta}}\lVert \stheta -  S \rVert_{2}^{2},
\,\, \frac{1}{2 \Delta_{\theta}}\mathbb{E}_{\rho}\lVert \stheta -  S \rVert_{2}^{2}
\right\}.
\end{align*}
\end{lemma}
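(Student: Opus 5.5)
We bound the distance through the quadratic form of one Score Hamiltonian evaluated on the other ground state, using the sum-of-squares identity \eqref{eq:rayleigh_squares} together with the spectral gap. By symmetry it suffices to prove the first bound, $\lVert \psi_0^\theta-\psi_0\rVert^2\le \frac{1}{2\Delta}\mathbb{E}_{\rho_\theta}\lVert S^\theta-S\rVert^2$; the second follows by exchanging the roles of $(\rho,S,\Delta)$ and $(\rho_\theta,S^\theta,\Delta_\theta)$.

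\textbf{Step 1: energy of the model ground state.} We evaluate the energy of $\psi_0^\theta=\sqrt{\rho_\theta}$ under the target Hamiltonian $\widehat H$. Since $\nabla\psi_0^\theta=\tfrac12 S^\theta\psi_0^\theta$, the identity \eqref{eq:rayleigh_squares} gives
\[
\langle \psi_0^\theta|\widehat H|\psi_0^\theta\rangle=\int \bigl\lVert \tfrac12(S^\theta-S)\psi_0^\theta\bigr\rVert^2dx=\tfrac14\,\mathbb{E}_{\rho_\theta}\lVert S^\theta-S\rVert^2 .
\]
This step requires $\psi_0^\theta$ to lie in the form domain of $\widehat H$, and the integration by parts in \eqref{eq:rayleigh_squares} to produce no boundary terms. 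We will assume enough decay for this, and note that the bound is trivial when the right-hand side is infinite.

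\textbf{Step 2: apply the gap.} Decompose $\psi_0^\theta=\alpha\psi_0+\phi$ with $\phi\perp\psi_0$ and $\alpha=\langle\psi_0,\psi_0^\theta\rangle=\int\sqrt{\rho\rho_\theta}\in[0,1]$, which is nonnegative because both amplitudes are positive. Since $\widehat H\psi_0=0$, the cross terms vanish, and the spectral gap (Prop.~\ref{prop:score_hamiltonian_props}(3)) yields
\[
\langle \psi_0^\theta|\widehat H|\psi_0^\theta\rangle=\langle\phi|\widehat H|\phi\rangle\ge\Delta\lVert\phi\rVert^2=\Delta(1-\alpha^2).
\]

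\textbf{Step 3: convert to distance.} Both states are normalized, so $\lVert\psi_0^\theta-\psi_0\rVert^2=2(1-\alpha)$. Because $\alpha\in[0,1]$, we have $1-\alpha\le 1-\alpha^2=(1-\alpha)(1+\alpha)$, and therefore $2(1-\alpha)\le 2(1-\alpha^2)$. Combining with Steps 1 and 2,
\[
\lVert\psi_0^\theta-\psi_0\rVert^2\le \frac{2}{\Delta}\cdot\frac14\,\mathbb{E}_{\rho_\theta}\lVert S^\theta-S\rVert^2=\frac{1}{2\Delta}\,\mathbb{E}_{\rho_\theta}\lVert S^\theta-S\rVert^2,
\]
which is exactly the claimed constant. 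Taking the minimum with the symmetric bound completes the argument.

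The main obstacle is the justification in Step 1: the form identity must hold for $\psi_0^\theta$ under $\widehat H$, i.e.\ $\sqrt{\rho_\theta}$ must belong to the closure of the quadratic form of $\widehat H$, which essential self-adjointness alone does not guarantee. We would address this by approximating $\sqrt{\rho_\theta}$ with cutoffs $\eta_R\sqrt{\rho_\theta}$, applying the sum-of-squares identity to compactly supported functions, and passing to the limit via Fatou/dominated convergence under the finiteness of $\mathbb{E}_{\rho_\theta}\lVert S^\theta-S\rVert^2$ and $\mathbb{E}_{\rho_\theta}\lVert S^\theta\rVert^2$.
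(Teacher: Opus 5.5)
Your proposal is correct and follows the paper's proof essentially step for step. The paper also evaluates the cross energies $\langle\psi_0^\theta|\widehat H|\psi_0^\theta\rangle$ and $\langle\psi_0|\widehat H_\theta|\psi_0\rangle$ via the sum-of-squares identity, bounds them below by $\Delta(1-c^2)$ and $\Delta_\theta(1-c^2)$ with $c=\langle\psi_0,\psi_0^\theta\rangle$, and concludes with $2(1-c)\le 2(1-c^2)$ and the minimum. Your explicit remarks that $c\ge 0$ (needed for that last inequality) and that $\sqrt{\rho_\theta}$ must lie in the form domain of $\widehat H$ are points the paper leaves implicit, so they only add rigor.
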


\begin{proof}[Proof of~\ref{lem:static_perturbation}]
We notate $\psi:= \psi_{0}$ and $\psi_{\theta}:= \psi_{0}^{\theta}$ as the ground-state amplitudes. Because $\psi_0$ is the zero-energy ground state of $\widehat{H}$ with spectral gap $\Delta$, the spectral theorem guarantees $\langle \psi_0^\theta | \widehat{H} | \psi_0^\theta \rangle \ge \Delta (1 - c^2)$ and $\langle \psi_0 | \widehat{H}_{\theta} | \psi_0 \rangle \ge \Delta_{\theta} (1 - c^2)$. In particular, we have
\begin{align*}
\langle \psi| \widehat{H}_{\theta}| \psi \rangle &= \int \psi \left( - \nabla^{2}\psi+
\frac{1}{2} (\nabla \cdot \stheta)\psi + \frac{1}{4} |\stheta|^{2}\psi
\right) \\
&=\int \left( | \nabla \psi |^{2} -  \stheta \cdot (\psi \nabla \psi) + \frac{1}{4} |\stheta|^{2}\psi^{2} \right)\,dx \\
&=\int \lVert \nabla \psi - \frac{1}{2} \stheta\,\psi \rVert^{2}\,dx 
\end{align*}
and in reverse,
\[
\langle \psi_{\theta}| \widehat{H}| \psi_{\theta} \rangle = \int \lVert \nabla \psi_{\theta} - \frac{1}{2} S\,\psi_{\theta} \rVert^{2}\,dx
\]
Where $\nabla \psi /\psi = \nabla \sqrt{\rho}/\sqrt{\rho} = \frac{1}{2} S$ and $\nabla \psi_{\theta} /\psi_{\theta}= \frac{1}{2} \stheta$ and thus
\begin{align*}
\langle \psi_{\theta}| \widehat{H}| \psi_{\theta} \rangle &= \frac{1}{4}\int \lVert \stheta -  S\, \rVert^{2} \rho_{\theta}\,dx = \frac{1}{4}\mathbb{E}_{\rho_{\theta}}\lVert \stheta -  S \rVert_{2}^{2}, \\
\langle \psi| \widehat{H}_{\theta}| \psi \rangle &= \frac{1}{4}\int \lVert \stheta -  S\, \rVert^{2} \rho\,dx = \frac{1}{4}\mathbb{E}_{\rho}\lVert \stheta -  S \rVert_{2}^{2}
\end{align*}
Since we have that $c=\langle \psi, \psi_{\theta}\rangle$ is shared and applies symmetrically to the bound on $\langle \psi_{\theta}| \widehat{H}| \psi_{\theta} \rangle$ and $\langle \psi| \widehat{H}_{\theta}| \psi \rangle$,
\begin{align}\label{eq:min}
&\frac{1}{4}\mathbb{E}_{\rho_{\theta}}\lVert \stheta -  S \rVert_{2}^{2}=\langle \psi_{\theta}| \widehat{H}| \psi_{\theta} \rangle \geq \Delta (1-c^{2}), \\
& \label{eq:min2} \frac{1}{4}\mathbb{E}_{\rho}\lVert \stheta -  S \rVert_{2}^{2}=\langle \psi| \widehat{H}_{\theta}| \psi \rangle \geq \Delta_{\theta} (1-c^{2})
\end{align}
Thus, combining this with the geometric bound $2(1-c) \le 2(1-c^2)$, the $L^2(dx)$ distance is bounded via 
\begin{align*} 
\lVert \psi_0^\theta - \psi_0 \rVert_{L^2(dx)}^2 = 2(1 - c) \le 2(1 - c^2)  
\end{align*}
Then, taking the minimum of ~\eqref{eq:min} and~\eqref{eq:min2}, we conclude.
\end{proof}

\begin{proof}[Proof of Thm.~\ref{thrm:ad_diffusion}]
Let $\Psi_{\tau_T} = \mu_{\tau_T}/\sqrt{\rho_{\theta,T}}$, for $\mu_{\tau_T}$ the diffusion model density of the Fokker-Planck equation, and $\sqrt{\rho_{\theta,T}}$ the ground-state of the model Hamiltonian $\widehat{H}_{\theta}(T)$. Applying triangle inequality, 
\begin{align}
    d_{\mathrm{TV}}(\mu_{\tau_{T}}, \rho_{T}) \leq  d_{\mathrm{TV}}(\mu_{\tau_{T}}, \rho_{\theta,T}) + d_{\mathrm{TV}}(\rho_{T}, \rho_{\theta,T})
\end{align}
Thm.~\ref{thrm:adiabatic_FK} controls the following term
\begin{align}
\label{eq:TV-Hellinger} d_{\mathrm{TV}}(\mu_{\tau_{T}}, \rho_{\theta,T}) &\leq \frac{1}{2} \sqrt{\chi^2(\mu_{\tau_T} \| \rho_{\theta,T})}\\
\notag&= \frac{1}{2} \| \Psi_{\tau_T} - \sqrt{\rho_{\theta,T}} \|_{L^2}    
\end{align}
Meanwhile, as $d_{\mathrm{TV}}(P,Q) \leq \sqrt{2} d_{H}(P,Q)\leq\|\sqrt{P}-\sqrt{Q}\|_{L^2}$,
\[
d_{\mathrm{TV}}(\rho_{T}, \rho_{\theta,T}) \leq \|\sqrt{\rho_{T}}-\sqrt{\rho_{\theta,T}}\|_{L^2}
\]
is controlled by the Perturbation Lemma~\ref{lem:static_perturbation}, concluding the proof.
\end{proof}

\subsection{Non-Conservative Extension}\label{sec:noncons_exten}

If $\stheta$ is non-conservative, it may be decomposed via the Helmholtz-Hodge Thm. into a curl-free conservative component $\nabla \phi^{\theta}$ and divergence-free component $R^{\theta}$
\[
\stheta = \nabla \phi^{\theta} + R^{\theta},\qquad \nabla \cdot R^{\theta} = 0.
\]
The Langevin generator with non-conservative $\stheta$ applied to a test function $f$ gives
\[
L^{\theta}f = \Delta f + \stheta \cdot \nabla f = \Delta f + \nabla \phi^{\theta} \cdot \nabla f + R^{\theta} \cdot \nabla f
\]
as the field exhibits a divergence-free current $R^{\theta} \cdot \nabla f$, unitary equivalence ceases to hold. We remark, however, that using a non-conservative score in an adiabatic tracking on the score Hamiltonian $\widehat{H}_{\theta}$ is comparatively well-behaved. The divergence-term of the potential $\hat{V}_{\theta}$ isolates only the induced conservative score of $\stheta$, leaving
\begin{align*}
    \hat{V}_{\theta} &= \frac{1}{2} \nabla \cdot \stheta + \frac{1}{4} |\stheta|^{2} = \frac{1}{2} \nabla \cdot \nabla \phi^{\theta} + \frac{1}{4} | \nabla \phi^{\theta} + R^{\theta} |^{2} \\
    &= \left( \frac{1}{2} \nabla \cdot \nabla \phi^{\theta} + \frac{1}{4} | \nabla \phi^{\theta}|^{2} \right) + \left( \frac{1}{2} \nabla \phi^{\theta} \cdot R^{\theta} + \frac{1}{4}|R^{\theta} |^{2} \right).
\end{align*}
We thus find that the effect of non-conservativity is entirely isolated to the score magnitude $|\stheta|^{2}$. As the self-adjointness conditions for the Score-Hamiltonian do not strictly require $\stheta$ be conservative, $\widehat{H}_{\theta}$ exhibits a well-defined sequence of ground-states $\sqrt{\tilde{\rho}_{\theta,t}}$ defining proper normalized densities $\tilde{\rho}_{\theta,t} = |\psi_{\theta,t}|^{2}$ with $\int |\psi_{\theta,t}|^{2} = 1$. As a consequence, the tracked ground-state family becomes $\tilde{\rho}_{\theta,t}$ with rotational corrections to the tracking error detailed below.

\begin{figure*}[tbp]
    \centering
    \includegraphics[width=0.8\linewidth]{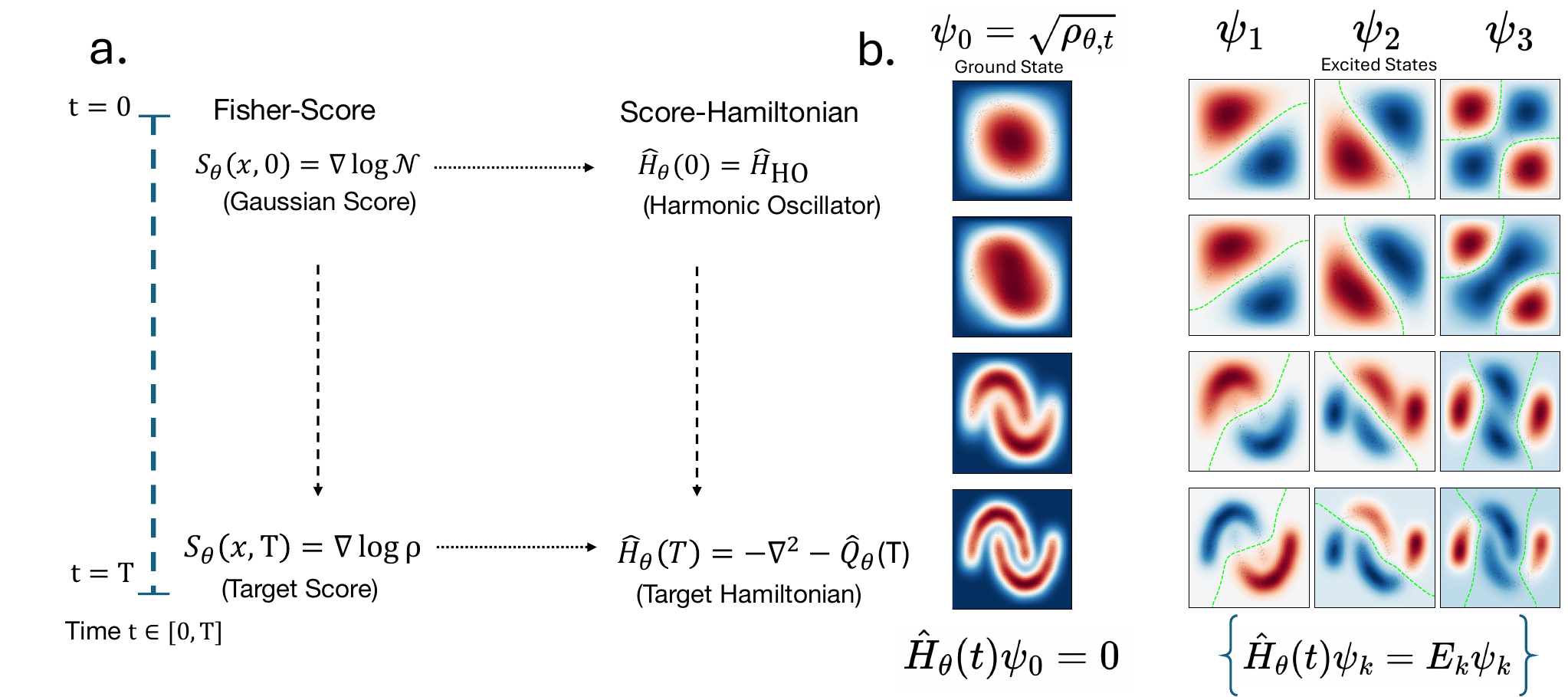}
    \caption{\textbf{(a.)} Illustration of the adiabatic transport of a diffusion-model across varying $t$. \textbf{(b.)} $\widehat{H}_{\theta}(t)$ encodes the model density $\rho_{\theta,t}$ as the ground state of the time $t$ diffusion model, and the excited states $\psi_{k\geq 1}$ encode the spectrum of $\widehat{H}_{\theta}(t)$ and the associated Langevin generator ${L}_{\theta}(t)$ of the score $\stheta$.
    }
    \label{fig:Fig.1}
\end{figure*}

\begin{proposition}[Non-Conservativity of the Score]\label{prop:non_conservative_score}
Suppose the learned score vector field $\stheta$ is non-conservative and assume Prop.~\ref{prop:score_hamiltonian_props}(\ref{cond:pure_pt}) so $\hat{H}_{\theta}$ has a discrete ground-state. Then, the ground-state eigenvalue $\lambda_\theta$ quantifies the non-conservative projection error of the score:
\begin{equation}\label{eq:lambda_theta}
    \lambda_{\theta} = \frac{1}{4}\mathbb{E}_{\tilde{\rho}_{\theta}} \lVert \nabla \log \tilde{\rho}_{\theta} - \stheta \rVert^{2}
\end{equation}
The distance between the generated and true ground states is bounded by:
\[
\lVert \psi_{\theta} - \psi_{0} \rVert_{L^{2}(dx)} \leq \epsilon_{\theta, \mathrm{nc}} + \sqrt{\frac{2 \lambda_{\theta}}{\Delta }}
\]
for $\epsilon_{\theta, \mathrm{nc}}$ the upper-bound in~\ref{lem:static_perturbation},
\[
\epsilon_{\theta, \mathrm{nc}}=\min\left\{
        \frac{\epsilon_{\theta,{\rm score}}(T)}{\sqrt{\Delta_T}},
        \frac{\epsilon_{\rm score}(T)}{\sqrt{\Delta_{\theta,T}}}
    \right\}
\]
\end{proposition}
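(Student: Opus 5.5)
Two things must be shown: the identity \eqref{eq:lambda_theta}, and the distance bound. Both rest on the sum-of-squares form of the energy from \eqref{eq:rayleigh_squares}. The integration by parts there never used conservativity of $S$, since $\int \psi^2\nabla\cdot S = -2\int \psi S\cdot\nabla\psi$ holds for any smooth field. So for any real $\psi$ in the form domain,
\[
\langle \psi|\widehat H_\theta|\psi\rangle=\int \Big\lVert \nabla\psi-\tfrac12 \stheta\psi\Big\rVert^2dx .
\]
By Prop.~\ref{prop:score_hamiltonian_props}(\ref{cond:pure_pt}), $\widehat H_\theta$ has a discrete ground state $\psi_\theta$ with eigenvalue $\lambda_\theta$. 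By the usual Perron--Frobenius / nodal argument for Schr\"odinger operators we may take $\psi_\theta>0$, normalized, and set $\tilde\rho_\theta=\psi_\theta^2$. Then $\nabla\psi_\theta=\tfrac12\nabla\log\tilde\rho_\theta\,\psi_\theta$, and substituting into the identity gives
\[
\lambda_\theta=\langle\psi_\theta|\widehat H_\theta|\psi_\theta\rangle=\tfrac14\int\lVert\nabla\log\tilde\rho_\theta-\stheta\rVert^2\tilde\rho_\theta\,dx,
\]
which is \eqref{eq:lambda_theta}. This step is routine; the only care needed is justifying the integration by parts by decay of $\psi_\theta$, which follows from the confining assumption.

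For the distance bound, introduce the conservative surrogate $S_\theta^{c}:=\nabla\log\tilde\rho_\theta$, whose Score Hamiltonian $\widetilde H_\theta$ has exact zero-energy ground state $\psi_\theta=\sqrt{\tilde\rho_\theta}$. The true ground state $\psi_0=\sqrt\rho$ is compared with $\psi_\theta$ by rerunning the argument of Lemma~\ref{lem:static_perturbation} against $\widehat H$ (gap $\Delta$). The same sum-of-squares computation gives
\[
\langle\psi_\theta|\widehat H|\psi_\theta\rangle=\tfrac14\mathbb E_{\tilde\rho_\theta}\lVert \nabla\log\tilde\rho_\theta - S\rVert^2 .
\]
Split $\nabla\log\tilde\rho_\theta-S=(\nabla\log\tilde\rho_\theta-\stheta)+(\stheta-S)$ and use Minkowski in $L^2(\tilde\rho_\theta)$:
\[
\langle\psi_\theta|\widehat H|\psi_\theta\rangle^{1/2}\le \sqrt{\lambda_\theta}+\tfrac12\lVert\stheta-S\rVert_{L^2(\tilde\rho_\theta)}.
\]
The spectral-gap step of Lemma~\ref{lem:static_perturbation} gives $\lVert\psi_\theta-\psi_0\rVert^2\le 2(1-c^2)\le \tfrac{2}{\Delta}\langle\psi_\theta|\widehat H|\psi_\theta\rangle$. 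Taking square roots,
\[
\lVert\psi_\theta-\psi_0\rVert\le\sqrt{2\lambda_\theta/\Delta}+\lVert\stheta-S\rVert_{L^2(\tilde\rho_\theta)}/\sqrt{2\Delta}.
\]
The second term is the first entry $\epsilon_{\theta,\rm score}/\sqrt{\Delta}$ of the minimum in Lemma~\ref{lem:static_perturbation}, with $\tilde\rho_\theta$ playing the role of $\rho_\theta$. This is the term I read as $\epsilon_{\theta,\mathrm{nc}}$.

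The main obstacle is matching the second entry of the minimum, $\epsilon_{\rm score}/\sqrt{\Delta_\theta}$. That entry uses the reversed quadratic form $\langle\psi_0|\widehat H_\theta|\psi_0\rangle=\tfrac14\mathbb E_\rho\lVert\stheta-S\rVert^2$. This quantity now bounds $\lambda_\theta+(\text{gap of }\widehat H_\theta)(1-c^2)$ rather than $\Delta_\theta(1-c^2)$, because the ground energy is no longer zero. Subtracting $\lambda_\theta\ge0$ only helps, so one gets $1-c^2\le \tfrac{1}{4\Delta_\theta}\mathbb E_\rho\lVert\stheta-S\rVert^2$ with $\Delta_\theta$ interpreted as the gap $E_1-\lambda_\theta$ of $\widehat H_\theta$. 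This yields that branch even without the $\sqrt{2\lambda_\theta/\Delta}$ penalty, so the stated inequality holds a fortiori. I would therefore present the first branch via the Minkowski splitting, and the second via the shifted-gap Rayleigh argument, checking that the paper's $\Delta_{\theta,T}$ is to be read as $E_1-\lambda_\theta$.
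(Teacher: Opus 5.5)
Your proposal is correct. It matches the paper on the identity and on the first branch, but it handles the second branch differently, and that difference is what makes the final minimum step valid.

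\textbf{Where you agree with the paper.} Your proof of \eqref{eq:lambda_theta} is the paper's argument: the sum-of-squares form $\int\lVert\nabla\psi_\theta-\tfrac12\stheta\psi_\theta\rVert^2dx$ followed by the substitution $\nabla\psi_\theta=\tfrac12\nabla\log\tilde\rho_\theta\,\psi_\theta$. The first branch also coincides. The paper evaluates $\langle\psi_\theta|\widehat H|\psi_\theta\rangle=\tfrac14\mathbb E_{\tilde\rho_\theta}\lVert\nabla\log\tilde\rho_\theta-S\rVert^2$, splits by the triangle inequality in $L^2(\tilde\rho_\theta)$, and applies the gap step of Lemma~\ref{lem:static_perturbation}. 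It obtains the same $A+P$ with $A=\epsilon_{\theta,\mathrm{score}}/\sqrt{\Delta}$ and $P=\sqrt{2\lambda_\theta/\Delta}$. Your surrogate $\widetilde H_\theta$ is never used and can be dropped.

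\textbf{Where you differ.} For the second branch, the paper invokes a ``symmetric argument'' giving $B+\sqrt{2\lambda_\theta/\Delta_\theta}$, where $B=\epsilon_{\mathrm{score}}/\sqrt{\Delta_\theta}$. It then simply ``takes the minimum'' over the score-error terms. As written, that step does not yield the stated bound. The quantity $\min\{A+P,\,B+\sqrt{2\lambda_\theta/\Delta_\theta}\}$ can exceed $\min\{A,B\}+P$ whenever $B<A$ and $\Delta_\theta<\Delta$.

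You instead use $\langle\psi_0|\widehat H_\theta|\psi_0\rangle=\tfrac14\mathbb E_\rho\lVert\stheta-S\rVert^2\ge\lambda_\theta+(1-c^2)\Delta_\theta$. There is no cross term because $\widehat H_\theta\psi_\theta=\lambda_\theta\psi_\theta$. This makes the second branch penalty-free, giving $\lVert\psi_\theta-\psi_0\rVert\le B$. The conclusion then follows from $\min\{A+P,B\}\le\min\{A,B\}+P$. State that last inequality explicitly in your write-up, since it is the step the paper skips.

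\textbf{Consistency checks.} Reading $\Delta_{\theta,T}$ as $E_1-\lambda_\theta$ matches the paper's own convention in Prop.~\ref{prop:adiabatic-bound_NC}. Positivity of both ground states gives $c>0$, which you need for $2(1-c)\le2(1-c^2)$.

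\textbf{What each route gives.} Yours proves the proposition as stated, and in fact the sharper $(1-c^2)\Delta_\theta\le\tfrac12\epsilon_{\mathrm{score}}^2-\lambda_\theta$. As a by-product it shows $\lambda_\theta\le\tfrac12\epsilon_{\mathrm{score}}^2$. The paper's route, taken literally, only gives $\min\{A+P,\,B+\sqrt{2\lambda_\theta/\Delta_\theta}\}$.
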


\begin{proof}
Observe $\nabla \log \tilde{\rho}_{\theta,t} = \frac{2\nabla \psi_{\theta,t}}{\psi_{\theta,t}}$ and that
\begin{align*}
\lambda_{\theta,t}&=\langle \widehat{H}_{\theta}\, \psi_{\theta,t}, \psi_{\theta,t} \rangle \\
&= \int \psi_{\theta,t} \left( - \nabla^{2}\psi_{\theta,t}+
\frac{1}{2} (\nabla \cdot \stheta)\psi_{\theta,t} + \frac{1}{4} |\stheta|^{2}\psi_{\theta,t}
\right) \\
&=\int \left( | \nabla \psi_{\theta,t} |^{2} -  \stheta \cdot (\psi_{\theta,t} \nabla \psi_{\theta,t}) + \frac{1}{4} |\stheta|^{2}\psi_{\theta,t}^{2} \right)\,dx \\
&=\int \lVert \nabla \psi_{\theta,t} - \frac{1}{2} \stheta\,\psi_{\theta,t} \rVert^{2}\,dx 
\end{align*}
Since $\nabla \log \tilde{\rho}_{\theta,t} = \frac{2\nabla \psi_{\theta,t}}{\psi_{\theta,t}}$ $\nabla\psi_{\theta,t} = \frac{1}{2}\nabla \log \tilde{\rho}_{\theta,t}\psi_{\theta,t}$, we identify this as the expected error between $\stheta$ and its induced conservative projection
\begin{align}
\label{eq:lambda_bound}
\lambda_{\theta,t}&=\frac{1}{4}\int \lVert \nabla \log \tilde{\rho}_{\theta,t} -  \stheta\, \rVert^{2} \tilde{\rho}_{\theta,t}\,dx \\
&= \frac{1}{4}\mathbb{E}_{\tilde{\rho}_{\theta,t}}\lVert \nabla \log \tilde{\rho}_{\theta,t} -  \stheta\, \rVert^{2}
\end{align}
Now, at the terminal time $t=T$ denotes $\psi_{\theta,T}:=\psi_{\theta}$ and the target density ground state as $\psi_{0}$. We see that the energy under the true score Hamiltonian (with exact score $S$) is given by
$\langle \widehat{H}\, \psi_{\theta}, \psi_{\theta} \rangle = \frac{1}{4}\mathbb{E}_{\rho_{\theta}}\lVert \nabla \log \tilde{\rho}_{\theta} -  S\, \rVert^{2}$
Now, applying the perturbation Lemma~\ref{lem:static_perturbation}, we see
\begin{align}\label{spec_bnd}
\lVert \psi_{\theta} - \psi_{0} \rVert_{L^{2}} &\leq \sqrt{\frac{2\langle \widehat{H}\, \psi_{\theta}, \psi_{\theta} \rangle }{\Delta }} \\
&= \frac{\lVert \nabla \log \tilde{\rho}_{\theta} -  S\, \rVert_{L^{2}(\tilde{\rho}_{\theta})}}{\sqrt{2 \Delta} } 
\end{align}
For the induced conservative score $\nabla \log \tilde{\rho}_{\theta}$. However, as $\nabla \log \tilde{\rho}_{\theta}$ itself is not accessible, we apply triangle-inequality on the non-conservative $\stheta$ to find
\begin{align}\label{eq:TI}
&\lVert \nabla \log \tilde{\rho}_{\theta} -  S \rVert_{L^{2}(\tilde{\rho}_{\theta})} \\
&\qquad \leq \lVert S -  \stheta \rVert_{L^{2}(\tilde{\rho}_{\theta})} + \lVert \nabla \log \tilde{\rho}_{\theta}  -  \stheta \rVert_{L^{2}(\tilde{\rho}_{\theta})}
\end{align}
By definition of the score matching error, $\frac{1}{\sqrt{2}}\lVert S - \stheta \rVert_{L^{2}(\tilde{\rho}_{\theta})} = \epsilon_{\theta,\mathrm{score}}$, which contributes the bounding term $\frac{\epsilon_{\theta,\mathrm{score}}}{\sqrt{\Delta}} = \epsilon_{\theta, \mathrm{nc}}$ via Lemma~\ref{lem:static_perturbation}.

From equation~\eqref{eq:lambda_bound}, the non-conservative projection error is exactly $\lVert \nabla \log \tilde{\rho}_{\theta} - \stheta \rVert_{L^{2}(\tilde{\rho}_{\theta})} = 2\sqrt{\lambda_{\theta}}$. Substituting these into~\eqref{eq:TI} and returning to \eqref{spec_bnd}, we find:
$$
\begin{aligned}
\lVert \psi_{\theta} - \psi_{0} \rVert_{L^{2}} &\leq \frac{1}{\sqrt{2\Delta}} \left( \sqrt{2}\,\epsilon_{\theta,\mathrm{score}} + 2\sqrt{\lambda_{\theta}} \right) \\
&= \frac{\epsilon_{\theta,\mathrm{score}}}{\sqrt{\Delta}} + \sqrt{\frac{2 \lambda_{\theta}}{\Delta}}.
\end{aligned}
$$
Via a symmetric argument that evaluates the energy of the true ground-state $\psi_{0}$ under the model $\widehat{H}_{\theta}$, one finds the parallel bound $\lVert \psi_{\theta} - \psi_{0} \rVert_{L^{2}} \leq \frac{\epsilon_{\mathrm{score}}}{\sqrt{\Delta_{\theta}}} + \sqrt{\frac{2 \lambda_{\theta}}{\Delta_{\theta}}}$. By taking the minimum of the two over the conservative error yields $\epsilon_{\theta, \mathrm{nc}}$, which concludes the proof.
\end{proof}

Thus, the standard static floor holds with an additive penalty proportional scaling as the square root of the non-conservativity defect.

\begin{remark}[Langevin Generator on $\rhot$]\label{rem:NESS}
The action of the Langevin Generator $\mathcal{L}^{*}$ on $\rhot$ is given by
\begin{align*}
\mathcal{L}^{*}\rhot 
&=-(\nabla\cdot A^{\theta} + \nabla \log \rhot \cdot A^{\theta})\rhot
\\
&=- \rhot\left(2E_{0} - \frac{1}{2}|A^{\theta}|^{2} \right)
\end{align*}
Thus $\tilde{\rho}_{t}$ is the non-equilibrium steady-state of Langevin evolution on the non-conservative $\stheta$ if the difference to the induced conservative score $A^{\theta}=S^{\theta}-\nabla \log \rhot$ is divergence-free $\nabla \cdot A^{\theta} = 0$ (solenoidal) and if the rotational flux acts orthogonally to the isoclines of the induced conservative density $\nabla \log \rhot \cdot A^{\theta} = 0$. Thus, while detailed balance is broken, this $\tilde{\rho}_{t}$ can represent the non-equilibrium steady state (NESS) of the Fokker-Planck equation.
\end{remark}

\begin{proof}
Recall that locally, the stationary Schr{\"o}dinger equation $\hat{H}_{\theta}\psi_{0} = E_{0} \psi_{0}$ implies by~\eqref{eq:bohm_score_id} that
\begin{align*}
&-\nabla^{2}\psi_{0} + \left(
\frac{1}{2}\nabla\cdot \stheta + \frac{1}{4} \lVert \stheta \rVert^{2}
 \right)\psi_{0} = E_{0}\psi_{0} \\
 &-\frac{\nabla^{2}\psi_{0}}{\psi_{0}} + 
\frac{1}{2}\nabla\cdot \stheta + \frac{1}{4} \lVert \stheta \rVert^{2}
 = E_{0} \\
 &-\frac{1}{2}\nabla \cdot \nabla \log \tilde{\rho} - \frac{1}{4} \lVert \nabla \log \tilde{\rho} \rVert^{2} +\frac{1}{2}\nabla\cdot \stheta + \frac{1}{4} \lVert \stheta \rVert^{2} = E_{0} 
\end{align*}
Defining $A^{\theta}=\stheta-\nabla \log \rhot$ yields
\begin{align*}
&\frac{1}{2}\nabla\cdot (\stheta- \nabla \log \tilde{\rho}) - \frac{1}{4} \lVert \nabla \log \tilde{\rho} \rVert^{2}  + \frac{1}{4} \lVert \stheta \rVert^{2} = E_{0} \\
&\frac{1}{2}\nabla\cdot A^{\theta} - \frac{1}{4} \lVert \nabla \log \tilde{\rho} \rVert^{2}  + \frac{1}{4} \lVert \stheta \rVert^{2} = E_{0}\\
&\frac{1}{2}\nabla\cdot A^{\theta} +\frac{1}{4} \lVert A^{\theta} \rVert^{2} + \frac{1}{2} A^{\theta} \cdot \nabla \log \rhot = E_{0}
\end{align*}
Expanding the square and cancelling terms, this recovers
\begin{align*}
&- \mathcal{L}^{*} \rhot =- \nabla^{2}\rhot + 
\nabla\cdot( \stheta \rhot) \\
&=- \nabla^{2}\rhot + 
\nabla\cdot \stheta \rhot + \nabla \log \rhot \cdot \stheta \rhot  \\
&=- \nabla^{2}\rhot + 
\nabla\cdot (A^{\theta} + \nabla \log \rhot) \rhot + \nabla \log \rhot \cdot (A^{\theta} + \nabla \log \rhot) \rhot
\end{align*}
Since
\[\nabla \cdot \nabla \log \rhot = \frac{\nabla^{2}\rhot}{\rhot} - |\nabla \log \rhot |^{2}\]
We then find
\begin{align*}
&= \nabla\cdot A^{\theta}\rhot  -| \nabla \log \rhot|^{2} \rhot + \nabla \log \rhot \cdot A^{\theta}\rhot  + |\nabla \log \rhot|^{2} \rhot \\
&=\nabla\cdot A^{\theta}\rhot + \nabla \log \rhot \cdot A^{\theta}\rhot  = \rhot  \left(2E_{0} - \frac{1}{2}|A^{\theta}|^{2} \right)
\end{align*}
\end{proof}

We remark that the non-asymptotic upper-bound of Theorem~\ref{thrm:adiabatic_FK} requires a modification for when $\stheta$ has rotational contributions. In particular, one defines the moving ground-state instead to be the non-equilibrium steady-state given by the ground-state of the Score Hamiltonian $\psi_{0}(t) = \sqrt{\tilde{\rho}_{t}}$ and instead defines tracking in terms of this non-equilibrium steady-state.

\begin{proposition}\label{prop:adiabatic-bound_NC}
Suppose $\stheta_{t_\tau}$ is a vector-field such that,
for all $t$, $\widehat H_t$ is self-adjoint and the spectral gap $\Delta_t$ of
$\widehat H_t$ is strictly positive with ground-state $\sqrt{\tilde{\rho}_{t}}$ and induced conservative score $\nabla \log \tilde{\rho}_{t_\tau}$. Denote 
\begin{align*}
&M_{t,-} =\max\left\{0,-\tfrac{1}{2} \partial_t \log \tilde{\rho}_t(x)\right\}, \\
&N_{t,-} = \frac{1}{4}|\nabla \log \rhot_{t}-S^{\theta}_{t}|^{2}.
\end{align*}
Assume $\partial_t \log \tilde{\rho}_t \in L^2(\tilde{\rho}_t)$ and $\frac{1}{4}|\nabla \log \rhot_{t}-S^{\theta}_{t}|^{2} \in L^2(\tilde{\rho}_t)$, and that there exist $a(t),b(t)\geq 0$ and $\alpha(t),\beta(t)\geq 0$ so that they respectively satisfy the form-bounds
\begin{align*}
&\langle u,M_{t,-}u\rangle_{L^2}
    \leq
    a(t)\langle u,\widehat H_t u\rangle_{L^2}
    +
    b(t)\langle u,u\rangle_{L^2},\\
&\langle u,N_{t,-}u\rangle_{L^2}
    \leq
    \alpha(t)\langle u,\widehat H_t u\rangle_{L^2}
    +
    \beta(t)\langle u,u\rangle_{L^2} .
\end{align*}
for all $u$ orthogonal to the ground state $\sqrt{\tilde{\rho}_t}$. Assume in addition that the rotational coefficients satisfy $\alpha(t)<1$ and $\Delta_{t}> \frac{\beta(t)}{1-\alpha(t)}$. Then:
\begin{align*}
\notag \sqrt{
        \chi^2\!\left(
            \mu_\tau\,\middle\|\,\tilde{\rho}_{t_{\tau}}
        \right)
}&\leq    \sqrt{\chi^2(\mu_0\|\tilde{\rho}_0)}\exp\left(-\int_0^\tau \Gamma(s)ds\right)  \\
&\quad + \sup_{s \le \tau} \biggl(\,\, \frac{|\dot{t}(s)| \lVert \partial_{t} \log \rhot_{t(s)} \rVert_{L^2 (\rho_{t(s)} )}}{\Gamma(s)} \\
&\quad +  \frac{1}{2\Gamma(s)} \sqrt{\mathrm{Var}_{\tilde{\rho}_{t(s)}}|S^{\theta}_{t(s)} - \nabla \log \rhot_{t(s)}|^2} \biggr).
\end{align*}
where 
\begin{align*}
&\Gamma(\tau):=
    \left(1-|\dot t(\tau)|a(t_\tau)-
    \alpha(t_\tau)\right)(\Delta_{t_\tau}+ \lambda_{\theta}(t)) \\
    &- |\dot t(\tau)|b(t_\tau) - \beta(t_{\tau})
\end{align*}
is the effective spectral gap, and we take $\dot t=dt/d\tau$ sufficiently small that $\Gamma(\tau)$ is strictly positive
for all $\tau$.
\end{proposition}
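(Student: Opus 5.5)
Throughout, write $\widehat{H}_t$ for the Score Hamiltonian of $\stheta_t$, $\psi_0(t)=\sqrt{\tilde\rho_t}$ for its ground state with energy $\lambda_\theta(t)$, and $A_t=\stheta_t-\nabla\log\tilde\rho_t$. The plan is to repeat the energy argument of the proof of Thm.~\ref{thrm:adiabatic_FK}, with $\psi_0(t)$ as the tracked state. The one structural change is that the Langevin generator of $\stheta_t$ is no longer unitarily equivalent to $-\widehat H_t$. The discrepancy must therefore be written explicitly and absorbed using the second form bound (the one involving $N_{t,-}$).

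\textbf{Step 1: derive the transformed equation.} I would set $\Psi_\tau=\mu_\tau/\sqrt{\tilde\rho_{t_\tau}}$ and conjugate $L^{*}_{t}\mu=\nabla^2\mu-\nabla\cdot(\stheta\mu)$ by $\sqrt{\tilde\rho_t}$. Using $\stheta=\nabla\log\tilde\rho+A$ and the ground-state identity for $\widehat H_t\psi_0=\lambda_\theta\psi_0$ from Remark~\ref{rem:NESS}, I expect
\[
\partial_\tau\Psi=-\widehat H_t\Psi+\lambda_\theta\Psi-\widehat R_t\Psi-\widehat G_\tau\Psi,
\]
where $\widehat G_\tau=\dot t\,M_t$ is the gauge term as before. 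The rotational remainder $\widehat R_t$ consists of the drift $A\cdot\nabla$ (antisymmetric up to a divergence term) together with the multiplicative piece $N_t=\tfrac14|A|^2$ (minus the constant $\lambda_\theta$, in the form of Remark~\ref{rem:NESS}). Mass conservation still gives $\langle\Psi_\tau,\psi_0\rangle=1$, so I can decompose $\Psi=\psi_0+\Psi_\perp$ with $\Psi_\perp\perp\psi_0$.

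\textbf{Step 2: differential inequality for $\omega=\|\Psi_\perp\|$.} Compute $\tfrac12\partial_\tau\omega^2=\langle\Psi_\perp,\partial_\tau\Psi-\dot t\partial_t\psi_0\rangle$ and treat the terms as follows.
\begin{itemize}
\item The $\widehat H_t$ term gives $-\langle\Psi_\perp,\widehat H_t\Psi_\perp\rangle$. Relative to the ground level, the spectral gap bounds this by $-(\Delta_t+\lambda_\theta)\omega^2$ in absolute energy, which is where $\Delta_t+\lambda_\theta$ in $\Gamma$ comes from.
\item The gauge quadratic term is absorbed by the $(a,b)$ form bound exactly as in Thm.~\ref{thrm:adiabatic_FK}.
\item The rotational quadratic term $\langle\Psi_\perp,N_t\Psi_\perp\rangle$ is absorbed by the $(\alpha,\beta)$ form bound. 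Together with the previous item this yields the effective gap $-\Gamma(\tau)\omega^2$.
\item The inhomogeneous terms are $-\langle\Psi_\perp,\widehat G\psi_0+\dot t\partial_t\psi_0\rangle$, bounded by $|\dot t|\,\|\partial_t\log\tilde\rho\|_{L^2(\tilde\rho)}\,\omega$ as before, and $-\langle\Psi_\perp,(N_t-\lambda_\theta)\psi_0\rangle$.
\item For the last term, orthogonality lets me subtract the mean $\lambda_\theta=\mathbb E_{\tilde\rho}N_t$ (by \eqref{eq:lambda_theta}). Cauchy--Schwarz then gives at most $\omega\,\sqrt{\mathrm{Var}_{\tilde\rho}N_t}=\tfrac14\omega\sqrt{\mathrm{Var}_{\tilde\rho}|A|^2}$, which fits within the stated $\tfrac12$ constant.
\end{itemize}

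\textbf{Step 3: close with Gr\"onwall.} Dividing by $\omega$ gives
\[
\dot\omega\le-\Gamma\omega+|\dot t|\,\|\partial_t\log\tilde\rho\|+\tfrac12\sqrt{\mathrm{Var}|A|^2}.
\]
Gr\"onwall, followed by the same multiply-and-divide-by-$\Gamma$ trick, bounds the forcing integral by the supremum. Finally, identify $\omega$ with $\sqrt{\chi^2(\mu_\tau\|\tilde\rho_{t_\tau})}$ as in the $L^2$-to-$\chi^2$ remark.

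\textbf{Main obstacle.} The hard part is Step 1 and the resulting cross-terms. The drift $A\cdot\nabla$ contributes terms like $\langle\Psi_\perp,A\cdot\nabla\Psi_\perp\rangle$ and $\langle\Psi_\perp,(\nabla\cdot A+\nabla\log\tilde\rho\cdot A)\psi_0\rangle$. These vanish only under the isoclinic solenoidal conditions of Remark~\ref{rem:NESS}. I would first try to use the NESS identity $\nabla\cdot A+\nabla\log\tilde\rho\cdot A=2\lambda_\theta-\tfrac12|A|^2$ to rewrite every drift contribution as a multiplicative function of $|A|^2$ plus an antisymmetric part that drops from $\langle\Psi_\perp,\cdot\,\Psi_\perp\rangle$ after integration by parts. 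If that fails in general, I would state the isoclinic condition as the regime where the argument goes through.
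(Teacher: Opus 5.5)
Your route is the paper's, but the step you defer as the ``main obstacle'' is where the proposition is actually decided, and both things you say about it are wrong.

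\textbf{The drift terms and the isoclinic fallback.} The drift terms do not vanish only for isoclinic solenoidal $A$. Retreating to that regime would also not prove the statement: there the identity of Remark~\ref{rem:NESS} forces $\tfrac14|A|^2\equiv\lambda_\theta$, so the variance term you are meant to produce is identically zero. Your first instinct, rewriting via the ground-state identity, is the right one and is what the paper does. Carried out, it gives the exact transformed equation. Using $\widehat H_t\psi_0=\lambda_\theta\psi_0$ to eliminate $\nabla^2\psi_0$, the conjugated generator is $-\widehat H_t+(\tfrac12\stheta\cdot A-\lambda_\theta)-A\cdot\nabla$. The same identity turns $\tfrac12\stheta\cdot A-\lambda_\theta$ into $\tfrac14|A|^2-\tfrac12\nabla\cdot A$, so
\[
\partial_\tau\Psi=-\widehat H_t\Psi+\tfrac14|A|^2\Psi-\mathcal T_A\Psi-\widehat G_\tau\Psi,\qquad \mathcal T_A:=A\cdot\nabla+\tfrac12(\nabla\cdot A).
\]
Two features of this equation matter.
\begin{itemize}
\item $\lambda_\theta$ has dropped out entirely. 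Your explicit $+\lambda_\theta\Psi$ must therefore cancel exactly; otherwise it would eat the $\lambda_\theta$ in $\Gamma$.
\item There is a multiplicative $-\tfrac12\nabla\cdot A$ that your description of $\widehat R_t$ does not list. It pairs with the drift into the skew-adjoint $\mathcal T_A$, so $\langle\Psi_\perp,\mathcal T_A\Psi_\perp\rangle=0$ after one integration by parts, with no condition on $A$.
\end{itemize}

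\textbf{The constant.} Your constant is off by a factor of two, because the inhomogeneous drift term does not vanish. The ground-state identity gives
\[
\mathcal T_A\psi_0=\tfrac12(A\cdot\nabla\log\tilde\rho_t+\nabla\cdot A)\psi_0=(\lambda_\theta-\tfrac14|A|^2)\psi_0 .
\]
By orthogonality, $-\langle\Psi_\perp,\mathcal T_A\psi_0\rangle=\tfrac14\langle\Psi_\perp,|A|^2\psi_0\rangle$, which is a second copy of the multiplicative source. The total rotational forcing is therefore
\[
\tfrac12\bigl\langle\Psi_\perp,(|A|^2-4\lambda_\theta)\psi_0\bigr\rangle\le\tfrac{\omega}{2}\sqrt{\mathrm{Var}_{\tilde\rho_t}|A|^2},
\]
not $\tfrac{\omega}{4}\sqrt{\mathrm{Var}_{\tilde\rho_t}|A|^2}$. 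The stated $\tfrac12$ is attained exactly, so there is no slack for the drift to ``fit within''. With these two facts in place, your Steps~2--3 go through unchanged and coincide with the paper's proof.
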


\begin{proof}
Let us denote the Helmholtz decomposition of the score into its divergence-free and gradiential components as $S^{\theta} = \nabla \phi^{\theta} + R^{\theta}$ where $\nabla \cdot R^{\theta} =0$. The associated Score Hamiltonian, $\widehat{H}_t^\theta = -\nabla^2 + \frac{1}{2}\nabla \cdot S^\theta + \frac{1}{4}\|S^\theta\|^2,$ has unique ground-state defined by the density of the induced conservative score $\psi_{0}(t) = \sqrt{\tilde{\rho}_{t}}$ with ground-state energy $\lambda_{\theta}$ from \eqref{eq:lambda_theta} -- even in the case of non-conservative $\stheta$. Now, let us denote $\Psi_{\tau} = {\mu_{\tau}}/{\sqrt{\rhot_{t_{\tau}}}}$. The Fokker-Planck evolution on $\mu_\tau = \Psi \rhot$ implies
\begin{align*}
    \partial_{\tau}\mu = \partial_{\tau}(\Psi_\tau \psi_{0}(t_\tau )) = \partial_{\tau } \Psi \psi_{0} + \frac{\dot{t}}{2} \Psi \psi_{0} \partial_{t} \log \rhot_{t}
\end{align*}
Meanwhile, the action of the Langevin adjoint $\mathcal{L}^{*}$ implies
\begin{equation}\label{eq:FP_rhs}
\mathcal{L}^{*} \mu_{\tau} = \nabla^{2}\mu_{\tau} - 
\nabla\cdot( \stheta \mu_{\tau}) 
\end{equation}
where
\[
\nabla^{2}\mu_{\tau} = \nabla^{2}(\Psi_\tau \psi_{0}) = \psi_{0} \nabla^{2}\Psi_\tau + 2 \nabla \Psi_\tau \cdot \nabla \psi_{0} + \Psi_\tau \nabla^{2}\psi_{0}
\]
and 
\[
- \nabla \cdot (\Psi_\tau \psi_{0} \stheta) = -\Psi_{\tau} (\nabla \psi_{0} \cdot \stheta) - \psi_{0} \nabla \Psi_{\tau} \cdot \stheta -\Psi_\tau \psi_{0} (\nabla \cdot \stheta).
\]
Now, recall that $\mathcal{L}^{*}\psi_{0}^{2} =\mathcal{L}^{*} \rhot = -\rhot (2E_{0} - \frac{1}{2}|A^{\theta}|^{2})$ where $\stheta-\nabla\log \rhot = A^{\theta}$, and that
\begin{align*}
-\psi_{0} (E_{0} - \frac{1}{4}|A^{\theta}|^{2})&=\frac{1}{2\psi_{0}}\mathcal{L}^{*}\psi_{0}^{2} \\
&= \nabla^{2}\psi_{0} + \frac{|\nabla\psi_{0}|^{2}}{\psi_{0}} - \frac{\psi_{0}}{2} \nabla \cdot \stheta -  \stheta \cdot \nabla \psi_{0}
\end{align*}
Now, grouping terms and using the condition, we simplify the $\Psi_{\tau}$-leading terms,
\begin{align*}
&\Psi_{\tau}\left(
\nabla^{2}\psi_{0} - \nabla \psi_{0} \cdot \stheta - \psi_{0} \nabla \cdot \stheta
\right) \\
&=\Psi_{\tau}\left(
\frac{\mathcal{L}^{*}\psi_{0}^{2}}{2\psi_{0}} - \frac{|\nabla \psi_{0}|^{2}}{\psi_{0} } -\frac{\psi_{0}}{2} \nabla \cdot \stheta
\right) \\
&= \Psi_{\tau}\left(-\psi_{0} (E_{0} - \frac{1}{4}|A^{\theta}|^{2}) - \frac{|\nabla \psi_{0}|^{2}}{\psi_{0} } -\frac{\psi_{0}}{2} \nabla \cdot \stheta
\right) \\
&=\Psi_{\tau} \psi_{0}\left(-E_{0} +\frac{1}{4}|A^{\theta}|^{2} -\frac{1}{4} |\nabla \log \rhot|^{2} -\frac{1}{2} \nabla \cdot \stheta
\right) \\
&=\Psi_{\tau} \psi_{0}\biggl(
\left(-\frac{1}{4}|\stheta|^{2}-\frac{1}{2} \nabla \cdot \stheta \right)
\\
&+\left(\frac{1}{4}|\stheta|^{2}
-E_{0} +\frac{1}{4}|A^{\theta}|^{2} -\frac{1}{4} |\nabla \log \rhot|^{2} \right)
\biggr)
\end{align*}
Observe that
\begin{align*}
&\frac{1}{4}|\stheta|^{2}-E_{0} +\frac{1}{4}|A^{\theta}|^{2} -\frac{1}{4} |\nabla \log \rhot|^{2}  \\
&=\frac{1}{4}|\stheta|^{2}-\frac{1}{2}\nabla\cdot A^{\theta} -\frac{1}{4} | A^{\theta} |^{2} - \frac{1}{2} A^{\theta} \cdot \nabla \log \rhot \\
&+\frac{1}{4}|A^{\theta}|^{2} -\frac{1}{4} |\nabla \log \rhot|^{2}  \\
&=\frac{1}{4}|A^{\theta}|^{2} + \frac{1}{4}|\nabla \log \rhot|^{2} + \frac{1}{2} A^{\theta} \cdot \nabla \log \rhot
\\
&-\frac{1}{2}\nabla\cdot A^{\theta} - \frac{1}{2} A^{\theta} \cdot \nabla \log \rhot  -\frac{1}{4} |\nabla \log \rhot|^{2} \\
&=\frac{1}{4}|A^{\theta}|^{2} -\frac{1}{2}\nabla\cdot A^{\theta}
\end{align*}
Letting $\frac{1}{4}|A^{\theta}|^{2} -\frac{1}{2}\nabla\cdot A^{\theta}  = -W_{\theta}$ and adding back the remaining terms from the adjoint we see
\begin{align*}
&\mathcal{L}^{*}\mu_{\tau}=\Psi_{\tau} \psi_{0}\biggl(  -\frac{1}{2} \nabla \cdot \stheta
-\frac{1}{4} |\stheta|^{2} -W_{\theta}
\biggr) \\
&+\psi_{0}  \nabla^{2}\Psi_{\tau} + 2 \nabla \Psi_{\tau} \cdot \nabla \psi_{0} - \psi_{0} \nabla \Psi_{\tau} \cdot \stheta\\
&=\psi_{0} \biggl( 
\left(\nabla^{2} -\frac{1}{2} \nabla \cdot \stheta
-\frac{1}{4} |\stheta|^{2} \right)\Psi_{\tau}\\
&- 
W_{\theta}\Psi_{\tau} - \nabla \Psi_{\tau} \cdot \stheta \biggr) \\
&+ 2 \nabla \Psi_{\tau} \cdot \nabla \psi_{0} \\
&=\psi_{0}\biggl( 
-\widehat{H}_t^\theta \Psi_{\tau}
 -W_{\theta}\Psi_{\tau} - \nabla \Psi_{\tau} \cdot \stheta \biggr) + 2 \nabla \Psi_{\tau} \cdot \nabla \psi_{0}
\end{align*}
Equating both sides,
\begin{align*}
&\partial_{\tau}\mu = \partial_{\tau}(\Psi_\tau \psi_{0}(t_\tau )) \\
&= \partial_{\tau } \Psi_{\tau} \psi_{0} + \frac{\dot{t}}{2} \Psi_{\tau} \psi_{0} \partial_{t} \log \rhot_{t} = \mathcal{L}^{*}\mu_{\tau} 
\end{align*}
and then dividing both by $\psi_{0}$ and rearranging, we find
\begin{align*}
\partial_{\tau } \Psi_{\tau} &=-\widehat{H}_t^\theta \Psi_{\tau} - \frac{\dot{t}}{2} \Psi \partial_{t} \log \rhot_{t} \\
&- 
W_{\theta}\Psi_{\tau}  -  \nabla \Psi_{\tau} \cdot A^{\theta}
\end{align*}
Using $\frac{1}{4}|A^{\theta}|^{2} -\frac{1}{2}\nabla\cdot A^{\theta} = -W_{\theta}$, we see
\begin{align*}
&- 
W_{\theta}\Psi_{\tau}  -  \nabla \Psi_{\tau} \cdot A^{\theta}\\
&\quad =\frac{1}{4}|A^{\theta}|^{2}\Psi_{\tau} -\frac{1}{2}\nabla\cdot A^{\theta}\,\Psi_{\tau}  -  A^{\theta}\cdot \nabla \Psi_{\tau} \\
&\quad =\frac{1}{4}|A^{\theta}|^{2}\Psi_{\tau} -(A^{\theta}\cdot \nabla + \frac{1}{2}\nabla\cdot A^{\theta})\,\Psi_{\tau}  \\
&\quad :=\frac{1}{4}|A^{\theta}|^{2}\Psi_{\tau} -\mathcal{T}_{A,\theta}\,\Psi_{\tau} 
\end{align*}
So that the evolution may be given as
\begin{align*}
    &\partial_{\tau } \Psi_{\tau} =-\widehat{H}_t^\theta \Psi_{\tau} - \frac{\dot{t}}{2} \Psi \partial_{t} \log \rhot_{t} +\frac{1}{4}|A^{\theta}|^{2}\Psi_{\tau} -\mathcal{T}_{A,\theta}\,\Psi_{\tau}
\end{align*}
Recalling the proof of Theorem~\ref{thrm:adiabatic_FK}, we need only control the new terms $-\langle \Psi_{\perp,\tau} | \mathcal{T}_{A,\theta} \Psi_{\tau}\rangle$ and $+\frac{1}{4}\langle \Psi_{\perp,\tau} |\,  |A^{\theta}|^2\,\,| \Psi_{\tau} \rangle$ in addition to the original Gr{\"o}nwall bound. Observe that for $\Psi_{\tau} = \Psi_{\perp,\tau} + \psi_{0}$ one has
\begin{align*}
&-\langle \Psi_{\perp,\tau} | \,(A^{\theta}\cdot \nabla + \frac{1}{2}\nabla\cdot A^{\theta}) \,\Psi_{\tau}\rangle \\
&\quad =-\langle \Psi_{\perp,\tau} | \,(A^{\theta}\cdot \nabla + \frac{1}{2}\nabla\cdot A^{\theta}) \,\Psi_{\perp,\tau}\rangle\\
&\quad -\langle \Psi_{\perp,\tau} | \,(A^{\theta}\cdot \nabla + \frac{1}{2}\nabla\cdot A^{\theta}) \,\psi_{0}\rangle
\end{align*}
For the first, observe
\begin{align*}
&-\langle \Psi_{\perp,\tau} | \,(A^{\theta}\cdot \nabla + \frac{1}{2}\nabla\cdot A^{\theta}) \,\Psi_{\perp,\tau}\rangle \\
&= -\int \Psi_{\perp,\tau} A^{\theta} \cdot \nabla\Psi_{\perp,\tau} - \frac{1}{2}\int \Psi_{\perp,\tau}^{2} (\nabla \cdot A^{\theta})
\end{align*}
As $\mathcal{T}_{A,\theta}$ is skew-symmetric, with an integration by parts, we find that the previously displaced expression becomes
\begin{align*}
\frac{1}{2}\int \Psi_{\perp,\tau}^{2} (\nabla \cdot A^{\theta})  - \frac{1}{2}\int \Psi_{\perp,\tau}^{2} (\nabla \cdot A^{\theta}) = 0
\end{align*}

For $\psi_{0}$ terms, we are left with
\begin{align*}
&-\langle \Psi_{\perp,\tau} | \mathcal{T}_{A,\theta} \psi_{0}-\frac{1}{4} |A_{\theta}|^{2} \psi_{0}\rangle \\
&=-\langle \Psi_{\perp,\tau} | A^{\theta}\cdot \nabla \psi_{0} + \frac{1}{2}\nabla\cdot A^{\theta} \psi_{0}-\frac{1}{4} |A_{\theta}|^{2} \psi_{0}\rangle \\
&=-\langle \Psi_{\perp,\tau} | (\tfrac{1}{2}A^{\theta}\cdot \nabla \log \rhot+ \tfrac{1}{2}\nabla\cdot A^{\theta} + \tfrac{1}{4} |A^{\theta}|^{2} -\tfrac{1}{2} |A^{\theta}|^{2} )\psi_{0}\rangle \\
&=-\langle \Psi_{\perp,\tau} | (E_{0} -\tfrac{1}{2} |A^{\theta}|^{2} ) \psi_{0}\rangle 
\end{align*}
Use that $\Psi_{\perp,\tau} \perp \psi_{0}$ we can subtract the mean to find that this becomes
\begin{align*}
&=+\langle \Psi_{\perp,\tau} | \tfrac{1}{2} |A^{\theta}|^{2}  \psi_{0}\rangle = \langle \Psi_{\perp,\tau} | \tfrac{1}{2} (|A^{\theta}|^{2} -4 \lambda_{\theta}) \psi_{0}\rangle \\
&\leq \frac{1}{2} \lVert \Psi_{\perp,\tau} \rVert_{L^{2}} \lVert (|A^{\theta}|^2-4\lambda_{\theta})\psi_{0}\rVert_{L^{2}} \\
&= \frac{\omega}{2} \sqrt{\mathrm{Var}_{\tilde{\rho}}|A^{\theta}|^2} .
\end{align*}
The final term is $+\frac{1}{4}\langle \Psi_{\perp,\tau} |  \,\,|A^{\theta}|^{2}\,\Psi_{\perp,\tau}\rangle$. If we denote our multiplication operator $\frac{1}{4}|A^{\theta}|^{2} \in L^{2}$ and there exist $\alpha(t),\beta(t) \geq 0$ so that the relative form bound holds
\[
\langle u, \frac{1}{4}|A^{\theta}|^{2}  u \rangle_{L^{2}} \leq \alpha(t) \langle u, \widehat{H}_{t} u \rangle_{L^{2}} + \beta(t) \langle u, u \rangle_{L^{2}}
\]
then,
\begin{align*}
&\langle \Psi_{\perp,\tau} |\, \frac{1}{4}|A^{\theta}|^{2} \,\Psi_{\perp,\tau}\rangle \\
&\quad \leq \alpha(t) \langle \Psi_{\perp,\tau}, \widehat{H}_{t} \Psi_{\perp,\tau} \rangle_{L^{2}} 
+ \beta(t) \lVert \Psi_{\perp,\tau} \rVert_{L^{2}}^{2} 
\end{align*}
Thus, collecting with the bounds of Theorem~\ref{thrm:adiabatic_FK} one has
\begin{align*}
&- \langle \Psi_{\perp,\tau} |\, \widehat{H}_{\tau} \,\Psi_{\perp,\tau}\rangle
- \langle \Psi_{\perp,\tau} |\, \widehat{G}_{\tau} \,\Psi_{\perp,\tau}\rangle+
\langle \Psi_{\perp,\tau} |\, \frac{1}{4}|A^{\theta}|^{2} \,\Psi_{\perp,\tau}\rangle \\
&\leq -(1-(|\dot{t}|a(t) + \alpha(t))) \langle \Psi_{\perp,\tau} |\, \widehat{H}_{\tau} \,\Psi_{\perp,\tau}\rangle\, \\
&+ (|\dot{t}|b(t) + \beta(t)) \lVert \Psi_{\perp,\tau} \rVert_{L^{2}}^{2}.
\end{align*}
As we assume the rotational coefficients satisfy $\alpha(t)<1$ when $|\dot{t}|$ is taken sufficiently small one has $(1-(|\dot{t}|a(t) + \alpha(t)))>0$. Now, note for $\Psi_{\perp,\tau}$ orthogonal to the ground-state that
\begin{align*}
\langle \Psi_{\perp,\tau}, \widehat{H}_{t} \Psi_{\perp,\tau} \rangle_{L^{2}} &\geq E_{1}(t) \lVert \Psi_{\perp,\tau} \rVert_{L^{2}}^{2} \\
&= E_{1}(t)\omega^{2} = (\Delta_t + \lambda_{\theta}(t)) \,\omega^{2}
\end{align*}
for $\Delta_{t}=E_{1}-E_{0}$ is the spectral gap of the non-conservative Score Hamiltonian, and where the eigenvalue $E_{1}(t)$ of the non-conservative Score Hamiltonian is lifted by the non-zero ground state energy $\lambda_{\theta}(t)$ to $E_{1}(t)=E_{0}(t)+\Delta_t = \lambda_{\theta}(t) + \Delta_t$. Thus, for $1-(|\dot{t}|a(t) + \alpha(t))>0$ one has
\begin{align*}
&-(1-(|\dot{t}|a(t) + \alpha(t))) \langle \Psi_{\perp,\tau} |\, \widehat{H}_{\tau} \,\Psi_{\perp,\tau}\rangle \\
&\leq - (1-(|\dot{t}|a(t) + \alpha(t)))(\Delta_t + \lambda_{\theta}(t))\lVert \Psi_{\perp,\tau} \rVert_{L^{2}}^{2}
\end{align*}
Which, when combined with the right term yields the quadratic component of the bound on $\omega^{2}$
\begin{align*}
&\leq -[ (1-(|\dot{t}|a(t) + \alpha(t)))(\Delta_t + \lambda_{\theta}(t)) \\
&- (|\dot{t}|b(t) + \beta(t)) ] \lVert \Psi_{\perp,\tau} \rVert_{L^{2}}^{2} \\
&:= - \Gamma(\tau) \omega^{2}
\end{align*}
Where $\Delta_{t}>\frac{\beta(t)}{1-\alpha(t)}$ implies that there exists a sufficiently slow schedule $|\dot{t}|$ so that $\Gamma(\tau)>0$.
Thus, the argument of Theorem~\ref{thrm:adiabatic_FK} proceeds with a rotationally-taxed effective spectral gap, expressed as
\[
\Gamma(\tau) = (1-|\dot{t}|a(t) - \alpha(t))(\Delta_t + \lambda_{\theta}(t)) - |\dot{t}| b(t) - \beta(t)
\]
We then are left with a linear $\omega$-dependent term of
\begin{align*}
&\omega\left( |\dot{t}| \lVert \partial_{t}\log \rhot_{t}\rVert_{L^{2}({\rhot}_{t})}
+ \frac{1}{2} \sqrt{\mathrm{Var}_{\tilde{\rho}_{t}}|A^{\theta}|^2}
\right). 
\end{align*}
Analogously to Theorem~\ref{thrm:adiabatic_FK}, this yields the bound of
\begin{align*}
\omega(\tau) &\le  \omega(0) \exp\left(-\int_0^\tau \Gamma(s)ds\right)  \\
&+ \int_0^\tau e^{-\int_s^\tau \Gamma(r)dr}\biggl(|\dot{t}(s)| \lVert \partial_{t} \log \rhot_{t(s)} \rVert_{L^2 (\rhot_{t(s)} )} \\
&\qquad \qquad +  \frac{1}{2} \sqrt{\mathrm{Var}_{\tilde{\rho}_{t(s)}}|A^{\theta}|^2} \biggr) ds \\
&\leq  \omega(0) \exp\left(-\int_0^\tau \Gamma(s)ds\right)  \\
&+ \sup_{s \le \tau} \biggl( \frac{|\dot{t}(s)| \lVert \partial_{t} \log \rhot_{t(s)} \rVert_{L^2 (\rhot_{t(s)} )}}{\Gamma(s)} \\
&+  \frac{1}{2\Gamma(s)} \sqrt{\mathrm{Var}_{\tilde{\rho}_{t(s)}}|A^{\theta}|^2} \biggr).
\end{align*}
Since $A^{\theta}_{t(s)}=S^{\theta}_{t(s)} - \nabla \log \rhot_{t(s)}$, one may express the upper bound in terms of the error of the induced conservative score, concluding the proof.
\end{proof}

One of the key features of the non-conservative case is that the final term in the supremum is not controlled by taking $|\dot{t}|$ asymptotically smaller, indicating a harder limit to adiabatic tracking imposed by non-conservativity. A natural question is whether there is a setting of $\alpha,\beta$ in the form bound implies rotational fields may accelerate the adiabatic process via the raised ground-state $\lambda_{\theta}(t)$. Let $\langle \psi_{1}, \frac{1}{4}|A^{\theta}|^{2}  \psi_{1} \rangle_{L^{2}} = \gamma_{\theta}$. Taking $u = \psi_{1} \perp \psi_{0}$ to be the first excited state implies the universal inequality
\begin{align*}
&\langle \psi_{1}, \frac{1}{4}|A^{\theta}|^{2}  \psi_{1} \rangle_{L^{2}} \leq \alpha(t) \langle \psi_{1}, \widehat{H}_{t} \psi_{1} \rangle_{L^{2}} + \beta(t) \lVert \psi_{1} \rVert_{L^{2}}^{2} \\
&=\alpha(t) (\Delta_t + \lambda_{\theta}(t)) + \beta(t) \\
&-\beta \leq \alpha(t) (\Delta_t + \lambda_{\theta}(t)) - \gamma_{\theta}
\end{align*}
While, in the spectral gap, this implies
\begin{align*}
&\Gamma(\tau) = (1-|\dot{t}|a(t) - \alpha(t))(\Delta_t + \lambda_{\theta}(t)) - |\dot{t}| b(t) - \beta(t) \\
&\leq (\Delta_t + \lambda_{\theta})(t)-|\dot{t}|a(t)(\Delta_t + \lambda_{\theta}(t)) \\
&- \alpha(t)(\Delta_t + \lambda_{\theta}(t)) - |\dot{t}| b(t) \\
&+\alpha(t) (\Delta_t + \lambda_{\theta}(t)) - \gamma_{\theta} \\
&=(1-|\dot{t}|a(t))\Delta_t  - |\dot{t}| b(t) + (1-|\dot{t}|a(t))\lambda_{\theta} - \gamma_{\theta} 
\end{align*}
For $|\dot{t}| \downarrow 0$, this reveals
\begin{align*}
\Gamma(\tau) \leq \Delta_t + \lambda_{\theta} - \gamma_{\theta}
\end{align*}
So that the raised ground-state $\lambda_{\theta}$ is not ``free'' in raising the effective gap, depending on the sign of the residual
\begin{align*}
&\lambda_{\theta} - \gamma_{\theta} = \langle \psi_{0}, \frac{1}{4}|A^{\theta}|^{2} \psi_{0} \rangle - \langle \psi_{1}, \frac{1}{4}|A^{\theta}|^{2} \psi_{1} \rangle
\end{align*}
This can be positive and raise the upper-bound on the gap beyond the conservative case of $\Delta_{t}$; for instance when $A^{\theta}$ has high rotational kinetic energy (stirs heavily) over $\psi_{0}$ and low rotational energy (stirs weakly) over the first excited state $\psi_1$.

The low-variance regime $\mathrm{Var}_{\tilde{\rho}}[X] \downarrow 0$ implies $|A^{\theta}|^2 = \mathrm{const}$ $\tilde{\rho}$-almost everywhere. Since the Stationary Schr{\"o}dinger condition implies
\[
\frac{1}{2}\nabla\cdot A^{\theta} +\frac{1}{4} \lVert A^{\theta} \rVert^{2} + \frac{1}{2} A^{\theta} \cdot \nabla \log \rhot = E_{0}
\]
the rotational kinetic energy is a constant $E_{0}=\frac{1}{4} \lVert A^{\theta} \rVert^{2}$ with kinetic variance zero if $A^{\theta}$ is solenoidal $\nabla\cdot A^{\theta} = 0$ and isoclinic $A^{\theta} \cdot \nabla \log \rhot = 0$. Then, $\lambda_{\theta} - \gamma_{\theta} = 0$ and thus the upper bound reduces to the conservative case
\begin{align*}
\omega(0) &\exp\left(-\int_0^\tau \Gamma(s)ds\right) \\
&+ \sup_{s \le \tau} \,\, \frac{|\dot{t}(s)| \lVert \partial_{t} \log \rhot_{t(s)} \rVert_{L^2 (\rhot_{t(s)} )}}{\Gamma(s)} 
\end{align*}
Thus, isoclinic solenoidal fields are inherently benign relative to the standard conservative setup.

\section{Relation to Quantum Inverse Problems}

\label{app:ham_identification} \begin{theorem}[Hamiltonian Identification via Score Matching] \label{thrm:Ham_Identification} Let $\widehat{H} = -(\hbar^2/2m)\nabla^2 + V$ be a time-reversible Schrödinger operator with ground-state wave function 
\begin{align*} 
\psi(x,t) = \sqrt{\rho(x)}\,e^{-iEt/\hbar}, \end{align*} 
where $\rho(x) = |\psi(x,t)|^2$ is the ground-state probability density on nodal domain $\rho > 0$ with $E$ the ground-state energy. Let $S_\theta(x) \approx \nabla\log\rho(x)$ be any approximation of the score of $\rho$, for instance obtained by score matching on samples of $\rho$. Then the potential $V$ is recovered from $S_\theta$ via \begin{align} V_{\theta}(x) = E + \frac{\hbar^2}{2m} \left( \frac{1}{2}\nabla\cdot S_\theta(x) + \frac{1}{4}|S_\theta(x)|^2 \right). \label{eq:V_from_score} \end{align} In particular, if $S_\theta = \nabla\log\rho$ exactly, then $V_{\theta}(x) = V(x)$ \eqref{eq:V_from_score} holds with equality up to the spectral constant $E$. \end{theorem}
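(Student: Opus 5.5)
The plan is to read the statement as the stationary Schr\"odinger equation evaluated on the ground state and then solved for the potential. With the ansatz $\psi(x,t)=\sqrt{\rho(x)}\,e^{-iEt/\hbar}$, the time-dependent equation $i\hbar\partial_t\psi=\widehat H\psi$ reduces to the eigenvalue problem
\[
-\tfrac{\hbar^2}{2m}\nabla^2\sqrt{\rho}+V\sqrt{\rho}=E\sqrt{\rho}.
\]
Time-reversibility is what makes this reduction legitimate: it lets us take the ground state real up to the global phase $e^{-iEt/\hbar}$, so there is no current and no Madelung phase gradient.

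On the nodal domain $\rho>0$ we can divide by $\sqrt{\rho}$, which gives $V=E+\tfrac{\hbar^2}{2m}\,\nabla^2\sqrt{\rho}/\sqrt{\rho}$. This is the (positively signed) quantum potential $Q$ of the Information Hamiltonian in Definition~\ref{def:ScoreHamiltonian}, rescaled by $\hbar^2/2m$.

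Next we apply the Bohm score identity \eqref{eq:bohm_score_id}, namely $\nabla^2\sqrt{\rho}/\sqrt{\rho}=\tfrac12\nabla\cdot S+\tfrac14|S|^2$ with $S=\nabla\log\rho$. For completeness I would verify it briefly: write $\sqrt\rho=e^{\frac12\log\rho}$, so that $\nabla\sqrt\rho=\tfrac12 S\sqrt\rho$. Differentiating once more gives $\nabla^2\sqrt\rho=(\tfrac12\nabla\cdot S+\tfrac14|S|^2)\sqrt\rho$. This computation only needs $\log\rho\in C^2$ on the nodal domain, and $C^2$ regularity there is the standing smoothness assumption.

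Substituting into the expression for $V$ gives $V(x)=E+\tfrac{\hbar^2}{2m}(\tfrac12\nabla\cdot S+\tfrac14|S|^2)$. Equation \eqref{eq:V_from_score} is simply the definition of $V_\theta$ obtained by plugging the estimator $S_\theta$ into this same functional. When $S_\theta=\nabla\log\rho$ exactly, the two expressions coincide identically, so $V_\theta=V$. The additive constant $E$ cannot be recovered from the score, because $\nabla\log\rho$ is invariant under shifts $V\mapsto V+c$, $E\mapsto E+c$. This is the sense in which the identity holds "up to the spectral constant."

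The main obstacle is the meaning of the identity on and near nodes, and its behaviour when $S_\theta$ is only approximate. I would restrict the claim explicitly to the open set $\{\rho>0\}$ and refer to Remark~\ref{rem:fisher_v_score} for extension across nodes via the amplitude form. For the approximate case, I would note that the error $V_\theta-V=\tfrac{\hbar^2}{2m}\bigl(\tfrac12\nabla\cdot(S_\theta-S)+\tfrac14(|S_\theta|^2-|S|^2)\bigr)$ follows from the same identity. Controlling this error, however, requires derivative control on $S_\theta-S$ and not merely $L^2$ control, so I would state it as a remark rather than as part of the theorem.
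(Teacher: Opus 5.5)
Your proof is correct and follows the paper's argument: both isolate $V = E + \frac{\hbar^2}{2m}\nabla^2\sqrt{\rho}/\sqrt{\rho}$ on $\{\rho>0\}$ and then substitute the Bohm score identity. The only cosmetic difference is that you read this identity directly off the stationary Schr\"odinger equation, while the paper obtains it from the Madelung quantum Hamilton--Jacobi equation with $\nabla\phi = 0$ and $\partial_t\phi = -E$; that is the same equation written in hydrodynamic variables.
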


\begin{proof}[Proof of Thm.~\ref{thrm:Ham_Identification}] By the Madelung transform, any (nodeless) solution $\psi = \sqrt{\rho}\,e^{i\phi/\hbar}$ to the Schr{\"o}dinger equation satisfies the quantum Hamilton-Jacobi equation 
\begin{align} 
\partial_t \phi + V + \frac{\|\nabla\phi\|^2}{2m} - \frac{\hbar^2}{2m}\frac{\nabla^2\sqrt{\rho}}{\sqrt{\rho}} = 0. \label{eq:HJ} 
\end{align} 
We use $\phi$ for the quantum phase to distinguish it from the score $S_\theta = \nabla\log\rho$. For the stationary ground state $\psi(x,t) = \sqrt{\rho(x)}\,e^{-iEt/\hbar}$, the phase is purely temporal: $\phi(x,t) = -Et$. Therefore \begin{align*} \nabla\phi = 0, \qquad \partial_t\phi = -E. \end{align*} The spatial gradient vanishes because the ground-state density $\rho(x)$ carries no spatial phase and there are no probability currents in a stationary state. Substituting into \eqref{eq:HJ} eliminates the kinetic term $\|\nabla\phi\|^2/2m$ and gives \begin{align*} 
-E + V(x) - \frac{\hbar^2}{2m}\frac{\nabla^2\sqrt{\rho}}{\sqrt{\rho}} = 0, 
\end{align*} hence 
\begin{align} 
V(x) = E + \frac{\hbar^2}{2m}\frac{\nabla^2\sqrt{\rho}}{\sqrt{\rho}}. \label{eq:V_amplitude} 
\end{align}
This is the amplitude form of the result. 

Applying the Bohm Score Identity, one observes for the domain $\rho >0$ that
\begin{align*} \frac{\nabla^2\sqrt{\rho}}{\sqrt{\rho}} = \frac{1}{2}\nabla\cdot S + \frac{1}{4}|S|^2.
\end{align*} 
Substituting $S_\theta = S$ gives \eqref{eq:V_from_score}. 
\end{proof}

A Hamiltonian with nodeless ground-state $\psi_{0} = \sqrt{\rho_{0}}$ has potential expressible in terms of $\psi_{0}$, $V=E_{0} + \frac{\hbar^{2}}{2m}\frac{\nabla^{2} \psi_{0}}{\psi_{0}}$. Thm.~\ref{thrm:Ham_Identification} corresponds to the minimal observation that $\frac{\nabla^{2} \psi_{0}}{\psi_{0}}$ can be expanded in terms of the ground-state score $S_{0}$ and thus that score-matching on $\stheta$ can learn the Hamiltonian up to $E_{0}$.

More generally, any $\hat{H}$ with nodeless ground-state $\psi_{0}$ can be expressed exactly in the Score (Information) Hamiltonian form of
\begin{align*}
&\hat{H} = -\frac{\hbar^{2}}{2m}\nabla^{2}+\frac{\hbar^{2}}{2m}\frac{\nabla^{2}\psi_{0}}{\psi_{0}}+E_{0} \\
&=-\frac{\hbar^{2}}{2m}\nabla^{2}+\frac{\hbar^{2}}{2m}\left(
\frac{1}{2}\nabla^{2} \log |\psi_{0}|^{2} + \frac{1}{4}
|\nabla \log |\psi_{0}|^{2}|^{2}\right)
+E_{0}
\end{align*}
Real-time evolution on $i\hbar\partial_{t}\psi =  \hat{H} \psi$, under standard Wick-rotation $\tau = it$, $t=-i\tau$, and $\partial_{t}=i\partial_{\tau}$, yields $\partial_{\tau}\psi =  -\frac{1}{\hbar}\hat{H} \psi$ and thus
\begin{align*}
&\partial_{\tau}\mu_{\tau} = -\frac{\hbar}{m}\nabla\cdot \left(\mu_{\tau} \left(\frac{\nabla \psi_{0}}{ \psi_{0}}
\right) \right) + \frac{\hbar}{2m} \nabla^{2}\mu_{\tau} \\
&=-\frac{\hbar}{2m}\nabla\cdot \left(\mu_{\tau} \left(\nabla \log |\psi_{0}|^{2}
\right) \right) + \frac{\hbar}{2m} \nabla^{2}\mu_{\tau} 
\end{align*}
which states that the dissipative rotation of real-time evolution is simply governed by a Fokker-Planck equation with drift given by the ground-state score $S_0=\nabla \log |\psi_{0}|^{2}$.

\begin{proposition}\label{prop:Schrodinger_relFisher}
Consider a state $\psi_{t} = \sqrt{\mu_{t}}e^{i\phi_{t}/\hbar}$ in density and phase variables $(\mu_{t},\phi_{t})$. Let $\hat{H}$ be a time-independent Hamiltonian on $L^{2}$ with nodeless ground-state $\psi_{0}=\sqrt{\rho}$. Then, real-time Madelung flow of $(\mu_{t},\phi_{t})$ conserves the Hamiltonian expectation value
\begin{align}
&\langle \psi_{t}|\hat{H}|\psi_{t}\rangle-E_{0}\\
&=\frac{\hbar^{2}}{8m}\int \left\lVert  \nabla \log \mu_t - S_{0} \, \right\rVert^{2}\mu_t\,dx + \frac{1}{2m}\int \left\lVert \nabla \phi_t \right\rVert^{2}\,\mu_{t}\,dx \label{eq:rel_fish_QM}
\end{align}
For $S_{0}=\nabla \log |\psi_{0}|^{2}$ the score of the ground-state density and $E_{0}$ the ground-state energy.
\end{proposition}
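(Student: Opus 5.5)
The plan is to prove the identity \eqref{eq:rel_fish_QM} at a fixed time by direct computation, and then to obtain conservation from the time-independence and self-adjointness of $\hat{H}$. The first step is to rewrite $\hat{H}$ in Score-Hamiltonian form, exactly as in the discussion after Thm.~\ref{thrm:Ham_Identification}. Since the ground state $\psi_0=\sqrt{\rho}$ is nodeless, $\hat{H}=-\frac{\hbar^2}{2m}\nabla^2+V$ with
\[
V=E_0+\frac{\hbar^2}{2m}\,Q_0,\qquad Q_0=\frac{\nabla^2\psi_0}{\psi_0}=\tfrac12\nabla\cdot S_0+\tfrac14|S_0|^2 ,
\]
where the last equality is the Bohm Score Identity \eqref{eq:bohm_score_id}.

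Next, expand the energy of $\psi_t=\sqrt{\mu_t}\,e^{i\phi_t/\hbar}$. For the kinetic part, pointwise
\[
|\nabla\psi_t|^2=|\nabla\sqrt{\mu_t}|^2+\hbar^{-2}\mu_t|\nabla\phi_t|^2 ,
\]
because the cross terms are purely imaginary and cancel in the modulus. With $S_t:=\nabla\log\mu_t$ we have $|\nabla\sqrt{\mu_t}|^2=\tfrac14|S_t|^2\mu_t$. Hence the kinetic energy equals
\[
\frac{\hbar^2}{8m}\int|S_t|^2\mu_t+\frac{1}{2m}\int|\nabla\phi_t|^2\mu_t .
\]
The potential part gives $E_0+\frac{\hbar^2}{2m}\int\mu_t\bigl(\tfrac12\nabla\cdot S_0+\tfrac14|S_0|^2\bigr)$. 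Integrating by parts, $\int\mu_t\nabla\cdot S_0=-\int\nabla\mu_t\cdot S_0=-\int\mu_t\,S_t\cdot S_0$. Collecting the terms carrying the prefactor $\frac{\hbar^2}{2m}$, they combine into $\int\mu_t\bigl(\tfrac14|S_t|^2-\tfrac12 S_t\cdot S_0+\tfrac14|S_0|^2\bigr)=\tfrac14\int|S_t-S_0|^2\mu_t$. This gives exactly \eqref{eq:rel_fish_QM}. It is the same completing-the-square as in \eqref{eq:rayleigh_squares}, now augmented by the phase kinetic energy.

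For conservation, note that Madelung flow of $(\mu_t,\phi_t)$ is, on the region $\mu_t>0$, equivalent to $i\hbar\partial_t\psi_t=\hat{H}\psi_t$. Since $\hat{H}$ is self-adjoint and time-independent, the evolution is $\psi_t=e^{-it\hat{H}/\hbar}\psi_0$, which commutes with $\hat{H}$. Therefore $\langle\psi_t|\hat{H}|\psi_t\rangle=\langle\psi_0|\hat{H}|\psi_0\rangle$ for all $t$ by the spectral theorem; no derivative computation is needed. Combining this with the identity above shows that the right-hand side of \eqref{eq:rel_fish_QM} is conserved.

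The main obstacle is justifying the integration by parts and the pointwise identities. The boundary terms $\mu_t S_0$ at infinity must vanish, which I would handle by assuming $\psi_t$ lies in the form domain of $\hat{H}$ and arguing by density of $C_c^\infty$. Zeros of $\mu_t$, where $S_t$ and $\phi_t$ are undefined, are a second difficulty. I would avoid them by working in terms of $\psi_t$ directly: the combination $\frac{\hbar^2}{8m}\mu_t|S_t|^2+\frac{1}{2m}\mu_t|\nabla\phi_t|^2$ equals $\frac{\hbar^2}{2m}|\nabla\psi_t|^2$ a.e. on $\{\mu_t>0\}$, and $\nabla\psi_t=0$ a.e. on the zero set. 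The integrals are therefore well defined whenever $\psi_t\in W^{1,2}$ with $\int|Q_0||\psi_t|^2<\infty$, which is preserved by the unitary flow.
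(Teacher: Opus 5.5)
Your proposal is correct and is essentially the paper's argument. The paper completes the square first at the level of the wavefunction, writing $\langle\psi_t|\hat H|\psi_t\rangle-E_0=\frac{\hbar^2}{2m}\int\lVert\nabla\psi_t-\frac12 S_0\psi_t\rVert^2dx$, and then substitutes $\nabla\psi_t=(\frac12\nabla\log\mu_t+\frac{i}{\hbar}\nabla\phi_t)\psi_t$; you substitute the Madelung variables first and complete the square after the same integration by parts, and both arguments conclude conservation from the time-independence of $\hat H$. One small fix: in $\psi_t=e^{-it\hat H/\hbar}\psi_0$, use a different symbol for the initial datum, since $\psi_0$ already denotes the ground state.
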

\begin{proof}
In the case of evolution on a real-time state $\psi_{t} = \sqrt{\mu_{t}}\,e^{+i\phi_t/\hbar}$ with phase $\phi$, the Score Hamiltonian offers an Information-Geometric perspective tying the Hamiltonian of the Schr{\"o}dinger equation to a Fisher-divergence. In particular, the extension of the sum-of-squares of Equation~\eqref{eq:rayleigh_squares} to complex states gives for $\hat{H}=-\frac{\hbar^{2}}{2m}\nabla^{2}+\frac{\hbar^{2}}{2m}\frac{\nabla^{2}\psi_{0}}{\psi_{0}}+E_{0}$ the energy
\begin{align*}
\langle \psi_{t}|\hat{H}|\psi_{t}\rangle = \frac{\hbar^{2}}{2m}\int \lVert \nabla \psi_t - \frac{1}{2} S_0 \psi_t \rVert^2 dx + E_{0}
\end{align*}
Now, taking the gradient one sees
\begin{align*}
&\nabla \psi_{t} = \nabla\sqrt{\mu_t}e^{i\phi_t/\hbar} + \frac{i}{\hbar}\sqrt{\mu_t} \nabla \phi_t e^{i\phi_t/\hbar} \\
&=\left( \frac{\nabla\mu_t}{2 \sqrt{\mu_{t}}} +\frac{i}{\hbar}\sqrt{\mu_t} \nabla \phi_t\right) e^{i\phi_t/\hbar}
\end{align*}
So that
\begin{align*}
&\frac{\hbar^{2}}{2m}\int \lVert \nabla \psi_t - \frac{1}{2} S_{0}\,\psi_t \rVert^{2}\,dx \\
&=\frac{\hbar^{2}}{2m}\int \lVert \left( \frac{\nabla\mu_t}{2 \mu_{t}} +\frac{i}{\hbar} \nabla \phi_{t}\right) \sqrt{\mu_t}e^{i\phi_{t}/\hbar} - \frac{1}{2} S_{0}\,\psi_t \rVert^{2}\,dx \\
&=\frac{\hbar^{2}}{2m}\int \lVert \left( \frac{\nabla\mu_t}{2 \mu_{t}} +\frac{i}{\hbar} \nabla \phi_{t} - \frac{1}{2} S_{0} \,\right) \psi_t \rVert^{2}\,\,dx 
\end{align*}
Expanding the square of the complex-term, we see
\begin{align*}
&=\frac{\hbar^{2}}{2m}\int \left\lVert  \frac{\nabla\mu_t}{2 \mu_{t}}  - \frac{1}{2} S_{0} \, \right\rVert^{2}\mu_t\,dx + \frac{\hbar^{2}}{2m}\int \left\lVert  \frac{i}{\hbar} \nabla \phi_{t} \right\rVert^{2}\,\mu_{t}\,dx \\
&=\frac{\hbar^{2}}{2m}\int \left\lVert  \frac{1}{2}\nabla \log \mu_t  - \frac{1}{2} S_{0} \, \right\rVert^{2}\mu_t\,dx + \frac{\hbar^{2}}{2m}\int \left\lVert  \frac{i}{\hbar} \nabla \phi_{t} \right\rVert^{2}\,\mu_{t}\,dx
\end{align*}
From which we see, in terms of the Born density $\mu_{t}$, the conserved energy can be given as a sum of the relative Fisher divergence (as in score-matching \cite{hyvarinen05a}) to the ground-state score $S_{0}$, and a kinetic energy of the phase $\phi_t$:
\begin{align*}
&\frac{\hbar^{2}}{8m}\int \left\lVert  \nabla \log \mu_t - S_{0}  
\right\rVert^{2}\mu_tdx \\
&+ \frac{1}{2m}\int \left\lVert \nabla \phi_t \right\rVert^{2}\mu_{t}dx +E_{0}
\end{align*}
If $\hat{H}$ is time-independent one has $\frac{d}{dt}\langle \psi_{t}|\hat{H}|\psi_{t}\rangle = 0$ so that~\eqref{eq:rel_fish_QM} is conserved.
\end{proof}
While the imaginary-time relaxation treats the relative Fisher as a Lyapunov functional settling to $S_{0}$, the real-time variant conserves the energy and is akin to an \emph{information-pendulum}. The phase behaves as kinetic energy and the relative Fisher information as the potential. Their sum being conserved implies that for a finite energy budget $E-E_{0}>0$ the score $\nabla \log \mu_{t}$ can never come to rest at $S_{0}$. Like a pendulum through equilibrium, $\nabla\log \mu_{t}$ may pass through $S_{0}$ and transfer the budget to phase kinetic energy, but a stationary configuration would require a vanishing phase gradient and $E=E_{0}$.

Returning to the decomposition of Prop.~\ref{prop:Schrodinger_relFisher}, an integration by parts further expresses this relative Fisher term as
\begin{align*}
&=\frac{\hbar^{2}}{8m}\int\lVert  \nabla \log \mu_t \rVert_{2}^{2}\mu_t \\
&+ \int \left(\frac{\hbar^{2}}{2m}\left( \frac{1}{2}\nabla \cdot S_{0} + \frac{1}{4} |S_{0}|^{2} \right) +E_{0}\right) \mu_{t} \\
&+ \frac{1}{2m}\int \left\lVert \nabla \phi_{t} \right\rVert^{2}\,\mu_{t}\,dx \\
&=\frac{\hbar^{2}}{8m}\int\lVert  \nabla \log \mu_t \rVert_{2}^{2}\mu_t + \int \left( \frac{\hbar^{2}}{2m}\frac{\nabla^{2}\psi_{0}}{\psi_{0}} +E_{0}\right) \mu_{t} \\
&+ \frac{1}{2m}\int \left\lVert \nabla \phi_{t} \right\rVert^{2}\,\mu_{t}\,dx
\end{align*}
Which, as $\frac{\hbar^{2}}{2m}\frac{\nabla^{2}\psi_{0}}{\psi_{0}} +E_{0} = V$ defines the potential, recovers the canonical Hamiltonian of Madelung quantum mechanics in Wasserstein space \cite{von_Renesse_2012}
\[
\frac{\hbar^{2}}{8m}\int\lVert  \nabla \log \mu_t \rVert_{2}^{2}\mu_t + \int V \mu_{t} + \frac{1}{2m}\int \left\lVert \nabla \phi_{t} \right\rVert^{2}\,\mu_{t}\,dx
\]
While mathematically equivalent to the perspective of \cite{von_Renesse_2012}, this offers an Information-Hydrodynamical interpretation of the Madelung equations. In particular for $|\psi_{0}|^{2}=\rho_{0}$ up to the $\phi_t \to \phi_t-E_{0}t$ Gauge, the equations become
\begin{align*}
&\partial_{t}\mu_t = - \nabla \cdot \left(\mu_{t}\frac{\nabla \phi_{t}}{m} \right), \\
&\partial_{t}\phi_{t} + \frac{1}{2m}|\nabla \phi_{t}|^{2} + \frac{\hbar^{2}}{2m}\left( \frac{\nabla^{2}\sqrt{\rho_{0}}}{\sqrt{\rho_{0}}} - \frac{\nabla^{2}\sqrt{\mu_{t}}}{\sqrt{\mu_{t}}} \right) = 0
\end{align*}
Which reveals that the Madelung equations \cite{Madelung1927} may be expressed in terms of the quantum-pressure gap between the current $\mu_{t}$ and the ground-state $\rho_{0}$, $\delta P_{Q}=Q_{M}[\mu_{t}]-Q_{M}[\rho_{0}]$ where $Q_{M}$ is the standard Madelung writing of the quantum potential signed negatively with physical pre-factor $\frac{\hbar^{2}}{2m}$, $Q_{M}[p]=-\frac{\hbar^{2}}{2m}\frac{\nabla^{2}\sqrt{p}}{\sqrt{p}}$.
\[
\partial_{t}\phi_{t} + \frac{1}{2m}|\nabla \phi_{t}|^{2} + \delta P_{Q} =0
\]
Thus, Madelung hydrodynamics can be expressed in a form only dependent on the phase and information densities $\rho_{0},\mu_{t}$ without reference to a potential. By definition, if $\mu_t$ is nodeless then $\delta P_{Q}$ can be expressed in terms of the scores $\nabla \log \mu_t, \nabla \log \rho_{0}$ alone \cite{Bohm1952,SBITNEV2009, Nelson1966}.

\section{Numerical Setup}\label{appendix:numerical-setup}

\subsection{Hydrogen Atom (3D) Hamiltonian via Ground-State Samples}

\textbf{Setup.} We test whether a score-network $\stheta$ trained only on samples from the hydrogen ground-state density recovers the Coulomb Hamiltonian and its full excited-state spectrum. The landmark theorem of Hohenberg-Kohn in density functional theory (DFT) states that the ground-state density of a system uniquely determines its potential, and thus all excited states of the system and all of its properties -- including many-body wave functions \cite{Hohenberg1964}. Thus, learning from samples of the ground-state $\rho_{0}=|\psi|^{2}$ should theoretically guarantee the capacity to recover $\hat{V}$, $\widehat{H}$, and all excited states $\psi_{k\geq1}$. We take samples from the 1s ground state orbital of the Hydrogen atom, with density
\[
\rho_0(\mathbf{x}) \propto |\psi_{1s}(\mathbf{x})|^2 = \frac{1}{\pi}e^{-2r}, \qquad r = \|\mathbf{x}\|.
\]
We draw $20{,}000$ exact samples by sampling radii $r \sim \mathrm{Gamma}(k{=}3, \theta{=}1/2)$ with isotropic angles and mapping to Cartesian coordinates.

\textbf{Score Network.} We train a Denoising Score-Matching model \texttt{QuantumStateNet} which has a Coulomb cusp inductive bias $\log\rho_\theta(\mathbf{x}, \sigma) = f_\theta(\mathbf{x}, \sigma) - 2Z_{\mathrm{eff}}\,r,$ where $f_{\theta}$ is a 3-hidden-layer MLP with width $256$ and SiLU activations. The network is zero-initialized in its final layer. The noise embedding uses Gaussian Fourier projections with noise $\sigma \in [0.01,0.3]$ and the model is trained for $2{,}000$ iterations with a cosine learning rate. We also train a RealNVP normalizing flow \cite{dinh2017density, pmlr-v37-rezende15} baseline $\rho_{\eta}$ with 8 affine-coupling layers and a similar Coulomb cusp inductive bias.

From the trained score network, we infer the Score Hamiltonian potential $\hat{V}_{\mathrm{Score\text{-}H}}(r)=Q_\theta(r)$, the thermodynamic potential $\hat{V}_{\mathrm{thermo}}(r) = -\int_0^r \bigl(S_r(r') - S_r(\infty)\bigr)\,dr',$ and the potential $\hat{V}_{\mathrm{Flow}}=-\log\rho_\eta(r)$. As we trained a conservative score  $S = \nabla \log \rho_{\theta}$ for a network $\log \rho_{\theta}$, we simply use the shared potential $\hat{V}_{\mathrm{thermo}}(r) = - \log \rho_{\theta}$ directly.

\textbf{Extraction of the Spectrum.} For each potential recovered, we (i) vacuum align the potential by subtracting the tail mean, (ii) smooth each potential with a Gaussian filter $\sigma=0.01$, and (iii) Gauge-shift the potential so the recovered ground-state energy matches that of the true 1s orbital at -0.5 Hartree. This is done, as recovery is up to the spectral constant offset in Thm.~\ref{thrm:Ham_Identification}, so energies are reported relative to this anchoring. We then solve the radial Schr\"{o}dinger equation via tridiagonal discretization on a $2{,}000$ point grid over $[0.01,60]$ Bohr with the effective potential
\[
V_{\mathrm{eff}}(r) = \hat{V}(r) + \frac{\ell(\ell+1)}{2r^2}, \qquad \ell \in \{0, 1, 2\},
\]
We retain bound states up to $n\leq 4$.

We visualize the excited states of the Score Hamiltonian in Fig.~\ref{fig:HydrogenOrbitals}, showing the recovered orbital densities $|\psi_{nl}(\mathbf{x})|^2$ from the radial eigenfunctions of the score-Hamiltonian $\widehat{H}_\theta$ in a spherical harmonic basis $Y_{\ell }^{m}(\bm{r })$. In Table~\ref{tab:hydrogen_spectrum} we find the Score Hamiltonian recovers the nodal structure for all excited-state orbitals considered to $n=3$, including the $\mathrm{s}$, $\mathrm{p}$, and $\mathrm{d}$ families from training from $1\mathrm{s}$ samples.

\begin{table*}[ht]
\centering
\caption{Hydrogen atom spectrum recovery. The Score Hamiltonian (Definition~\ref{def:ScoreHamiltonian}) successfully reconstructs the excited state spectrum solely from samples of the ground state density, whereas baseline methods fail to find accurate bound states.}
\label{tab:hydrogen_spectrum}
\vskip 0.15in
\begin{small}
\begin{tabular}{lc cc cc cc}
\toprule
 & & \multicolumn{2}{c}{Score Hamiltonian} & \multicolumn{2}{c}{Thermodynamic Integral} & \multicolumn{2}{c}{Boltzmann Generator} \\
\cmidrule(lr){3-4} \cmidrule(lr){5-6} \cmidrule(lr){7-8}
State & Exact $E_n$ & Energy & Abs. Err. & Energy & Abs. Err. & Energy & Abs. Err. \\
\midrule
1s & -0.5000 & -0.5000 & 0.0000 & -0.5000 & 0.0000 & -0.5000 & 0.0000 \\
2s & -0.1250 & -0.1281 & 0.0031 & Unbound & - & Unbound & - \\
2p & -0.1250 & -0.1230 & 0.0020 & Unbound & - & Unbound & - \\
3s & -0.0556 & -0.0607 & 0.0051 & Unbound & - & Unbound & - \\
3p & -0.0556 & -0.0593 & 0.0037 & Unbound & - & Unbound & - \\
3d & -0.0556 & -0.0598 & 0.0042 & Unbound & - & Unbound & - \\
\bottomrule
\end{tabular}
\end{small}
\end{table*}

\begin{figure}[tbp]
    \centering
    \includegraphics[width=\linewidth]{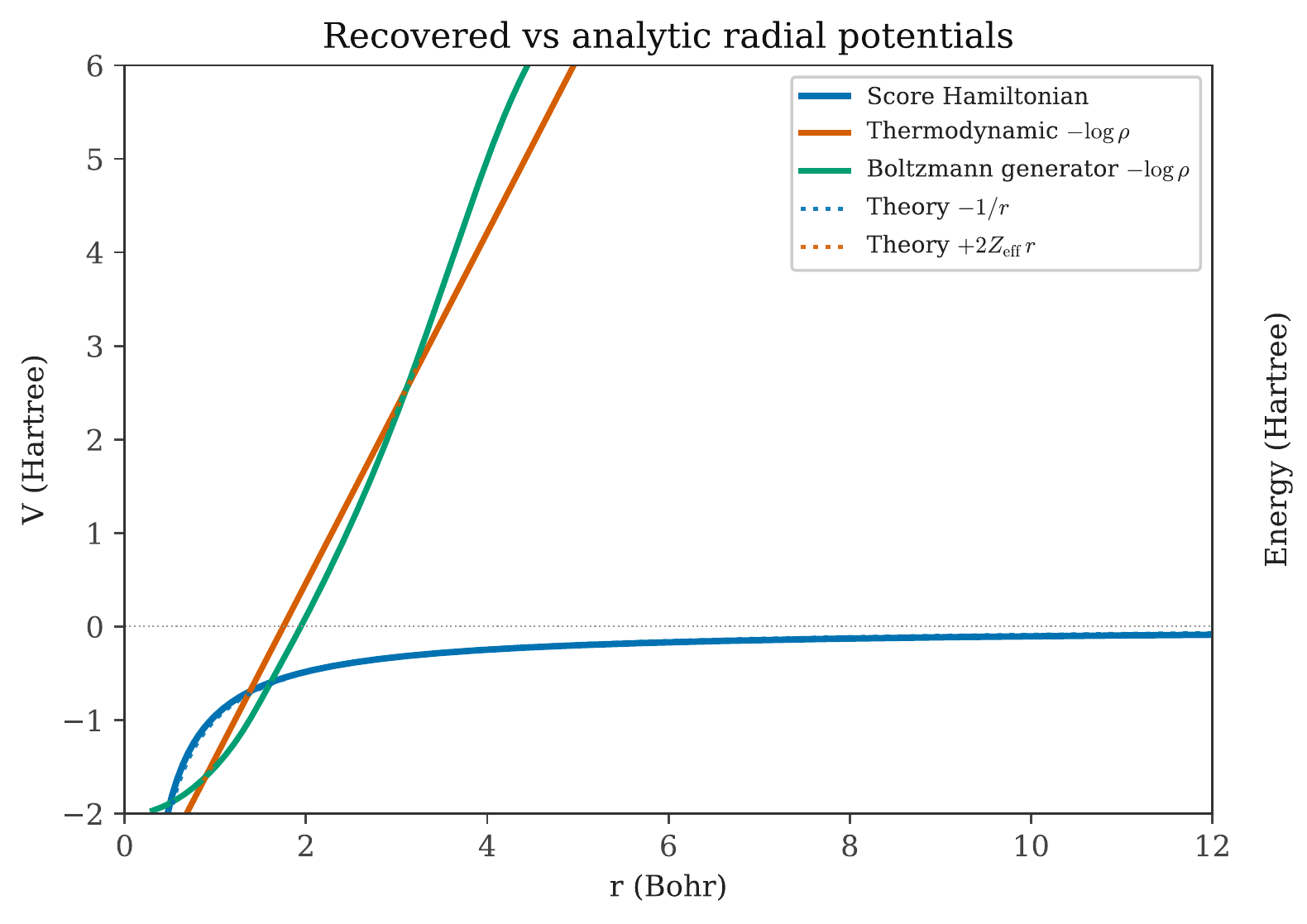}
    \caption{Comparison of thermodynamic and quantum potential of a diffusion model and flow, with ground-truth Coulombic and thermodynamic potentials (Hartrees v. Bohr radius).}\label{fig:HydrogenPotential}
\end{figure}

\subsection{Coupled Harmonic Oscillator}\label{sec:CHO} 

\begin{figure}[tbp]
    \centering
    \includegraphics[width=\linewidth]{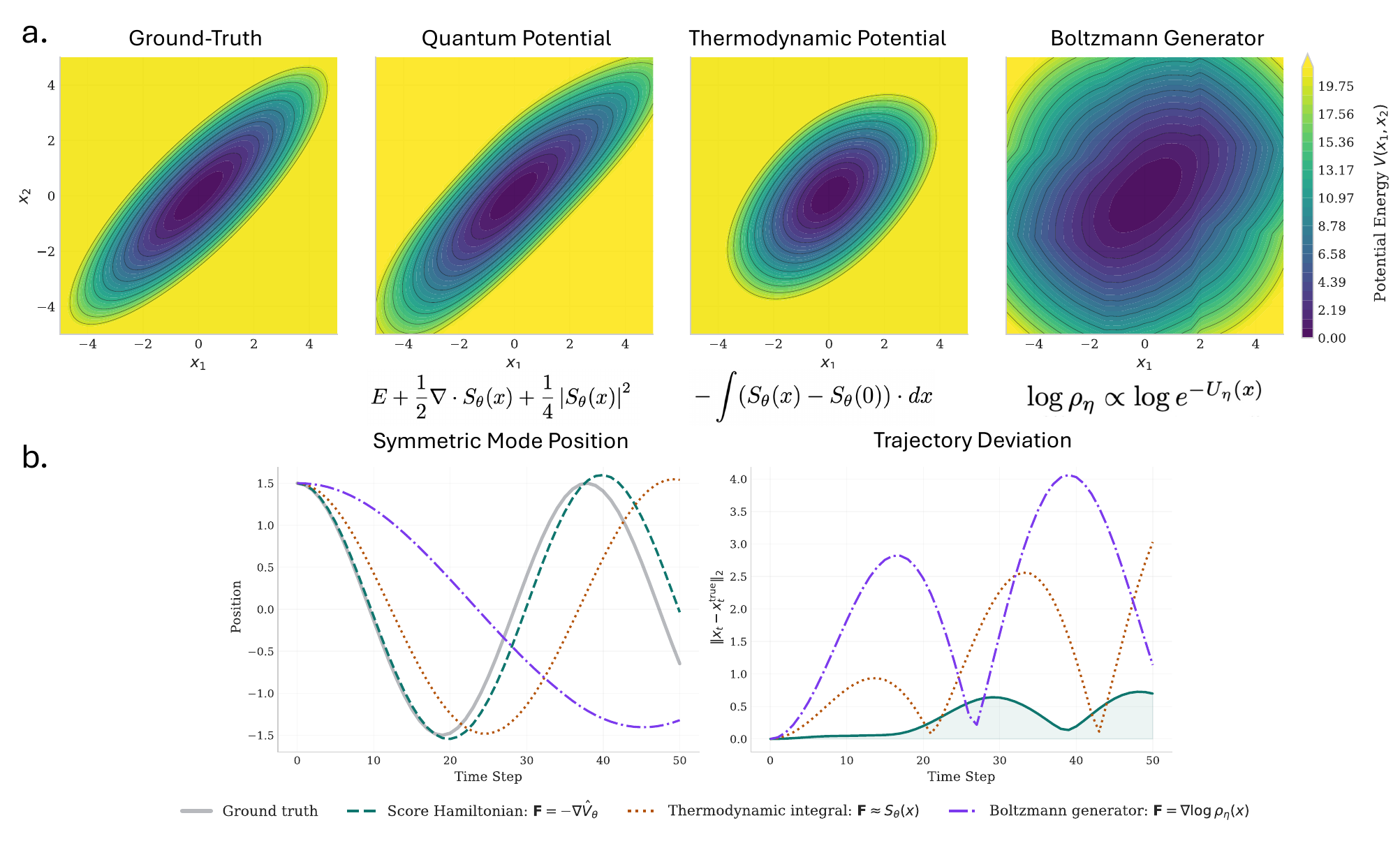}
    \caption{\textbf{Hamiltonian Inference on the Coupled Harmonic Oscillator.} \textbf{(a.)} Visualization of the true coupled harmonic-oscillator potential $\hat{V}(x_{1},x_{2})$ and sample-derived potentials inferred through (1) the quantum potential on the score $\stheta(x,t)$, (2) the thermodynamic potential derived via integration of the score, and (3) the implicit potential defined via a flow $\rho_{\eta}$ learned on the samples. \textbf{(b.)} Plot of the position of the symmetric mode across the symplectic integration and the deviation in trajectory from the ground-truth.
    }
    \label{fig:HO}
\end{figure}

To evaluate the ability of the Score Hamiltonian framework to recover non-separable interactions, we utilize a Coupled Harmonic Oscillator (CHO). Unlike the single-particle Hydrogen atom, the CHO involves two particles whose motion is correlated through a coupling potential, providing a test for identifying system Hamiltonians from non-trivially coupled data.

We define the system Hamiltonian in atomic units ($m=1, \hbar=1$ so that the coefficient is reduced by a factor of 2 relative to the $\frac{\hbar^{2}}{2m}=1$ convention in the main text) as: 
\begin{equation}\label{eq:coupled_HO}
\widehat{H} = -\frac{1}{2}\nabla^2 + \frac{1}{2}k(x_1^2 + x_2^2) + \frac{1}{2}\lambda(x_1 - x_2)^2
\end{equation}
where $k$ is the local spring constant and $\lambda$ represents the coupling strength. For our experiments, we set $k=1.0$ and $\lambda=5.0$, resulting in a highly correlated ground state density. The analytical normal mode frequencies are $\omega_{sym} = \sqrt{k} = 1.0$ and $\omega_{anti} = \sqrt{k + 2\lambda} \approx 3.317$. 

We sample 40,000 observations directly from the ground state density $|\psi_0|^2$. Samples are generated in the normal mode basis and subsequently rotated back to physical coordinates $(x_1, x_2)$. 

\textbf{Model Architecture and Training.} For the Score Hamiltonian, we use a conservative score network $S_\theta$ consisting of a 3-layer MLP with 128 hidden units and Softplus activations. Softplus is utilized to ensure smooth second derivatives, which are essential for computing the Bohm potential $Q$. The model is trained via denoising score-matching for 3,000 epochs. For the Boltzmann Generator baseline we use a RealNVP normalizing flow \cite{dinh2017density} implemented via the \texttt{nflows} library, consisting of 4 affine coupling layers and random permutations. The model is trained by minimizing the negative log-likelihood (NLL) of the data. 

A key contribution of this work is the comparison of three distinct interpretations of the learned score $S_\theta$.
\begin{itemize}
    \item The Quantum Interpretation (that of the Bohm or Score Hamiltonian, Definition~\ref{def:ScoreHamiltonian}) is that the potential is recovered via the Bohm-score identity: $\hat{V}_{\theta}(x) \propto \frac{1}{4}\nabla \cdot \stheta + \frac{1}{8}|\stheta|^2$.
    \item The thermodynamic interpretation (diffusion) by comparison is that the potential is recovered by performing path integration of the same score field: $V = -\int (S(x) - S(0)) \cdot dx$.
    \item For the Boltzmann Generator \cite{No2019}, the potential is defined as the negative log-density of a normalizing flow \cite{pmlr-v37-rezende15} density model: $V = -\log \rho_{\eta}$. This is, formally, an identical treatment to score-integration but supposing a fit of the density rather than the gradient of the log density.
\end{itemize}
In Table~\ref{tab:cho_metrics_supp} we compute the mean-absolute error of the potentials to the ground-truth $\hat{V}(x_{1},x_{2})$ and the root-mean squared error to the ground-truth to evaluate the learned potentials. To evaluate the physical accuracy of the learned Hamiltonians, we compute the Hessian of the recovered potentials at the equilibrium point to derive the learned normal mode frequencies. Furthermore, we perform a symplectic (Velocity Verlet) integration to simulate real-time dynamics under each potential, measuring the Euclidean divergence from the ground truth trajectory over $T=50$ time steps. In the thermodynamical case of the score, this amounts to the use of the score as a force $\mathbf{F} \propto \stheta(x)$, as in \cite{plainer2026consistent}.

\begin{table*}[ht]
\centering
\caption{Potential recovery and spectral metrics for the learned Hamiltonians on the coupled harmonic oscillator.}
\label{tab:cho_metrics_supp}
\vskip 0.15in
\begin{small}
\begin{tabular}{lcccccc}
\toprule
 & \multicolumn{2}{c}{Potential error} & \multicolumn{2}{c}{Symmetric mode ($\omega_1$)} & \multicolumn{2}{c}{Anti-symmetric mode ($\omega_2$)} \\
\cmidrule(lr){2-3} \cmidrule(lr){4-5} \cmidrule(lr){6-7}
Method & MAE & RMSE & Freq. & Rel. err. (\%) & Freq. & Rel. err. (\%) \\
\midrule
Ground truth           & 0.000 & 0.000 & 1.000 & 0.000 & 3.317 & 0.000 \\
Score Hamiltonian      & \textbf{5.341} & \textbf{10.434} & \textbf{1.049} & \textbf{4.855} & \textbf{3.497} & \textbf{5.451} \\
Thermodynamic integral & 19.534 & 29.876 & 1.432 & 43.150 & 2.623 & 20.903 \\
Boltzmann generator    & 40.846 & 61.017 & 0.736 & 26.377 & 1.364 & 58.886 \\
\bottomrule
\end{tabular}
\end{small}
\end{table*}

\begin{figure}[tbp]
    \centering
    \includegraphics[width=0.8\linewidth]{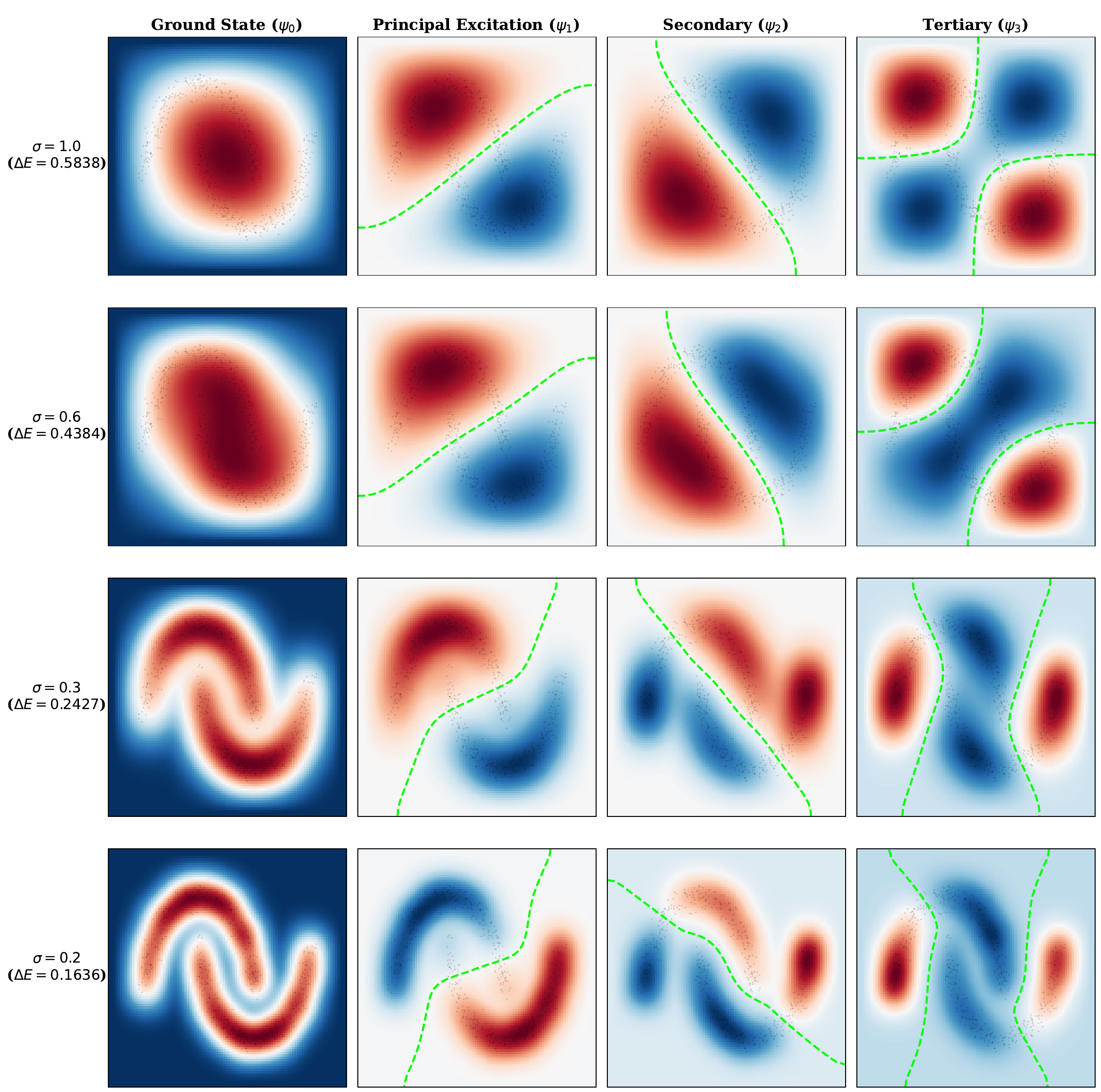}
    \caption{\textbf{Spectral Decomposition of the Two-Moons synthetic density.} Eigenfunctions of the score Hamiltonian $\widehat{H}_{\theta}(t)$ across noise scales (diffusion times) $\sigma \in \{1.0,0.6,0.3,0.2 \}$ computed via the Stationary Schr{\"o}dinger equation $\widehat{H}_{\theta}(t)\psi_{l} = E_{l}\psi_{l}$ for $l \in \{0,1,2,3\}$. The spectral gap $\Delta$ is reported for each noise-scale of the diffusion.
    }
    \label{fig:adiabatic}
\end{figure}

\section{Demonstration of the Theoretical Bounds on Synthetic Densities}\label{sec:demonstration}

We demonstrate the Score Hamiltonian bounds on score networks $\stheta(x,t)$ trained via denoising score matching in order to understand the generative process across the spectrum $\Delta(s)$ for diffusion times $s \in [0,T]$. For this analysis to admit a clear ground-truth, we sample from a simple and easy to analyze synthetic data density $\rho_{\mathrm{data}}(x)$; two simple and low-dimensional synthetic datasets of 2D Gaussian mixtures
\[
\rho_{\mathrm{data}}(x) =  \sum_{k=1}^K \pi_{k}\,\mathcal{N}(x; \mu_k, \sigma_k^2 \mathbf{I})
\]
These enable simple analysis of the dynamics of the eigenspectrum $\{ E_{k}(t) \}$ and thus the underlying spectral gap $\Delta$ across diffusion times while admitting numerically tractable total variation distance (TVD). In addition, computing the grid-based eigenspectrum of the score Hamiltonian via the Stationary Schr{\"o}dinger equation $\widehat{H}_{\theta}\psi_{k} = E_{k} \psi_{k}$ is highly computationally efficient in this setting, and we use sparse Lanczos eigensolver to extract the lowest $k=5$ eigenvalues of the discretized $3136 \times 3136$ Hamiltonian matrix.

\paragraph{Bimodal 2D GMM (Monotone Spectral Gap).} We first demonstrate on symmetric two-mode GMMs where the inter-mode distance $d$ grows, increasing the spatial bottleneck and thus decreasing the spectral gap $\Delta$. Let $x\in\mathbb{R}^2$. For a mode-separation parameter $d>0$, we let $\mu_1=(-d,0),\,\,\mu_2=(d,0),\,\,\Sigma = (0.5)^2 \bm{I}_2.$ and set the training distribution to be the equally weighted two-component Gaussian mixture
\[
\rho_{\text{data}}(x;d)=\tfrac12\,\mathcal N(x;\mu_1,\Sigma)+\tfrac12\,\mathcal N(x;\mu_2,\Sigma).
\]
Sampling is implemented by first drawing $c\sim\mathrm{Uniform}\{1,2\}$ and then sampling $x$ according to $x=\mu_c+\epsilon$ for $\epsilon\sim\mathcal N(0,\Sigma)$. While the first illustrative baseline training run uses $d=2.0$, the spectral sweep uses $n=30$ points spaced across $d\in[0.05,1.0]$.

\paragraph{Hierarchical GMM (Hierarchically Varying Spectral Gap).}

For scheduler evaluation, we define a harder hierarchical dataset where the spectral gap decreases in the middle of the diffusion time, owing to the hierarchical resolution of fine sub-clusters developing within a set of coarse clusters. This experiment thus tests the capacity of diffusion-based annealers to behave as spectral filters for a data density with hierarchical modal structure and evaluates whether the schedule $|\dot{t}|$ is adapted to and thus able to capture the dynamics of mode-formation and mode-splitting.

In particular, we use a hierarchical GMM consisting of 3 well-separated macro-modes, each containing a local ring of 3 micro-modes. This creates non-trivial and varying energy barriers across diffusion times. Define macro-centers
\[
M_1=(-3.2,-2.6),\quad M_2=(3.1,-2.8),\quad M_3=(0,3.3),
\]
with macro weights $(w_1,w_2,w_3)=(0.35,0.35,0.30)$.
Each macro-center has $K=3$ micro-modes on a ring of radius $r=0.95$:
\begin{align*}
&C_{m,k}=M_m+r(\cos\theta_k,\sin\theta_k)+\eta_{m,k},\\
&\theta_k=\frac{2\pi k}{3},\quad
\eta_{m,k}\sim\mathcal N(0,(0.08)^2I_2),
\end{align*}
with \texttt{jitter\_angle=False} and thus no random phase shift. Given micro-centers $\{C_{m,k}\}$, the final distribution is constructed via
\[
\rho_{\text{Hierarchical}}(x)=\sum_{m=1}^3\sum_{k=1}^3 \frac{w_m}{3}\,
\mathcal N\!\bigl(x;C_{m,k},(0.15)^2\, \mathbf{I}_2\bigr).
\]
This creates a two-stage topological bottleneck: the macro-modes segregate at high noise ($t \approx 0.4$), and the micro-modes segregate at low noise ($t \approx 0.02$).

\paragraph{Score Networks and Training.}
\emph{For the training phase} (as distinguished from the inference phase we consider where $\stheta$ and $\hat{H}_{\theta}$ become accessible), we use a VP-SDE \cite{song2021scorebased} with continuous linear schedule $\beta(t)=\beta_0+t(\beta_1-\beta_0)$ and $(\beta_0,\beta_1)=(0.1,20.0)$ and perturbation scale 
\begin{align*}
&\log \bar{\alpha}(t)= -\frac14(\beta_1-\beta_0)t^2-\frac12\beta_0 t,\\
&\sigma(t)=\sqrt{1-\exp\!\big(2\log\bar\alpha(t)\big)}.
\end{align*}
For each batch, with $t\sim\mathrm{Unif}[10^{-4},1]$, $z\sim\mathcal N(0,I)$, and $x_0\sim \rho_{\mathrm{data}}$ the empirical target scores of the forward $(\rho_{t})$ interpolation density are generated from samples $x_{0}$ of our data via
\begin{align*}
&x_t=\sqrt{1-\sigma(t)^2}\,x_0+\sigma(t)z,\\
&S^\star(x_t,t)=-\frac{z}{\sigma(t)}.
\end{align*}
Both networks are trained with weighted denoising score matching \cite{denoising_sm, ho2020denoising}
\[
\mathcal L(\theta)=\mathbb E\!\left[\sigma(t)^2\,\|\stheta(x_t,t)-S^\star(x_t,t)\|_2^2\right].
\]

\paragraph{(1) Baseline time-conditioned MLP (Bimodal GMM).} For this dataset, the score network input is $(x_1,x_2,t)\in\mathbb R^3$.  
Architecture: $3 \,\,({\text{Linear}}) \times 128 \,\, {(\text{SiLU})} \times
128 \,\, {(\text{SiLU})} \times 
128 \,\, {(\text{SiLU})} \times
2.$ Thus $s_\theta:\mathbb R^2\times(0,1]\to\mathbb R^2$ is a direct vector regressor. Training uses Adam ($\mathrm{lr}=10^{-3}$), batch size $256$, typically $900$--$1500$ epochs (single random minibatch per epoch in implementation).

\paragraph{(2) Conservative potential network (Hierarchical GMM scheduler benchmark).}
To enforce conservativity, we parameterize a scalar potential $\Phi_\theta(x,t) = - \log \rho_{\theta}(x,t) + \mathrm{const}$ and define
\[
\stheta(x,t)=-\nabla_x \Phi_\theta(x,t).
\]
Time is embedded with fixed Fourier features using $n_{\text{freq}}=16$ log-spaced frequencies
$\omega_i\in[1,100]$:
\[
\gamma(t)=\big[\sin(\omega_i t),\cos(\omega_i t)\big]_{i=1}^{16}\in\mathbb R^{32}.
\]
The potential network takes $[x,\gamma(t)]\in\mathbb R^{34}$ and uses four hidden blocks:
\[
34 \times 192 \times 192 \times 192 \times 192\times 1,
\]
each hidden layer is composed as a Linear layer, LayerNorm, and a subsequent SiLU. The score is obtained by automatic differentiation with respect to $x$.

Training uses Adam (weight decay $10^{-5}$), cosine-annealed LR
($\eta_{\min}=0.02\,\eta_0$), gradient clipping ($\le 1$), and EMA over all parameters, $\theta_{\text{EMA}}\leftarrow 0.999\,\theta_{\text{EMA}}+0.001\,\theta.$ At evaluation, EMA weights are loaded.

\paragraph{Spectrum Calculation.} At a given diffusion time $t$, evaluating the Score Hamiltonian requires the Bohm potential (we use atomic units $\frac{1}{2}\nabla^{2}$) of $Q_\theta = \frac{1}{4}\nabla \cdot \stheta + \frac{1}{8}\|\stheta\|^2$. We evaluate $\stheta$ on a dense spatial grid ($52\times52$ or $60\times60$) and compute the divergence via finite differences. The Hamiltonian $H_\theta = -\frac{1}{2}\nabla^2 + V_\theta$ is discretized using a Kronecker-sum Laplacian, and the lowest eigenvalues $\{E_k\}_{k=0}^K$ are extracted using a sparse Lanczos eigensolver (\texttt{SciPy eigsh}). To avoid boundary artifacts at high noise, the nominal spectral gap $\Delta$ is taken as the median of $E_1 - E_0$ over a log-spaced small-time $t$ band.

\begin{figure*}[tbp]
    \centering
    \includegraphics[width=\linewidth]{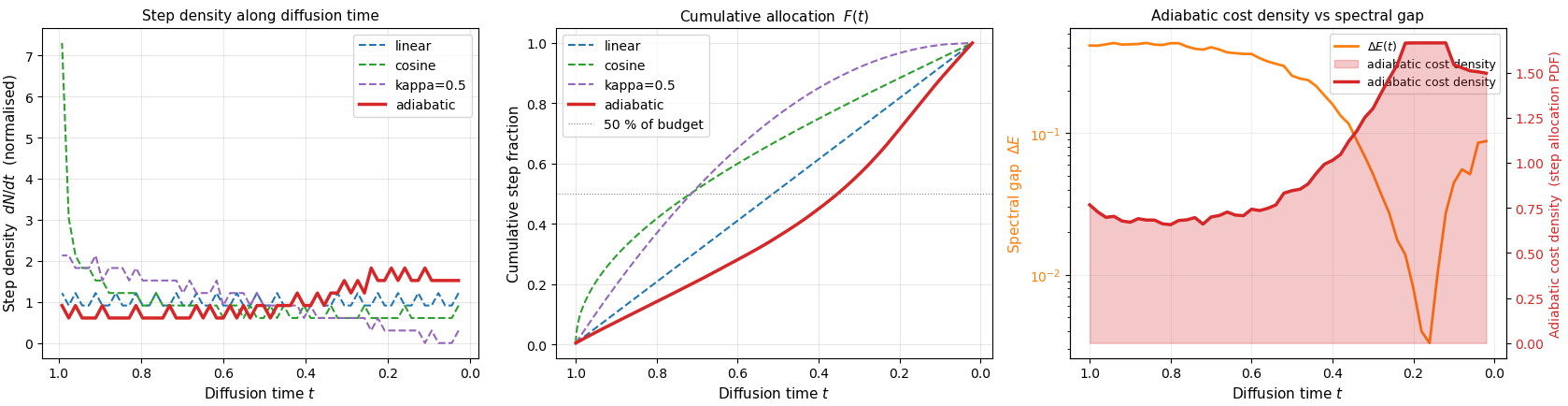}
    \caption{\textbf{Analysis of Adiabatic Schedule of Equation~\eqref{eq:score-based-adiabatic-schedule} for Hierarchical Density.} \textbf{(Left)} Plot of step-density of the schedules across diffusion time $t\in[0,1]$ from $t=0$ representing the harmonic oscillator and $t=1$ the hierarchical well potential. \textbf{(Middle)} Cumulative allocation of steps across diffusion time $t$. \textbf{(Right)} Adiabatic cost density (PDF) and spectral gap as a function of diffusion time $t \in [0,1]$.
    }
    \label{fig:AdiabaticAlloc}
\end{figure*}

\begin{figure*}[tbp]
    \centering
    \includegraphics[width=\linewidth]{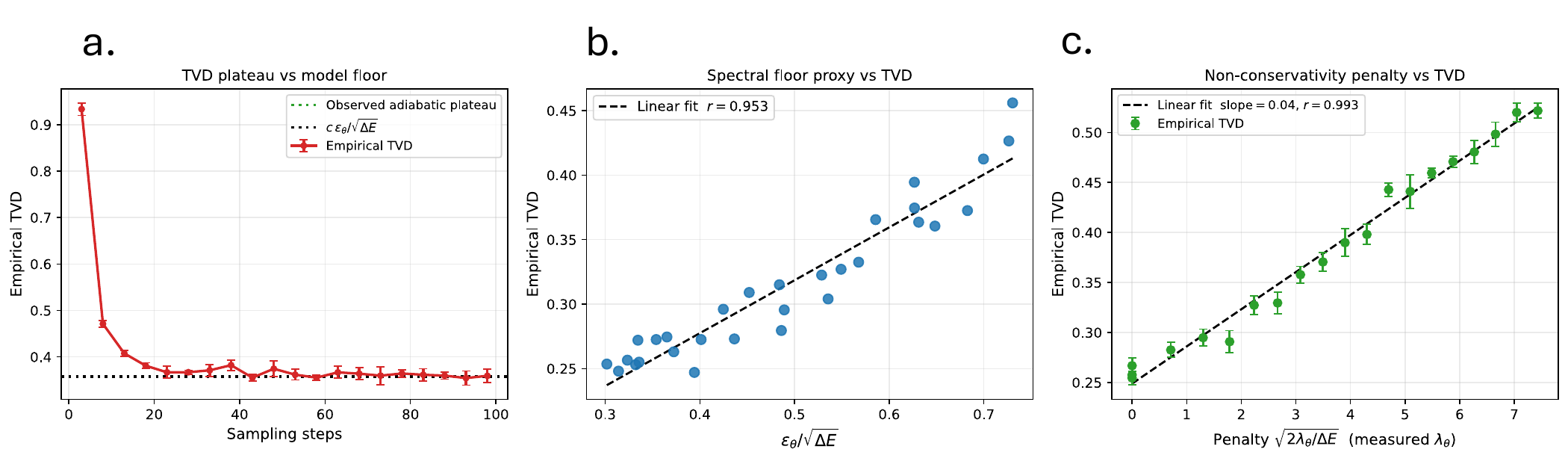}
    \caption{\textbf{Validation of the Adiabatic Bounds on the 2D GMM.} \textbf{(a)} Plot of $\mathrm{TVD}$ over the total sampling steps (larger step budgets correspond to slower schedules, i.e. smaller $|\dot{t}|$) to illustrate the plateau of Thm.~\ref{thrm:ad_diffusion} only varying $|\dot{t}|$. \textbf{(b)} Demonstration of the linear relationship between $\mathrm{TVD}$ and $\epsilon_{\theta,\mathrm{score}}/\sqrt{\Delta}$ predicted by Thm.~\ref{thrm:ad_diffusion}. \textbf{(c)} Illustration of the empirical $\mathrm{TVD}$ scaling linearly with injected non-conservativity as $\sqrt{2\lambda_{\theta}/\Delta}$. The added curl raises the ground-state energy $E_{0}=\lambda_{\theta}$ of the Score-Hamiltonian from zero, as predicted by Proposition~\ref{prop:non_conservative_score}.
    }
    \label{fig:gmm_bounds_validation}
\end{figure*}

\paragraph{Validation of Bounds on the 2D-GMM.} In this experiment, we measure generation quality via an empirical Total Variation Distance (TVD), calculated exactly by binning reverse-SDE or ODE samples and ground-truth validation samples onto a shared 2-D histogram grid. The model floor bound scales as $\epsilon_\theta / \sqrt{\Delta }$. We sweep the mode distance $d \in [0.05, 1.0]$ across 30 identical training runs. As $d$ increases, the minimum inter-mode spectral gap $\Delta $ exponentially shrinks while the network error $\epsilon_\theta$ remains stable. Empirical TVD strongly tracks $\epsilon_\theta /\sqrt{\Delta}$ ($r \approx 0.953$). Partial correlation analysis confirms $\Delta$ is the primary driver; the predicted term $\epsilon_\theta / \sqrt{\Delta}$ tightly bounds the floor error.

The full error decomposes into an intrinsic floor and a dynamical tracking error bounded by $\mathcal{O}(\dot{t}/\Delta)$. Sweeping the number of Euler-Maruyama sampling steps $N_\text{steps} \in [3, 98]$, we observe TVD monotonically decreasing until it reliably saturates at $N \gtrsim 20$. The observed adiabatic plateau aligns with a calibrated floor $c\,\epsilon_\theta/\sqrt{\Delta}$ -- thus, empirically finite-compute tracking errors indeed reach a ground-state, after which the intrinsic score-error to spectral gap ratio serves as a floor approximation limit.

Lastly, we remark that the theory penalizes non-conservative (curl) components by an additive factor of $\sqrt{2\lambda_\theta/\Delta}$ following Proposition~\ref{prop:non_conservative_score}. As MLPs are not inherently conservative, we explicitly test this scaling law in $\Delta$ and the non-conservativity by injecting a divergence-free vector field $v(x) = \kappa(-x_2, x_1)$ during sampling. Sweeping $\kappa$ from $0.0$ to $1.0$, the empirical TVD grows linearly with the penalty $\sqrt{2\lambda_\theta/\Delta}$. We directly compute $\lambda_\theta$ as the ground state energy $E_{0}$ of the Score Hamiltonian $\hat{H}_{\theta}$, which grows with added curl from zero as predicted by Proposition~\ref{prop:non_conservative_score} (see Figure~\ref{fig:gmm_bounds_validation}c).

\paragraph{Validation of Bounds on the Hierarchical-GMM.} To validate the practical utility of the adiabatic schedule, we evaluate reverse-ODE sample allocations on the hierarchical 3$\times$3 GMM using equal step budgets. We contrast standard grid strategies (\texttt{linear}, \texttt{cosine}, \texttt{power/kappa=0.5}) against an \texttt{adiabatic} grid. The adiabatic grid assigns discrete time steps proportional to the theoretically derived continuous cost density ${\|\partial_t S_t\|_{L^{2}(\rho_{t})} }/{ (\Delta)^{3/2}}$. The adiabatic scheduler reallocates steps to the small-$t$ regime, exactly increasing its budget as the spectral gap $\Delta(s)$ closes. Across extremely constrained budgets ($N \in [2, 14]$), the adiabatic schedule achieves significantly lower TVD than the other schedules, demonstrating that an adaptive schedule computed from the Score Hamiltonian gap $\Delta(t)$ yields more efficient and topologically-informed tracking than naive approaches which do not use the data-spectrum.


\end{document}